\documentclass[a4,oneside,english,reqno,12pt]{amsart}
\usepackage{extsizes}
\usepackage{geometry}
\geometry{
left       = 2.2cm,
right     = 2.2cm,
top       = 2.2cm,
bottom = 2.2cm
}
\usepackage{amssymb,amsmath,amsthm,amsfonts,mathtools,esint,dsfont,bbm,yhmath,breqn}
\usepackage[nospace]{cite}
\usepackage{babel}
\usepackage{enumitem}
\usepackage{appendix}
\DeclareMathOperator{\tr}{Tr}
\DeclareMathOperator{\curl}{curl}
\usepackage[pdfencoding=auto,psdextra]{hyperref}
\pdfstringdefDisableCommands{\def\varepsilon{\textepsilon}}
\usepackage{bookmark}
\usepackage{cleveref}
\usepackage{crossreftools}
\pdfstringdefDisableCommands{%
    \let\Cref\crtCref
    \let\cref\crtcref
}
\crefname{subsection}{subsection}{subsections}
\usepackage{color}
\makeatletter
\numberwithin{equation}{section}
\numberwithin{figure}{section}
\theoremstyle{plain}
\newtheorem{proposition}{Proposition}
\newtheorem{lemma}{Lemma}

\newtheorem{theorem}{\protect\theoremname}
\theoremstyle{remark}
\newtheorem{remark}{Remark}
\newlist{casenv}{enumerate}{4}
\setlist[casenv]{leftmargin=*,align=left,widest={iiii}}
\setlist[casenv,1]{label={{\itshape\ \casename} \arabic*.},ref=\arabic*}
\setlist[casenv,2]{label={{\itshape\ \casename} \mathbb Roman*.},ref=\mathbb Roman*}
\setlist[casenv,3]{label={{\itshape\ \casename\ \alph*.}},ref=\alph*}
\setlist[casenv,4]{label={{\itshape\ \casename} \arabic*.},ref=\arabic*}
\makeatother


\providecommand{\casename}{Case}
\providecommand{\theoremname}{Theorem}

\allowdisplaybreaks
\begin{document}

\title[2D attractive almost-bosonic anyon gases]{2D attractive almost-bosonic anyon gases
}

\author[D.-T. Nguyen]{Dinh-Thi Nguyen}
\address{Department of Mathematics, Uppsala University, Box 480, SE-751 06 Uppsala, Sweden} 
\email{\url{dinh\_thi.nguyen@math.uu.se}}

\begin{abstract}
In two-dimensional space, we consider a system of $N$ anyons interacts via a short range attractive two-body interaction. In the stable regime, we derive the average-field-Pauli functional as the mean-field limit of many-body quantum mechanics. Furthermore, we investigate the collapse phenomenon in the collapse regime where the strength of attractions tends to a critical value (defined by the cubic NLS equation) while simultaneously considering the weak field regime where the strength of the self-generated magnetic field tends to zero.
\end{abstract} 

\subjclass[2020]{81V70, 81Q10, 35P15, 46N50} 

\date{\today.}

\keywords{anyon gases, average-field-Pauli, average-field-Hartree, attractive interactions, Chern--Simon--Schr\"odinger, concentration compactness, ground states, magnetic field, many-body, minimizers}

\maketitle

{
  \hypersetup{hidelinks}
  \tableofcontents
}

\section{Introduction and main results}\label{sec:introduction}

Anyons are exotic particles that can exist in two-dimensional systems and exhibit fractional statistics, meaning that their wave functions acquire a nontrivial phase when particles are exchanged. Unlike fermions (with half-integer spin) or bosons (with integer spin), anyons can have fractional spin statistics. The study of anyons, exotic quasi-particles with fractional statistics, has captivated physicists for decades due to their unique properties and potential applications in condensed matter physics and quantum information science. Anyons exhibit intermediate statistics between fermions and bosons, leading to novel phenomena such as fractional quantum Hall effect and topological quantum computation. In recent years, there has been a growing interest in understanding the behavior of anyons in the presence of external potentials.

In this article, we explore the fascinating world of anyons interacting via attractions. Moreover, we consider the weak-field limit corresponding to the regime where the strength of the external magnetic field is small compared to other energy scales in the system. In this limit, the effects of the external magnetic field can be treated perturbatively, allowing for analytical or numerical analysis. We investigate how the interplay between the statistics of anyons and the presence of point-like attractions gives rise to rich and intriguing physical phenomena, offering new avenues for theoretical exploration and experimental realization.

\subsection{The average-field-Pauli theory of interacting ``almost-bosonic'' anyon gases}

Mathematically, in the mean-field regime, ``almost-bosonic'' interacting anyon gas is described by the \emph{``average-field-Pauli''} functional (also known as Chern--Simon--Schr\"odinger in the literature)

\begin{equation}\label{functional:afP}
\mathcal{E}^{\rm afP}_{\beta,g}[u] := \int_{\mathbb R^{2}} \left[
\left|(-{\rm i}\nabla + \beta\mathbf{A}[|u|^{2}])u\right|^{2}
+ V|u|^{2} - \frac{g}{2} |u|^{4} \right].
\end{equation}
Here $u \in H^{1}(\mathbb R^{2})$ is with unit probability density $\int_{\mathbb R^{2}} |u|^{2} = 1$. The magnetic vector potential $\mathbf{A}[\varrho] : \mathbb R^{2} \to \mathbb R^{2}$ generates a magnetic field
\begin{equation} \label{potential:magnetic}
\mathbf{A}[\varrho] := \nabla^{\perp} w_{0} * \varrho
 \quad \text{with} \quad w_{0}(x) := \log |x|,
\end{equation}
so that
\begin{equation} \label{eq:self-field}
\curl \beta\mathbf{A}[|u|^{2}] = \beta \Delta w_{0} * |u|^{2} = 2\pi\beta|u|^{2}.
\end{equation}
Thus $\beta \in \mathbb{R}$ is the strength of the magnetic field 
(total/fractional number of flux units). By conjugation symmetry $u \to \overline{u}$ it is enough to consider $\beta \geq 0$. The parameter $g \in \mathbb R$ is the strength of scalar point interactions. In this article, we are interested in the case of an attraction, i.e., $g>0$. Finally, the system is trapped by an external scalar trapping potential, $V \in L_{\rm loc}^{\infty}(\mathbb R^{2})$ and $V(x) \to +\infty$ as $|x| \to +\infty$. Our main interest in this paper is the behavior of minimizers of the ground-state energy associated with the functional \eqref{functional:afP}, given by
\begin{equation}\label{energy:afP}
E^{\rm afP}_{\beta,g} := 
\inf \left\{ \mathcal{E}^{\rm afP}_{\beta,g}[u] : 
u \in H^{1} \cap L_{V}^{2} (\mathbb R^{2}),
\int_{\mathbb R^{2}} |u|^{2} = 1 \right\}
\end{equation}
where we have defined the space 
\begin{equation}\label{space:external}
L_{V}^{2} := \left\{u\in L^{2}(\mathbb R^{2}): \int_{\mathbb R^{2}} V|u|^{2} < \infty\right\}.
\end{equation}

In the absence of the magnetic field, i.e., $\beta = 0$, \eqref{functional:afP} reduces to the so-called Gross--Pitaeavskii (also known as nonlinear Schr\"odinger) functional which models the semi-classical theory of Bose--Einstein condensations (BECs). Condensates with repulsive interactions (which corresponds to the case $g<0$) have become an important research subject after their first realization in the laboratory \cite{AndEnsMatWieCor-95,CorWie-02,DavMewAndDruDurKurKet-95,Ketterle-02} and it was studied in great mathematical details \cite{LieSeiYng-00,LieSei-02,LieSeiSolYng-05,LewNamRou-14,Lewin-ICMP,LewNamRou-16,NamRouSei-16,Rougerie-EMS,Rougerie-20}. On the other hand, BECs with attractive interactions (which corresponds to the case $g>0$) were also realized experimentally \cite{BraSacTolHul-95,BraSacHul-97,KagMurShl-98,SacStoHul-98} and it displays behaviors differently from the repulsive BECs since the system might collapse when the two-body interaction is too negative. In the pass few years, many mathematical works deal with the collapse phenomenon of the focusing system \cite{Zhang-00,GuoSei-14,LewNamRou-16,LewNamRou-17,NamRou-20,Rougerie-EMS,Nguyen-20,GuoLuoYan-20,DinNguRou-23}.

In the presence of the self-generated magnetic field \eqref{potential:magnetic}, ground states of \eqref{functional:afP} display vortex lattices \cite{CorDubLunRou-19}. This is similar to the triangular Abrikosov lattice of superfluidity in rotating BECs where \eqref{potential:magnetic} was replaced by the constant magnetic field \cite{Gross-61,AftBlaDal-05,Aftalion-06,Cooper-08,Fetter-09,CorPinRouYng-11,NguRou-22}. See also \cite{LewNamRou-16,LewNamRou-17,NamRou-20,LewNamRou-17-proc,GuoLuoYan-20,DinNguRou-23} for related works of rotating Bose gases in the focusing regime. The functional \eqref{functional:afP} is the effective model in the study of anyon gases which interpolates between Bose gases and Fermi gases. For non-interacting anyon gases (which corresponds to the case $g=0$ in \eqref{functional:afP}), the validity of the so-called ``average-field approximation'' $\mathcal{E}^{{\rm afP}}_{\beta,g=0}$ at fixed $\beta$ were studied rigorously in \cite{LunRou-15} (see also \cite{Girardot-20} and \cite{Lundholm-17,Lundholm-24} for general reviews). The non-interacting system displays different phenomenon by varying $\beta$. While it behaves as non-interacting Boses gases in the weak field limit $|\beta| \to 0$ \cite{LunRou-15}, its behavior is given by the Thomas--Fermi-like model in the strong field limit $|\beta| \to \infty$ \cite{LiBhaMur-92,CorLunRou-17} where ground states generate infinitely many vortices (see also \cite{LunSei-18,GirRou-21} for related works). On the other hand, anyon gases with self-interactions lead to collective behavior and emergent phenomena in the system. The derivation of the ``average-field-Pauli'' approximation \eqref{functional:afP} can be obtained by the method of proof in \cite{LunRou-15} and the quantum de Finetti theory \cite{Rougerie-15} as used in \cite{LewNamRou-16,LewNamRou-17,NamRou-20,Rougerie-20,Rougerie-EMS} for interacting Bose gases. In the literature, Sen and Chitra \cite{Sen-91,SenChi-92a,SenChi-92b} studied anyon gases with repulsive interactions. Furthermore, anyon gases with attractive interactions appeared in the study of soliton solutions to the gauged NLS equation in the Chern--Simon theory \cite{JacPi-90a,JacPi-90b} (see \cite{Bogomolny-76} for the original work and \cite{Tarantello-08,Dunne-95} for excellent textbooks). This is an analog to the study of the famous nonlinear Schr\"odinger (NLS) equation in the classical field theory \cite{Cazenave-89,CazEst-88}.

In the case of attractions, i.e., $g>0$ in \eqref{functional:afP}, the system is unstable since there is a balance between the kinetic energy and interacting potential. In other words, the kinetic and quartic terms in \eqref{functional:afP} behave the same after scaling. Hence, there exists a critical interaction strength $g_{*}(\beta)$ above which the system is unbounded from below. Such a critical value is defined naturally by
\begin{align}\label{eq:critical-value}
g_{*}(\beta) := \inf\left\{\frac{\int_{\mathbb R^{2}} \left|(-{\rm i}\nabla + \beta\mathbf{A}[|u|^{2}]) u\right|^{2}}{\frac{1}{2}\int_{\mathbb R^{2}} |u|^4} : u \in H^{1}(\mathbb R^{2};\mathbb C), \int_{\mathbb R^{2}}|u|^{2} = 1\right\}.
\end{align}
Equivalently, $g_{*}(\beta)$ is the optimal constant of the ``magnetic'' Gagliardo--Nirenberg inequality
\begin{equation}\label{ineq:gn-magnetic}
\frac{g_{*}(\beta)}{2}\int_{\mathbb R^{2}}|u|^{4} \leq \int_{\mathbb R^{2}}|u|^{2} \int_{\mathbb R^{2}}\left|\left(-{\rm i}\nabla + \beta \left(\int_{\mathbb R^{2}}|u|^{2}\right)^{-1}\mathbf{A}[|u|^{2}]\right) u\right|^{2}.
\end{equation}
In the case $\beta=0$, \eqref{ineq:gn-magnetic} reduces to the famous Gagliardo--Nirenberg inequality with the optimal constant $g_{*}(0) = \|Q\|_{L^{2}}^{2} \approx 2\pi \times 1.86\ldots$ (see \cite{Weinstein-83}) where $Q$ is the (unique) positive, radially symmetric solution of the 2D cubic NLS equation
\begin{equation}\label{eq:nls}
-\Delta Q - Q^3 + Q = 0.
\end{equation}
On the other hand, we have
\begin{equation}\label{est:critical-value}
g_{*}(\beta) \geq \max\{g_{*}(0),4\pi|\beta|\},\quad \forall \beta \ne 0
\end{equation}
which follows from the diamagnetic inequality \cite[Theorem 7.21]{LieLos-01} and the Bogomol'nyi inequality  \cite{Bogomolny-76} (see also \cite[Lemma 1.4.1]{Fournais-Helffer} and \cite{CorLunRou-17})
\begin{equation}\label{ineq:curl}
\int_{\mathbb R^{2}} \left|(-{\rm i}\nabla + \beta\mathbf{A}[|u|^{2}]) u\right|^{2} \geq \pm \int_{\mathbb R^{2}} \curl{\beta\mathbf{A}}[|u|^{2}] = 2\pi|\beta|\int_{\mathbb R^{2}}|u|^4 ,\quad \forall u \in H^{1}(\mathbb R^{2};\mathbb C).
\end{equation}
Thus, for large $\beta$, the critical value $g_{*}(\beta)$ is at least $4\pi\beta$. In fact, it was proved in \cite{AtaLunNgu-24} that $g_{*}(\beta) = 4\pi\beta$ for every $\beta \geq 2$ and that the infimum in \eqref{eq:critical-value} is actually attained for $\beta \in 2\mathbb{N}^{*}$. On the other hand, for small $\beta$, which is of our main interest in this article, we have that $g_{*}(\beta) > g_{*}(0)$ and the infimum in \eqref{eq:critical-value} is attained for sufficient small $\beta>0$ (see \cite{AtaLunNgu-24}). Therefore, the existence of minimizers of \eqref{functional:afP} are expected for $g > g_{*}(0)$.

The study of \eqref{energy:afP} offers a unique window into the behavior of quantum systems in both the collapse regime and the weak field regime, where the parameters $g$ and $\beta$ play crucial roles in shaping the system's properties. We here delve into the fascinating interplay between these regimes, seeking to unravel the intricate dynamics of $E^{\rm afP}_{\beta,g}$ and its minimizers as $g$ approaches its critical value $ g_{*}(0)$ while simultaneously taking $\beta$ to zero. As $g$ approaches $g_{*}(0)$, the system undergoes a transition from a stable state to a collapsed state, characterized by the onset of strong correlations and nontrivial phase transitions. Understanding the behavior of $E^{\rm afP}_{\beta,g}$ near this critical point is essential for elucidating the nature of these phase transitions and their implications for physical systems. Simultaneously, in the weak field regime where $\beta \to 0$, the external field exerts a diminishing influence on the system, allowing us to probe the intrinsic properties of the quantum particles and their interactions. Here, the competition between Pauli exclusion and external potentials becomes more pronounced, leading to intriguing phenomena such as the formation of bound states, localization effects, and emergence of topological features. Our main interest lies in bridging these two regimes and revealing the intricate relationship between collapse dynamics and weak field behavior in $E^{\rm afP}_{\beta,g}$. By analyzing the properties of the corresponding minimizers, we aim to gain deeper insights into the underlying mechanisms governing the system's behavior and its response to external perturbations.

Our first result concerns the existence and properties of minimizers of \eqref{energy:afP}.

\begin{theorem}[Existence of average-field-Pauli minimizers]\label{thm:existence}
Let $V \in L_{\rm loc}^{\infty}(\mathbb R^{2})$ and $V(x) \to +\infty$ as $|x| \to +\infty$. If $\beta > 0$ and $g < g_{*}(\beta)$ then there exists $u \in H^{1}(\mathbb R^{2})$ with $\int_{\mathbb R^{2}}|u|^{2}=1$ such that $\mathcal{E}^{\rm afP}_{\beta,g}[u]=E^{\rm afP}_{\beta,g}$.
\end{theorem}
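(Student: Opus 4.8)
The plan is to use the direct method of the calculus of variations. First I would establish coercivity: since $g < g_*(\beta)$, the magnetic Gagliardo--Nirenberg inequality \eqref{ineq:gn-magnetic} gives, for any $u$ with $\int |u|^2 = 1$,
\[
\int_{\mathbb R^{2}} \left|(-{\rm i}\nabla + \beta\mathbf{A}[|u|^{2}])u\right|^{2} - \frac{g}{2}\int_{\mathbb R^{2}}|u|^{4}
\;\geq\; \left(1 - \frac{g}{g_*(\beta)}\right)\int_{\mathbb R^{2}} \left|(-{\rm i}\nabla + \beta\mathbf{A}[|u|^{2}])u\right|^{2} \;\geq\; 0,
\]
so $\mathcal{E}^{\rm afP}_{\beta,g}[u] \geq \int V|u|^2 \geq 0$ (one may harmlessly assume $V \geq 0$ after shifting), hence $E^{\rm afP}_{\beta,g}$ is finite and nonnegative, and along a minimizing sequence $(u_n)$ the quantities $\int V|u_n|^2$ and $\int |(-{\rm i}\nabla + \beta\mathbf A[|u_n|^2])u_n|^2$ are bounded. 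From the latter, combined with the diamagnetic inequality $\big|\,|\nabla|u_n|\,\big| \leq |(-{\rm i}\nabla + \beta\mathbf A[|u_n|^2])u_n|$ (after also controlling $\beta\mathbf A[|u_n|^2]u_n$ in $L^2$, which follows since $\mathbf A[|u_n|^2] \in L^\infty \cap L^2_{\rm loc}$-type bounds hold given $\|u_n\|_{L^2}=1$ and $\|\nabla |u_n|\|_{L^2}$ bounded via the Hardy--Littlewood--Sobolev and Gagliardo--Nirenberg estimates), one gets $(u_n)$ bounded in $H^1(\mathbb R^2) \cap L^2_V(\mathbb R^2)$.

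Next I would extract a weak limit: up to a subsequence $u_n \rightharpoonup u$ in $H^1$ and in $L^2_V$. The confining potential $V(x) \to +\infty$ provides compactness: the embedding $H^1 \cap L^2_V \hookrightarrow L^p(\mathbb R^2)$ is compact for $2 \leq p < \infty$ (standard: tightness from $\int V|u_n|^2$ bounded plus local Rellich), so $u_n \to u$ strongly in $L^2$, whence $\int |u|^2 = 1$, and $u_n \to u$ strongly in $L^4$, so $\int |u_n|^4 \to \int |u|^4$. For the magnetic kinetic term one shows lower semicontinuity: since $u_n \to u$ in $L^2_{\rm loc}$ and $|u_n|^2 \to |u|^2$ in $L^1 \cap L^2$, the potentials $\mathbf A[|u_n|^2] \to \mathbf A[|u|^2]$ in a suitable sense (e.g. in $L^2_{\rm loc}$, using Hardy--Littlewood--Sobolev / the explicit kernel $\nabla^\perp \log|x|$), so that $\beta \mathbf A[|u_n|^2] u_n \rightharpoonup \beta \mathbf A[|u|^2] u$ weakly and $(-{\rm i}\nabla + \beta\mathbf A[|u_n|^2])u_n \rightharpoonup (-{\rm i}\nabla + \beta\mathbf A[|u|^2])u$ weakly in $L^2$; weak lower semicontinuity of the $L^2$ norm then gives $\int |(-{\rm i}\nabla+\beta\mathbf A[|u|^2])u|^2 \leq \liminf \int |(-{\rm i}\nabla+\beta\mathbf A[|u_n|^2])u_n|^2$. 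Similarly $\int V|u|^2 \leq \liminf \int V|u_n|^2$ by weak lower semicontinuity (Fatou). Combining, $\mathcal{E}^{\rm afP}_{\beta,g}[u] \leq \liminf \mathcal{E}^{\rm afP}_{\beta,g}[u_n] = E^{\rm afP}_{\beta,g}$, and since $u$ is admissible, $u$ is a minimizer.

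The main obstacle is the handling of the nonlinear, nonlocal magnetic term under weak convergence: one must verify that $\mathbf A[|u_n|^2]$ passes to the limit strongly enough (in $L^2_{\rm loc}$ or with a uniform tail bound) that the cross term $2\beta\,{\rm Re}\int \mathbf A[|u_n|^2]\cdot \overline{u_n}(-{\rm i}\nabla)u_n$ and the term $\beta^2 \int |\mathbf A[|u_n|^2]|^2|u_n|^2$ behave lower-semicontinuously — the subtlety being that $\mathbf A[\varrho] = \nabla^\perp\log|x| * \varrho$ has a kernel that is neither in $L^1$ nor decaying, so its control requires splitting into a local singular part (handled by Hardy--Littlewood--Sobolev, using $\varrho \in L^1\cap L^2$) and a long-range part (handled using that $\int\varrho = 1$ is fixed and the difference $\log|x-y|-\log|x|$ has mild growth). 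Once $\mathbf A[|u_n|^2] \to \mathbf A[|u|^2]$ in $L^2_{\rm loc}$ and is uniformly bounded in an appropriate function space, the weak convergence of the full magnetic gradient follows and the rest is routine. This argument is entirely parallel to the non-interacting case treated in \cite{LunRou-15,CorLunRou-17}, the only new ingredient being the $L^4$ term, whose convergence is in fact a free byproduct of the compactness from $V$.
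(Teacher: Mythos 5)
Your proposal is correct and follows essentially the same route as the paper: coercivity from the definition of $g_{*}(\beta)$ (the magnetic Gagliardo--Nirenberg inequality \eqref{ineq:gn-magnetic}) giving the lower bound with factor $1-\tfrac{g}{g_{*}(\beta)}$, uniform $H^{1}\cap L^{2}_{V}$ bounds via \eqref{ineq:hardy} and the diamagnetic inequality, compactness from the trapping potential, and Fatou-type weak lower semicontinuity of the magnetic kinetic term as in \cite{LunRou-15}. The only difference is that the paper supplements this with a self-contained blow-up/contradiction argument handling the borderline value $g=g_{*}(0)$ without invoking the strict inequality $g_{*}(\beta)>g_{*}(0)$, which your argument does not need under the stated hypothesis $g<g_{*}(\beta)$.
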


\begin{remark}
    It is worth noting that minimizers for \eqref{energy:afP} exist not only for $g < g_{*}(0)$, as seen in the Gross--Pitaevskii theory of (rotating) Bose gases, but also for $g_{*}(\beta) > g \geq g_{*}(0)$ since $g_{*}(\beta) > g_{*}(0)$. This is the noteworthy of \Cref{thm:existence}. In fact, for large $\beta \geq 2$, the existence result was obtained for all $0<g<g_{*}(\beta) = 4\pi\beta$.
\end{remark}

Our next result concern the collapse phenomenon in the average-field-Pauli theory. In order to simplify and formula our next result, we here consider the trapping potential of the form
$$
V(x) = |x|^{s}
$$
for $s>0$. Furthermore, let's introduce the following notations
$$
Q_{0} := \|Q\|_{L^{2}}^{-1}Q,\quad \mathcal{Q}_{s} := \frac{s}{2}\int_{\mathbb R^{2}}|x|^{s}|Q_{0}|^{2} \quad \text{and} \quad \mathcal{A}_{0} = \|\mathbf{A}[|Q_{0}|^{2}] Q_{0}\|_{L^{2}}^{2}
$$
where $Q$ is the \emph{unique} (up to translation and dilation) real-valued solution of \eqref{eq:nls}.

\begin{theorem}[Condensation and collapse of average-field-Pauli minimizers]\label{thm:collapse-afP}
Let $g = g_{\beta}$ with $\beta \searrow 0$ and $\{u_{\beta}\}$ be a sequence of minimizers of $E^{\rm afP}_{\beta,g_{\beta}}$. Then there exists a sequence $\{\theta_{\beta}\}_{\beta} \subset [0, 2\pi)$ such that, for the whole sequence,
\begin{equation}\label{blowup:minimizer-afP}
\lim_{\beta \to 0}\ell_{\beta}u_{\beta}(\ell_{\beta}\cdot + x_{n}) e^{{\rm i}\theta_{n}} = Q_{0}
\end{equation}
strongly in $H^{1}(\mathbb R^{2})$. Moreover,
\begin{equation}\label{blowup:energy-afP}
E_{\beta,g_{\beta}}^{\rm afP} = \frac{s+2}{s}\ell_{\beta}^{s} \mathcal{Q}_{s}\left(1+o(1)_{\beta \to 0}\right).
\end{equation}
Here the blow-up length $\ell_{\beta}$ in \eqref{blowup:minimizer-afP} and \eqref{blowup:energy-afP} is determined in the following three cases:

\begin{enumerate}[label=(\roman*)]
\item\label{thm:collapse-subcritical} {\bf Sub-critical regime.} If $g_{\beta} \nearrow g_{*}(0)$ such that $\beta^{2} \ll g_{*}(0) - g_{\beta}$, then
\begin{equation}\label{length:subcritical}
\ell_{\beta} := \left(\frac{g_{*}(0) - g_{\beta}}{\mathcal{Q}_{s}g_{*}(0)}\right)^{\frac{1}{s+2}}.
\end{equation}

\item\label{thm:collapse-critical} {\bf Critical regime.} If $g_{\beta} = g_{*}(0)$, then
\begin{equation}\label{length:critical}
\ell_{\beta} := \left(\frac{\beta^{2}\mathcal{A}_{0}}{\mathcal{Q}_{s}}\right)^{\frac{1}{s+2}}
\end{equation}

\item\label{thm:collapse-supercritical} {\bf Super-criticial regime.} If $g_{\beta} = g_{*}(0) + \tau_{0}\beta^{2}$ where $0 < \tau_{0} < \tau_{*}$ and $\tau_{*}$ is a universal constant independent of $\beta$, then
\begin{equation}\label{length:supercritical}
\ell_{\beta} := \left(\frac{\beta^{2}}{\mathcal{Q}_{s}}\left(-\frac{\tau_{0}}{g_{*}(0)} + \mathcal{A}_{0}\right)\right)^{\frac{1}{s+2}} = \left(\frac{g_{\beta} - g_{*}(0)}{\mathcal{Q}_{s}}\left(-\frac{1}{g_{*}(0)} + \frac{\mathcal{A}_{0}}{\tau_{0}}\right)\right)^{\frac{1}{s+2}}.
\end{equation}
\end{enumerate}
 \end{theorem}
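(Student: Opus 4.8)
The plan is to run the Guo--Seiringer blow-up scheme for collapsing NLS-type functionals; the new feature here is an asymptotic expansion of the sharp constant in \eqref{ineq:gn-magnetic} which makes the three regimes fall onto a single rescaling.

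\emph{Energy upper bound and the scale $\ell_\beta$.} First I would test \eqref{energy:afP} with the real trial state $u=\ell^{-1}Q_0(\cdot/\ell)$. Using $\|\nabla Q_0\|_{L^2}^2=1$ and $\|Q_0\|_{L^4}^4=2/g_*(0)$ (Pohozaev for \eqref{eq:nls}), the scaling $\mathbf A[|\ell^{-1}Q_0(\cdot/\ell)|^2](x)=\ell^{-1}\mathbf A[|Q_0|^2](x/\ell)$ — so the magnetic term equals $\beta^2\ell^{-2}\mathcal A_0$, the cross term vanishing because $Q_0$ is real — and $\int|x|^s|\ell^{-1}Q_0(\cdot/\ell)|^2=\tfrac2s\mathcal Q_s\,\ell^s$, one gets $\mathcal E^{\rm afP}_{\beta,g_\beta}[\ell^{-1}Q_0(\cdot/\ell)]=a_\beta\ell^{-2}+\tfrac2s\mathcal Q_s\ell^s$ with $a_\beta:=\tfrac{g_*(0)-g_\beta}{g_*(0)}+\beta^2\mathcal A_0$. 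Optimizing in $\ell$ gives $E^{\rm afP}_{\beta,g_\beta}\le\tfrac{s+2}{s}\mathcal Q_s\widetilde\ell_\beta^{\,s}$ with $\widetilde\ell_\beta^{\,s+2}=a_\beta/\mathcal Q_s$, and comparing with \eqref{length:subcritical}, \eqref{length:critical}, \eqref{length:supercritical} one checks $\widetilde\ell_\beta=\ell_\beta(1+o(1))$ in each regime (exact equality in the critical and super-critical cases, with $\tau_0<\tau_*:=g_*(0)\mathcal A_0$ ensuring $a_\beta>0$). In particular $E^{\rm afP}_{\beta,g_\beta}\to0$ and $\ell_\beta\to0$.

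\emph{A priori bounds.} The key analytic input I would use is the expansion
\[
g_*(\beta)=g_*(0)+\tau_*\beta^2+o(\beta^2),\qquad \tau_*=g_*(0)\mathcal A_0,\qquad\beta\to0,
\]
whose ``$\le$'' half comes from testing \eqref{eq:critical-value} with the family $\mu Q_0(\mu\cdot)$ (for which the quotient equals $g_*(0)(1+\mathcal A_0\beta^2)$, independently of $\mu$) and whose ``$\ge$'' half rests on the classification of optimizers at $\beta=0$ together with the nondegeneracy of the linearization of \eqref{eq:nls} at $Q$. Writing $v_\beta(z):=\ell_\beta u_\beta(\ell_\beta z)$ (so $\|v_\beta\|_{L^2}=1$) one has $\mathcal E^{\rm afP}_{\beta,g_\beta}[u_\beta]=\ell_\beta^{-2}\big[\|(-{\rm i}\nabla+\beta\mathbf A[|v_\beta|^2])v_\beta\|_{L^2}^2-\tfrac{g_\beta}{2}\|v_\beta\|_{L^4}^4\big]+\ell_\beta^s\!\int|z|^s|v_\beta|^2$. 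Bounding the bracket below by $\tfrac{g_*(\beta)-g_\beta}{2}\|v_\beta\|_{L^4}^4$ (magnetic GN), using the upper bound above, and the identity $\tfrac{g_*(\beta)-g_\beta}{2}\ell_\beta^{-s-2}=\tfrac{g_*(0)\mathcal Q_s}{2}(1+o(1))$ — valid in all three regimes \emph{precisely because} $\tau_*=g_*(0)\mathcal A_0$ — I would deduce $\|v_\beta\|_{L^4}^4\le C$, then (diamagnetism plus $\|(-{\rm i}\nabla+\beta\mathbf A)v_\beta\|^2\le\tfrac{g_\beta}{2}\|v_\beta\|_{L^4}^4+o(1)$) $\|\nabla|v_\beta|\|_{L^2}^2\le C$, and finally $\|\nabla v_\beta\|_{L^2}^2\le C$ (the current part of $\|\nabla v_\beta\|^2$ is controlled by the magnetic energy plus $\beta^2\!\int|v_\beta|^2|\mathbf A[|v_\beta|^2]|^2=o(1)$, by continuity of $v\mapsto\int|v|^2|\mathbf A[|v|^2]|^2$ on $H^1$). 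Dropping the nonnegative magnetic excess also gives $\int|z|^s|v_\beta|^2\le\tfrac{s+2}{s}\mathcal Q_s(1+o(1))$, which yields tightness.

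\emph{Compactness, scale, and profile.} Then, along a subsequence, $v_\beta\rightharpoonup v$ in $H^1$, $v_\beta\to v$ in $L^2\cap L^4$, $\|v\|_{L^2}=1$; inserting $g_*(\beta)\to g_*(0)$ into the magnetic-excess estimate gives $\|\nabla|v_\beta|\|_{L^2}^2-\tfrac{g_*(0)}{2}\|v_\beta\|_{L^4}^4\to0$, so $|v|$ saturates \eqref{ineq:gn-magnetic} at $\beta=0$, hence $|v|=\mu Q_0(\mu(\cdot-y))$ for some $\mu>0,\,y\in\mathbb R^2$, and $|v_\beta|\to|v|$ in $H^1$. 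Feeding the rescaled energy identity into the lower bound (bracket $\ge\tfrac{g_*(\beta)-g_\beta}{2}\|v_\beta\|_{L^4}^4$; Fatou on the trap term) I would get $\liminf\ell_\beta^{-s}E^{\rm afP}_{\beta,g_\beta}\ge\mathcal Q_s\mu^2+\int|z|^s|v|^2\ge\mathcal Q_s\big(\mu^2+\tfrac2s\mu^{-s}\big)$, using $\tfrac{g_*(0)\mathcal Q_s}{2}\|v\|_{L^4}^4=\mathcal Q_s\mu^2$ and $\int|z|^s|v|^2\ge\mu^{-s}\tfrac2s\mathcal Q_s$ (minimized at $y=0$ as $|Q_0|^2$ is radially decreasing). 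Since $h(\mu):=\mu^2+\tfrac2s\mu^{-s}$ is strictly convex with $\min h=h(1)=\tfrac{s+2}{s}$, comparison with the upper bound forces $\mu=1$, $y=0$ (so $|v|=Q_0$) and $\ell_\beta^{-s}E^{\rm afP}_{\beta,g_\beta}\to\tfrac{s+2}{s}\mathcal Q_s$, i.e.\ \eqref{blowup:energy-afP}; uniqueness of the limit gives this for the whole sequence. For the phase, from the above $\|(-{\rm i}\nabla+\beta\mathbf A[|v_\beta|^2])v_\beta\|_{L^2}^2-\tfrac{g_\beta}{2}\|v_\beta\|_{L^4}^4\to0$ and $\|\nabla|v_\beta|\|_{L^2}^2-\tfrac{g_\beta}{2}\|v_\beta\|_{L^4}^4\to0$, so subtracting (and using $\beta^2\!\int|v_\beta|^2|\mathbf A[|v_\beta|^2]|^2=o(1)$) gives $\int|v_\beta|^2|\nabla(\arg v_\beta)|^2\to0$, hence $\|\nabla v_\beta\|_{L^2}^2\to\|\nabla Q_0\|_{L^2}^2$; the weak limit thus satisfies $|v|=Q_0$ and $\int Q_0^2|\nabla(\arg v)|^2=0$, so $v=Q_0e^{-{\rm i}\theta}$ for a constant $\theta$, and norm convergence upgrades to strong $H^1$ convergence. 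Choosing $\theta_\beta$ to cancel this phase yields \eqref{blowup:minimizer-afP}, with $x_\beta\equiv0$ since the trap pins concentration at the origin ($\int_{|x|>R}|u_\beta|^2\le R^{-s}E^{\rm afP}_{\beta,g_\beta}\to0$). I expect the main obstacle to be the expansion of $g_*(\beta)$ — really its lower bound — since that is exactly what makes the three regimes share the single blow-up length $\ell_\beta$ and fixes $\tau_*$; everything else is a by-now-standard, if lengthy, concentration-compactness argument once the correct normalization is in place.
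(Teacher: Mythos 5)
Your scheme is fine in the sub-critical regime (there it essentially reproduces the paper's argument: trial state $\ell^{-1}Q_0(\cdot/\ell)$ for the upper bound, Gagliardo--Nirenberg plus diamagnetism for the lower bound, tightness from the trap, and identification of the profile by matching the two bounds), and the upper bound computation is correct in all three regimes. The genuine gap is in the critical and super-critical regimes, where your entire lower bound rests on the ``key analytic input'' $g_*(\beta)=g_*(0)+\tau_*\beta^2+o(\beta^2)$ with $\tau_*=g_*(0)\mathcal{A}_0$. Only the upper half of this is available: testing \eqref{eq:critical-value} with $Q_0$ gives $g_*(\beta)\le g_*(0)(1+\mathcal{A}_0\beta^2)$, and the known lower bound \eqref{ineq:gn-weak-field} has a constant $C_*$ that is \emph{strictly smaller} than $g_*(0)\mathcal{A}_0$; the paper defines $\tau_*$ as a liminf precisely to avoid claiming the sharp constant. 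Your argument needs the sharp lower half: with only $g_*(\beta)-g_*(0)\ge C_*\beta^2$, your bracket estimate at the scale \eqref{length:critical} yields $\liminf \ell_\beta^{-s}E^{\rm afP}_{\beta,g_\beta}\ge \tfrac{C_*}{g_*(0)\mathcal{A}_0}\mathcal{Q}_s\mu^2+\tfrac{2}{s}\mathcal{Q}_s\mu^{-s}$, whose minimum is strictly below $\tfrac{s+2}{s}\mathcal{Q}_s$, so the lower bound no longer matches the upper bound, you cannot conclude $\mu=1$, and neither \eqref{blowup:energy-afP} nor \eqref{blowup:minimizer-afP} follows. Saying the missing half ``rests on the classification of optimizers at $\beta=0$ together with the nondegeneracy of the linearization'' is not a proof: that expansion of the unconstrained magnetic Gagliardo--Nirenberg constant is exactly where the analytic difficulty lives (cross term $2\beta\int\mathbf{A}[|u|^2]\cdot\mathbf{J}[u]$, quantitative closeness to $Q_0$ modulo translations, dilations and phases), and it is arguably harder than the trapped problem because of the extra invariances.

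By contrast, the paper never uses such an expansion. In the critical (and super-critical) case it works directly with the trapped minimizers: blow-up at the scale $\varepsilon_\beta=\|\nabla u_\beta\|_{L^2}^{-1}$, strong $H^1$ convergence of $|w_\beta|$ to $Q_0$, then a decomposition $w_\beta=q_\beta+{\rm i}r_\beta$ with an orthogonality condition and the coercivity of $\mathcal{L}=-\Delta-Q^2+1$ (nondegeneracy of $Q$) to get $\|r_\beta\|_{H^1}\le C\beta$, which yields the cross-term estimate $\int\mathbf{A}[|w_\beta|^2]\cdot\mathbf{J}[w_\beta]=o(\beta)$ and hence that the magnetic contribution is asymptotically $\beta^2\mathcal{A}_0\varepsilon_\beta^{-2}$; matching with the trap term then fixes the blow-up length. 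If you want to salvage your route, you must either prove the sharp expansion $\liminf_{\beta\to0}(g_*(\beta)-g_*(0))/\beta^2= g_*(0)\mathcal{A}_0$ (a separate, nontrivial result not contained in the paper or in \eqref{ineq:gn-weak-field}), or replace that step by a direct analysis of the near-minimizers along the lines just described.
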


\begin{remark}
\begin{itemize}
\item The critical regime covered the case where $\beta^{2} \gg g_{*}(0) - g_{\beta}$. If $g_{*}(0) - g_{\beta} = \mathcal{O}(\beta^{2})$ (we assume without loss of generality that $g_{*}(0) - g_{\beta} = \widetilde{\tau}_{0}\beta^{2}$ for some $\widetilde{\tau}_{0} > 0$) then we still have \eqref{blowup:minimizer-afP} and \eqref{blowup:energy-afP} where $\ell_{\beta}$ is defined analogously as in \eqref{length:supercritical} with $\tau_{0}$ replaced by $\widetilde{\tau}_{0}$.

\item The assumption on $g_{\beta}$ in \Cref{thm:collapse-afP}-\ref{thm:collapse-supercritical} is to guarantee the existence of a minimizer. This is due to the fact that, for small $\beta$, the difference between the critical value $g_{*}(\beta)$ and the Gagliardo--Nirenberg constant $g_{*}(0)$ diminishes at a rate proportional to $\beta^{2}$ (see \cite{AtaLunNgu-24}). More precisely, there exist two constants $0 < C_{*} < C^{*} = g_{*}(0)\mathcal{A}_{0}$ (independent of $\beta$) such that
\begin{equation}\label{ineq:gn-weak-field}
g_{*}(0) + C_{*}\beta^{2} \leq g_{*}(\beta) \leq g_{*}(0) + C^{*}\beta^{2}.
\end{equation}
The universal constant $\tau_{*}$ in \Cref{thm:collapse-afP}-\ref{thm:collapse-supercritical} is then given by
$$
\tau_{*} := \liminf_{\beta \to 0}\frac{g_{*}(\beta) - g_{*}(0)}{\beta^{2}}.
$$
\end{itemize}
\end{remark}

\subsection{Many-body theory of interacting ``almost-bosonic'' anyon gases}

From the first principle of quantum mechanics, a 2D interacting anyon gas in a trapping potential $V$ is described by the system of $N$ particles (see \cite{LunRou-15})
\begin{align}\label{hamiltonian}
H^{\rm af}_{N^{\nu},\beta,g} ={} & \sum_{j=1}^N\left(\left(-{\rm i}\nabla_{j} + \frac{\beta}{N-1}\sum_{k\ne j}\nabla^{\perp} w_{0}(x_{j}-x_{k})\right)^{2} + V(x_j)\right) \nonumber \\
& - \frac{g}{N-1}\sum_{1 \leq j < k \leq N}W_{N^{\nu}}(x_{j}-x_{k})
\end{align}
acting on the Hilbert space $\mathcal{H}_{N} := L_{\rm sym}^{2}(\mathbb{R}^{2N})$. The two-body attraction is described by the function of distance between particles and is scales by the parameter $\nu>0$, given by
\begin{equation}\label{condition:potential-two-body}
0 \leq W_{N^{\nu}}(x) = N^{2\nu}W(N^{\nu}x) = W_{N^{\nu}}(-x) \in L^{1}\cap L^{\infty}(\mathbb R^{2}) \quad \text{with} \quad \int_{\mathbb R^{2}}W = 1.
\end{equation}
The quantum energy associated to \eqref{hamiltonian} is given by
\begin{equation}\label{energy:qm}
E_{N^{\nu},\beta,g}^{\rm afQM} := \inf\left\{\langle \Psi_{N} | H^{\rm af}_{N^{\nu},\beta,g} | \Psi_{N} \rangle : \Psi_{N} \in L^{2}(\mathbb R^{2N}), \int_{\mathbb R^{2N}} |\Psi_{N}|^{2} = 1 \right\}.
\end{equation}
Since \eqref{hamiltonian} takes action on the symmetric space, the usual mean-field approximation suggests to restrict many-body wave functions into the factorized ansatz
\begin{equation}\label{eq:bec}
\Psi_{N}(x_1,\ldots,x_{N}) \approx u^{\otimes N}(x_1,\ldots,x_{N}) := u(x_1)\ldots u(x_{N}).
\end{equation}
Taking the expectation of \eqref{hamiltonian} against \eqref{eq:bec}, with the normalization condition $\|u\|_{L^{2}}=1$, we obtain the ``average-field-Hartree'' energy functional
\begin{align}\label{functional:afH}
\mathcal{E}_{N^{\nu},\beta,g}^{\rm afH}[u] := &{} \frac{\langle u^{\otimes N}, H^{\rm af}_{N^{\nu},\beta,g} u^{\otimes N} \rangle}{N} \nonumber \\
={} & \int_{\mathbb R^{2}} \left[
\left|(-{\rm i}\nabla + \beta\mathbf{A}[|u|^{2}])u\right|^{2}
+ V|u|^{2} - \frac{g}{2} (|u|^{2} * W_{N^{\nu}})|u|^{2} \right].
\end{align}
The associated average-field-Hartree ground state energy, given by
\begin{equation}\label{energy:afH}
E_{N^{\nu},\beta,g}^{\rm afH} := \inf\left\{\mathcal{E}_{N^{\nu},\beta,g}^{\rm afH}[u] : u\in H^{1}(\mathbb R^{2}), \int_{\mathbb R^{2}} |u|^{2} = 1 \right\}
\end{equation}
is thus an upper bound to the many-body ground state energy. The average-field-Hartree theory plays an interpolation role between the many-body and one-body average-field-Pauli theory. In fact, the Hartree functional \eqref{functional:afH} is related to the average-field-Pauli functional \eqref{functional:afP} since the potential $W_{N^{\nu}}$ converges to the delta function $\delta_{0}$ in the mean-field limit regime $N \to \infty$.

It is worth noting that the Hamiltonian \eqref{hamiltonian} is too singular when acting on the product state \eqref{eq:bec}. To circumvent this issue, a regularization of the magnetic vector potential \eqref{potential:magnetic} was used in \cite{LunRou-15,Girardot-20}. It was given by (see \cite{LunRou-15,Girardot-20})
\begin{equation} \label{potential:magnetic-regularized}
\mathbf{A}_{R}[\varrho] := \nabla^{\perp} w_{R} * \varrho  \quad \text{with} \quad w_{R} := w_{0} * \frac{\mathbbm{1}(B(0,R))}{\pi^{2}R^{2}}.
\end{equation}
We then introduce the regularizations $\mathcal{E}^{\rm afP}_{R,\beta,g}$, $\mathcal{E}^{\rm afH}_{N^{\nu},R,\beta,g}$ and $E^{\rm afP}_{R,\beta,g}$,  $E^{\rm afH}_{N^{\nu},R,\beta,g}$ of the original energy functionals $\mathcal{E}^{\rm afP}_{\beta,g}$, $\mathcal{E}^{\rm afH}_{N^{\nu},\beta,g}$ and energies $E^{\rm afP}_{\beta,g}$, $E^{\rm afH}_{N^{\nu},\beta,g}$, where we replaced $\mathbf{A}$ (defined by \eqref{potential:magnetic}) in \eqref{functional:afP}, \eqref{functional:afH} and \eqref{energy:afP}, \eqref{energy:afH} respectively by $\mathbf{A}_{R}$ (defined by \eqref{potential:magnetic-regularized}). Furthermore, for the many-body problem, we consider the regularized Hamiltonian $H^{\rm af}_{N^{\nu},R,\beta,g}$ with the associated reguralized quantum energy $E^{\rm afQM}_{N^{\nu},R,\beta,g}$, where $w_{0}$ (defined by \eqref{potential:magnetic}) in \eqref{hamiltonian} was replaced by $w_{R}$ (defined by \eqref{potential:magnetic-regularized}).

In the literature, the derivation of the average-field-Pauli theory \eqref{functional:afP} as the limit of quantum mechanics was done for non-interacting anyon gases ($\beta \ne 0$ and $g = 0$) in \cite{LunRou-15} (see also \cite{Girardot-20}) and for interacting Bose gases ($\beta = 0$ and $g \ne 0$) in \cite{LewNamRou-16} (see also \cite{LewNamRou-17,NamRouSei-16,NamRou-20,Rougerie-EMS}. It is worth noting that, for Bose gases, the semi-classical approximation theory is usually called nonlinear Schr\"odinger (NLS) or Gross--Pitaevskii (GP). For interacting anyon gases, a weak result, in the high density regime, can be obtained by combining those well-known results. However, the most challenging problem is to obtain BECs in the dilute regime where $\nu > \frac{1}{2}$. It seems that the arguments using the second moment estimate in \cite{NamRouSei-16,LewNamRou-17,NamRou-20} cannot be used since the Hamiltonian \eqref{hamiltonian} invoked three-body interactions. Such an argument is only available in the low one-dimensional setting (see \cite{NguRic-24} for a discussion).

In order to verify the validity of the effective average-field-Pauli theory \eqref{functional:afP} from the many-body quantum mechanic, it is necessary to introduce the $k$-particles reduced density matrices. It is defined, for any many-body wave function $\Psi_{N} \in L^{2}(\mathbb R^{2N})$, by the partial trace
$$
\gamma_{\Psi_{N}}^{(k)} := \tr_{k+1\to N} | \Psi_{N} \rangle \langle \Psi_{N} |,\quad \forall k \in \mathbb N .
$$
Equivalently, $\gamma_{\Psi_{N}}^{(k)}$ is the trace class operator on $L^{2}(\mathbb R^{2k})$ with kernel
$$
\gamma_{\Psi_{N}}^{(k)}(x_1,\ldots,x_k;y_1,\ldots,y_k) := \int_{\mathbb R^{2(N-k)}}\overline{\Psi_{N}(x_1,\ldots,x_k;Z)}{\Psi_{N}(y_1,\ldots,y_k;Z)} {\rm d} Z .
$$
The Bose--Einstein condensation \eqref{eq:bec} is then characterized properly by
$$
\lim_{N \to \infty} \tr \left|\gamma_{\Psi_{N}}^{(k)} - |u^{\otimes k}\rangle \langle u^{\otimes k}|\right| = 0,\quad \forall k \in \mathbb N.
$$
For (non-)interacting anyon gases, one of the main features of the reduced density matrices is to rewrite the quantum energy, i.e., the expectation of the Hamiltonian \eqref{hamiltonian} in terms of three-particles reduced density matrices (see \cite{LunRou-15,LewNamRou-16}). We are now in the position to formulate our last main result. We have the following.

\begin{theorem}[Condensation and collapse of many-body ground states]\label{thm:many-body}
We consider $N$ extended interacting anyons $E^{\rm afQM}_{N^{\nu},R,\beta,g}$ of radius $R = R_{N} \sim N^{-\eta}$ in an external potential $V(x) = |x|^{s}$ with $s>0$. We assume that the two-body interaction function $W_{N^{\nu}}$ satisfy \eqref{condition:potential-two-body} and assume the relation
\begin{equation}\label{cv:speed-many-body}
\nu < \frac{s}{6s+6} \quad \text{and} \quad 0 < \eta < \frac{s}{4s+4}.
\end{equation}
\begin{enumerate}[label=(\roman*)]
\item Let $\beta,g > 0$ be fixed with $g < g_{*}(\beta)$. Assume further that $\nu<\eta$ if $g \geq g_{*}(0)$. Then, in the limit $N \to \infty$, we have the convergence of the ground state energy
\begin{align}\label{cv:qm-afP-energy}
\lim_{N\to \infty} E^{\rm afQM}_{N^{\nu},R_{N},\beta,g} = E^{\rm afP}_{\beta,g} \geq 0 .
\end{align}
Moreover, if $\{\Psi_{N}\}_{N}$ is a sequence of ground states of  $E^{\rm afQM}_{N^{\nu},R_{N},\beta,g}$, then there exists a Borel probability measure $\mu$ supported on the set $\mathcal{M}^{\rm afP}$ of minimizers of $\mathcal{E}^{\rm afP}_{\beta,g}$, i.e.,
$$
\mathcal{M}^{\rm afP} := \left\{u \in L^{2}(\mathbb R^{2}): \mathcal{E}^{\rm afP}_{\beta,g}[u] = E^{\rm afP}_{\beta,g}, \int_{\mathbb R^{2}}|u|^{2} = 1\right\}
$$
such that, modulo restricting to a subsequence, we have
\begin{equation}\label{cv:qm-afP-ground-state}
\lim_{N \to \infty}\tr \left| \gamma_{\Psi_{N}}^{(k)} - \int_{\mathcal{M}^{\rm afP}} |u^{\otimes k} \rangle \langle u^{\otimes k}| {\rm d}\mu(u) \right| =0,\quad \forall k \in \mathbb N.
\end{equation}
\item Let $Q_{0}$, $\mathcal{Q}_{s}$, $\mathcal{A}_{0}$, $\{\beta_{N}\}_{N}$ with $\beta_{N} \searrow 0$, $\{g_{N}\}_{N}$ with $g_{N} \to g_{*}(0)$, and $\{\ell_{N}\}_{N}$ be as in \Cref{thm:collapse-afP}. Assume that $xW \in L^1(\mathbb R^{2})$ and $\ell_{N} \sim N^{-\sigma}$ with
\begin{equation}\label{blowup:speed-many-body}
	0<\sigma<\min\left\{\frac{\nu}{s+3},\frac{2\eta}{s+4},\frac{s-\nu(6s+6)}{s(5s+4)},\frac{s-\eta(4s+4)}{s(5s+4)}\right\}.
\end{equation}
Assume further that  $\nu<\eta$ in the critical case and additionally $\sigma < \frac{2(\eta-\nu)}{s+2}$ in the super-critical case. Then we have the asymptotic formula of the ground state energy
\begin{equation}\label{blowup:qm-energy}
	E_{N,R_{N},\beta_{N},g_{N}}^{\rm afQM} = E_{\beta_{N},g_{N}}^{\rm afP} \left(1+o(1)_{N \to \infty}\right) = \frac{s+2}{s} \mathcal{Q}_{s} \ell_{N}^{s} \left(1+o(1)_{N \to \infty}\right) .
\end{equation}
Moreover, if $\{\Psi_{N}\}_{N}$ is a sequence of ground states of $E^{\rm afQM}_{N^{\nu},R_{N},\beta_{N},g_{N}}$ and if in addition 
$$
0 < \sigma < \min\left\{\frac{s-\nu(6s+6)}{5s^{2}+12s+8},\frac{s-\eta(4s+4)}{5s^{2}+12s+8}\right\}
$$
then for the whole sequence $\Phi_{N} = \ell_{N}^{2N} \Psi_{N}(\ell_{N}\cdot)$, we have
\begin{equation}\label{blowup:qm-ground-state}
	\lim_{N \to \infty} \tr \left| \gamma_{\Phi_{N}}^{(k)} - | Q_{0}^{\otimes k} \rangle \langle Q_{0}^{\otimes k} | \right| = 0,\quad \forall k \in \mathbb N.
\end{equation}
\end{enumerate}
\end{theorem}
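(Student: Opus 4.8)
The plan is to prove both parts by the standard two-sided scheme for mean-field limits — a variational upper bound from factorized trial states, and a lower bound built on the quantum de Finetti theorem — adapted to the two features that make \eqref{hamiltonian} nonstandard: the genuinely three-body structure produced by squaring the self-generated vector potential, and (in part (ii)) the collapsing length scale. For the upper bounds I would test $H^{\rm af}_{N^{\nu},R_N,\beta,g}$ against $u^{\otimes N}$, so that $N^{-1}\langle u^{\otimes N},H^{\rm af}_{N^{\nu},R_N,\beta,g}u^{\otimes N}\rangle = \mathcal{E}^{\rm afH}_{N^{\nu},R_N,\beta,g}[u]$ up to the diagonal ($j=k$) self-energy of the squared magnetic sum, which is $O(\beta^{2}N^{-1}\log(1/R_N)) = O(\beta^{2}\eta N^{-1}\log N)$ and hence negligible under \eqref{cv:speed-many-body}. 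For part (i) I would then fix $u$ and let $N\to\infty$, so that $W_{N^{\nu}}\to\delta_{0}$ and $R_N\to 0$ give $\mathcal{E}^{\rm afH}_{N^{\nu},R_N,\beta,g}[u]\to\mathcal{E}^{\rm afP}_{\beta,g}[u]$, and optimize over $u$ to obtain $\limsup_N E^{\rm afQM}_{N^{\nu},R_N,\beta,g}\le E^{\rm afP}_{\beta,g}$. For part (ii) I would take $u=u_N$ equal to a phase times $\ell_N^{-1}Q_{0}(\cdot/\ell_N)$ and combine the computation with \Cref{thm:collapse-afP}: the regularization error, the interaction-versus-delta error (controlled via $\|W_{N^{\nu}}*f-f\|_{L^{2}}\le N^{-\nu}\|xW\|_{L^{1}}\|\nabla f\|_{L^{2}}$, which is where $xW\in L^{1}$ enters), and the self-energy must all be shown to be $o(\ell_N^{s})$, and this is exactly what the hierarchy \eqref{blowup:speed-many-body} encodes.

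For the lower bounds, following \cite{LunRou-15,LewNamRou-16} I would rewrite $\langle\Psi_N,H^{\rm af}_{N^{\nu},R_N,\beta,g}\Psi_N\rangle = \tr[\mathbb{H}_{3}\,\gamma^{(3)}_{\Psi_N}]$ for an explicit, $R_N$- and $N$-dependent three-body operator $\mathbb{H}_{3}$. A priori bounds from $E^{\rm afQM}=O(1)$ (part (i)) — or $O(\ell_N^{s})$ after rescaling by $\ell_N$ (part (ii)) — together with the diamagnetic and Bogomol'nyi inequalities \eqref{est:critical-value}--\eqref{ineq:curl} then control the magnetic kinetic energy and, through $V(x)=|x|^{s}\to\infty$, give tightness of the one-particle densities. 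I would next invoke the quantitative/localized quantum de Finetti theorem, using the magnetic kinetic operator as the localizing observable, to extract a Borel probability measure $\mu$ on the unit sphere of the relevant one-body space with $\gamma^{(k)}_{\Psi_N}\to\int|u^{\otimes k}\rangle\langle u^{\otimes k}|\,d\mu(u)$; weak lower semicontinuity of the magnetic kinetic energy and of the trap, together with convergence of $\mathbb{H}_{3}$ to the afP kernel (again quantified by \eqref{cv:speed-many-body} and \eqref{blowup:speed-many-body}), yields $\liminf_N E^{\rm afQM}\ge\int\mathcal{E}^{\rm afP}_{\beta,g}[u]\,d\mu(u)\ge E^{\rm afP}_{\beta,g}$, matching the upper bound. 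In part (ii), after rescaling by $\ell_N$ the effective problem is again a mean-field limit, but with an effective coupling approaching the critical Gagliardo--Nirenberg value; \eqref{ineq:gn-magnetic} forces the rescaled de Finetti measure onto near-optimizers, and \Cref{thm:collapse-afP} identifies the limit as $Q_{0}$ and pins down $\ell_N$ in each of the three regimes.

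Once the energies match, the measure $\mu$ in part (i) must be supported on $\mathcal{M}^{\rm afP}$, which gives \eqref{cv:qm-afP-ground-state}; in part (ii), \Cref{thm:collapse-afP} provides a \emph{unique} limit $Q_{0}$ (translations and dilations having been fixed by the rescaling $\Phi_N=\ell_N^{2N}\Psi_N(\ell_N\,\cdot)$, and the overall phase being immaterial for $\gamma^{(k)}$), so $\mu=\delta_{Q_{0}}$ and \eqref{blowup:qm-ground-state} follows; the sharper range of $\sigma$ imposed for \eqref{blowup:qm-ground-state} is precisely what keeps the remainders small enough for this rigidity step. The extra hypotheses $\nu<\eta$ in the critical case, and additionally $\sigma<\tfrac{2(\eta-\nu)}{s+2}$ in the super-critical case (needed whenever $g\ge g_{*}(0)$), enter so that the $\beta^{2}$-order stability margin $g_{*}(\beta)-g_{*}(0)$ survives the regularization $R_N$; otherwise the attractive term could overwhelm the kinetic energy before the self-field stabilizes it.

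The step I expect to be the main obstacle is the lower bound in the collapse regime: one must simultaneously fit three competing scales — the collapse length $\ell_N\sim N^{-\sigma}$, the magnetic cutoff $R_N\sim N^{-\eta}$, and the interaction range $N^{-\nu}$ — inside both the three-body reduction and the de Finetti error estimates, and verify that \eqref{blowup:speed-many-body} keeps every remainder $o(\ell_N^{s})$ after the rescaling. The genuinely three-body character of $\mathbb{H}_{3}$ is what blocks the second-moment shortcuts of \cite{NamRouSei-16,LewNamRou-17,NamRou-20} and forces this more delicate bookkeeping.
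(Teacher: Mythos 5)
Your overall strategy (trial states for the upper bound, quantum de Finetti for the lower bound, collapse analysis after rescaling by $\ell_N$) is the same as the paper's, but two of your key steps have genuine gaps. First, you propose to run the quantitative de Finetti argument ``using the magnetic kinetic operator as the localizing observable''. This cannot work as stated: the magnetic kinetic term in \eqref{hamiltonian} is a genuinely many-body operator (it couples each $x_j$ to all other particles through $\nabla^\perp w_R(x_j-x_k)$), and its mean-field counterpart depends nonlinearly on the unknown state through $\mathbf{A}[|u|^2]$, so neither provides a fixed one-body operator whose low-lying spectral subspace one can project onto. The paper instead localizes with $P=\mathbbm{1}(h\le L)$, $h=-\Delta+V$, precisely because the CLR-type bound \eqref{est:CLR}, $\dim(PL^2)\le CL^{1+2/s}$, is what quantifies the de Finetti error; the price is the additional projection errors on the magnetic and interaction terms appearing in \eqref{est:quantum-afH} (the $L^{-1/2}R^{-1-\varepsilon}$, $L^{-1}R^{-2}$ and $N^{3\nu/2}L^{-1/2}$ terms), whose optimization over $L$ is exactly where the exponents in \eqref{cv:speed-many-body} and \eqref{blowup:speed-many-body} come from. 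Your sketch contains no substitute for this mechanism.

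Second, your a priori bounds are not justified in the focusing regime. You claim that $E^{\rm afQM}=O(1)$ (resp.\ $O(\ell_N^s)$) together with the diamagnetic and Bogomol'nyi inequalities controls the magnetic kinetic energy of the ground state; but the interaction is attractive with $g$ up to $g_*(\beta)$, and in part (ii) one has $\beta_N\to0$, $g_N\to g_*(0)$, so \eqref{ineq:curl} (which only yields the margin $4\pi\beta$) gives nothing, and boundedness of the energy by itself does not prevent the kinetic term from diverging while being cancelled by the attraction. The paper obtains $\tr[h\gamma^{(1)}_{\Psi_N}]\le C$ (part (i)) and $\tr[h\gamma^{(1)}_{\Psi_N}]\le C\ell_N^{-2}$ (part (ii)) via the Feynman--Hellmann-type perturbation $H^{\rm af}_{N^\nu,R,\beta,g}-\varepsilon H^{\rm af}_{N^\nu,R,\beta,0}$ in \eqref{boundedness:qm-ground-state}, a bootstrap as in \cite{LewNamRou-17,NamRou-20}, and in part (ii) the specific choice $\varepsilon_N\sim 1-\tfrac{g_N}{g_*(0)}$ or $\varepsilon_N\sim\beta_N^2$ tied to the stability margin; this step is the heart of the argument and is missing from your plan. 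Relatedly, by comparing the three-body operator directly with the afP kernel you bypass the average-field-Hartree intermediary (\Cref{thm:collapse-afH}); but the paper's proof of \eqref{blowup:qm-ground-state} does not pass to a limiting measure at all (the relevant one-body states live at the $N$-dependent scale $\ell_N$): it reduces the statement to $\int|\langle u,Q_N\rangle|\,{\rm d}\mu^{(3)}_{\Psi_N}(u)\to1$ through the set $\mathcal{T}_N$ of approximate Hartree minimizers, and this requires the quantitative Hartree-level collapse result. Your ``rescaled de Finetti measure forced onto near-optimizers'' needs exactly such a statement, so you would in effect have to prove \Cref{thm:collapse-afH} anyway; similarly, your support identification in part (i) needs the high/low kinetic energy splitting with the perturbed functional \eqref{cv:ground-states-split-1}, since the Hartree functional under the $N$-dependent measure is not uniformly bounded below on the whole sphere.
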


\begin{remark}
If $\mathcal{M}^{\rm afP} = \{u_{0}\}$, up to a phase, then for the whole sequence,
$$
\lim_{N \to \infty}\tr \left| \gamma_{\Psi_{N}}^{(k)} - |u_{0}^{\otimes k} \rangle \langle u_{0}^{\otimes k}| \right| =0,\quad \forall k \in \mathbb N.
$$
\end{remark}

In this article, the method of proof of \Cref{thm:many-body} is the (quantitative) quantum de Finetti theorem as used in \cite{LewNamRou-16,LunRou-15}. Such a method was also used in the study of 2D Bose gases with non-mangetic field \cite{LewNamRou-17} (see also \cite{NamRou-20} for an improvement) in the dilute regime. However, arguments in the mentioned articles cannot be used for the problem of three-body interactions in high-dimensional systems such as anyon gases (see \cite{NguRic-24} for discussions in the low one-dimensional case). Fortunately, one can still use some arguments in \cite{LewNamRou-16,LunRou-15} to obtain a weak result in the high density regime.

\hspace{1cm}

\noindent{\bf Organization of the paper.} In \Cref{sec:afP}, we prove the existence of average-field-Pauli minimizers and establish its behavior in the collapse regime and in the weak-field regime at the same time. In \Cref{sec:many-body}, we first study the collapse phenomenon in the interpolate average-field-Hartree theory. This plays a crucial role in the study of many-body theory in the super-critical regime. Finally, in \Cref{app:gn}, we make a relation between the magnetic and non-magnetic Gagliardo--Nirenberg optimal constant and optimizer(s) in the weak-field limit. This gives a hope to expect the uniqueness of magnetic Gagliardo--Nirenberg optimizers for small enough self-generated magnetic field.

\hspace{1cm}

\noindent{\bf Acknowledgments.} The author is supported by the Knut and Alice Wallenberg Foundation.

\section{Average-field-Pauli minimizers}\label{sec:afP}

\subsection{Existence of average-field-Pauli minimizers}

In this subsection, we prove the existence of average-field-Pauli minimizers as shown in \Cref{thm:existence}. We rewrite the energy functional \eqref{functional:afP} as follows
\begin{equation}\label{functional:afP-crossterms}
\mathcal{E}^{\rm afP}_{\beta,g}[u] = \int_{\mathbb R^{2}} \left[
|\nabla u|^{2} + 2\beta \mathbf{A}[|u|^{2}] \cdot \mathbf{J}[u] 
+ \beta^{2} \mathbf{A}[|u|^{2}]^{2} |u|^{2} + V |u|^{2} - \frac{g}{2} |u|^{4}
\right]
\end{equation}
where we defined the current of $u$ by
\begin{equation}\label{potential:current}
\mathbf{J}[u] := \frac{{\rm i}}{2}(u\nabla \overline{u} - \overline{u}\nabla u) = \Im(\overline{u}\nabla u).
\end{equation}
We now collect some preliminary estimates that are needed to set up the variational calculus for \eqref{functional:afP-crossterms}. We note that, for $u \in H^{1}(\mathbb R^{2})$, by Sobolev embedding,
$u \in L^{p}(\mathbb R^{2})$ for any $2 \le p < \infty$. Furthermore, $\mathbf{A}[|u|^{2}]u \in L^{2}(\mathbb R^{2})$ (see \cite{LunRou-15,CorLunRou-17}) with
\begin{equation}\label{ineq:hardy}
\|\mathbf{A}[|u|^{2}]u\|_{L^{2}}
\le C\|u\|_{L^{2}}^{2} \|\nabla |u|\|_{L^{2}},\quad \forall u \in H^{1}(\mathbb R^{2}).
\end{equation}
Note that \eqref{ineq:hardy} still holds true if the magnetic field $\mathbf{A}$ (defined by \eqref{potential:magnetic}) is replaced by its regularization $\mathbf{A}_{R}$ (defined by \eqref{potential:magnetic-regularized}). Moreover, we have the following 2D Gagliardo--Nirenberg inequality
\begin{equation}\label{ineq:gn}
\frac{g_{*}(0)}{2}\int_{\mathbb R^{2}}|u|^{4} \leq \int_{\mathbb R^{2}}|\nabla |u||^{2} \int_{\mathbb R^{2}}|u|^{2} \leq \int_{\mathbb R^{2}}|\nabla u|^{2} \int_{\mathbb R^{2}}|u|^{2},\quad \forall u\in H^{1}(\mathbb R^{2})
\end{equation}
and the diamagnetic inequality (see e.g., \cite[Theorem 7.21]{LieLos-01})
\begin{equation}\label{ineq:diamagnetic}
\int_{\mathbb R^{2}}|\nabla |u||^{2} \leq \min\left\{\int_{\mathbb R^{2}}|\nabla u|^{2}, \int_{\mathbb R^{2}}\left|(-{\rm i}\nabla + \beta \mathbf{A}[|u|^{2}])u\right|^{2}\right\},\quad \forall \beta \in \mathbb{R}.
\end{equation}
Note that \eqref{ineq:gn} is a consequence of the Sobolev embedding and is the special case of \eqref{ineq:gn-magnetic} where $\beta=0$. Furthermore, \eqref{ineq:diamagnetic} still holds true for a general magnetic field $\mathbf{A}$ including \eqref{potential:magnetic} and \eqref{potential:magnetic-regularized}.

We have all tools to derive the existence of minimizers of \eqref{energy:afP}.

\begin{proposition}
Let $0 \leq V \in L_{\rm loc}^{\infty}(\mathbb R^{2})$ and $V(x) \to +\infty$ as $|x| \to +\infty$. Then, for every $\beta \geq 0$ and $0 < g < g_{*}(\beta)$, there exists $u \in H^{1}(\mathbb R^{2})$ with $\int_{\mathbb R^{2}}|u|^{2}=1$ such that $\mathcal{E}^{\rm afP}_{\beta,g}[u]=E^{\rm afP}_{\beta,g}$.
\end{proposition}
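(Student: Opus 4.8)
\emph{Overview.} The plan is to run the direct method of the calculus of variations, working with the expanded form \eqref{functional:afP-crossterms} of the functional. The three ingredients are: a coercive lower bound coming from the magnetic Gagliardo--Nirenberg inequality \eqref{ineq:gn-magnetic}; compactness of minimizing sequences furnished by the confinement $V(x)\to+\infty$; and passage to the limit in the three $u$-dependent terms.

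\emph{Step 1: coercivity and a priori bounds.} For any admissible $u$ (so $\int_{\mathbb R^{2}}|u|^{2}=1$), the magnetic Gagliardo--Nirenberg inequality \eqref{ineq:gn-magnetic} gives $\tfrac{g_{*}(\beta)}{2}\int|u|^{4}\le\int|(-{\rm i}\nabla+\beta\mathbf A[|u|^{2}])u|^{2}$, hence
\[
\mathcal E^{\rm afP}_{\beta,g}[u]\ \ge\ \Bigl(1-\tfrac{g}{g_{*}(\beta)}\Bigr)\int_{\mathbb R^{2}}\bigl|(-{\rm i}\nabla+\beta\mathbf A[|u|^{2}])u\bigr|^{2}+\int_{\mathbb R^{2}}V|u|^{2}\ \ge\ 0,
\]
since $0<g<g_{*}(\beta)$ and $V\ge 0$; thus $0\le E^{\rm afP}_{\beta,g}<\infty$ (finiteness by testing with a smooth compactly supported normalized state, using $V\in L^{\infty}_{\rm loc}$). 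For a minimizing sequence $\{u_{n}\}$ this bound makes $\int|(-{\rm i}\nabla+\beta\mathbf A[|u_{n}|^{2}])u_{n}|^{2}$ and $\int V|u_{n}|^{2}$ bounded; the diamagnetic inequality \eqref{ineq:diamagnetic} then bounds $\|\nabla|u_{n}|\|_{L^{2}}$, \eqref{ineq:hardy} bounds $\|\mathbf A[|u_{n}|^{2}]u_{n}\|_{L^{2}}$, and
$\|\nabla u_{n}\|_{L^{2}}\le\|(-{\rm i}\nabla+\beta\mathbf A[|u_{n}|^{2}])u_{n}\|_{L^{2}}+\beta\|\mathbf A[|u_{n}|^{2}]u_{n}\|_{L^{2}}$
is bounded. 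So $\{u_{n}\}$ is bounded in $H^{1}(\mathbb R^{2})\cap L^{2}_{V}$.

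\emph{Step 2: compactness.} The confinement gives $\int_{|x|>R}|u_{n}|^{2}\le(\inf_{|x|>R}V)^{-1}\int V|u_{n}|^{2}\to 0$ as $R\to\infty$, uniformly in $n$; combined with the local Rellich compactness of the $H^{1}$-bounded sequence, this yields, along a subsequence, $u_{n}\rightharpoonup u$ weakly in $H^{1}$ and in $L^{2}_{V}$, and $u_{n}\to u$ strongly in $L^{2}(\mathbb R^{2})$. Interpolating the strong $L^{2}$ convergence against the uniform $H^{1}\hookrightarrow L^{q}$ ($q<\infty$) bound upgrades this to strong convergence in every $L^{p}(\mathbb R^{2})$, $2\le p<\infty$; in particular $\int|u|^{2}=1$ (so $u$ is admissible) and $\int|u_{n}|^{4}\to\int|u|^{4}$.

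\emph{Step 3: passing to the limit in the magnetic terms.} Since $|u_{n}|^{2}\to|u|^{2}$ strongly in every $L^{p}$, $p<\infty$, splitting the Biot--Savart kernel $\nabla^{\perp}w_{0}(x)=x^{\perp}/|x|^{2}$ into the piece on $\{|x|<1\}$ (in $L^{p_{0}}$ for every $p_{0}<2$) and the piece on $\{|x|\ge1\}$ (in $L^{q_{0}}$ for every $q_{0}>2$) and applying Young's inequality on each piece shows, by linearity of $\varrho\mapsto\mathbf A[\varrho]$, that $\mathbf A[|u_{n}|^{2}]\to\mathbf A[|u|^{2}]$ strongly in $L^{4}(\mathbb R^{2})$, hence a.e. along a further subsequence. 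Then $\mathbf A[|u_{n}|^{2}]^{2}|u_{n}|^{2}\to\mathbf A[|u|^{2}]^{2}|u|^{2}$ strongly in $L^{1}$ (product of a strong $L^{2}$ and a strong $L^{2}$ convergence), and $\int\mathbf A[|u_{n}|^{2}]\cdot\mathbf J[u_{n}]\to\int\mathbf A[|u|^{2}]\cdot\mathbf J[u]$ because $\mathbf J[u_{n}]=\Im(\overline{u_{n}}\nabla u_{n})$ is bounded in $L^{4/3}$ and converges weakly there (strong $L^{4}$ times weak $L^{2}$), while $\mathbf A[|u_{n}|^{2}]$ converges strongly in the dual space $L^{4}$. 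Combining these convergences with the weak lower semicontinuity of $\int|\nabla u_{n}|^{2}$ and of $\int V|u_{n}|^{2}$, and with $\int|u_{n}|^{4}\to\int|u|^{4}$, yields $\mathcal E^{\rm afP}_{\beta,g}[u]\le\liminf_{n}\mathcal E^{\rm afP}_{\beta,g}[u_{n}]=E^{\rm afP}_{\beta,g}$, so $u$ is a minimizer. The same argument applies verbatim to the regularized functional $\mathcal E^{\rm afP}_{R,\beta,g}$, since \eqref{ineq:hardy} and \eqref{ineq:diamagnetic} persist under the regularization \eqref{potential:magnetic-regularized}.

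\emph{Main obstacle.} Steps 1 and 2 are routine; the only genuinely delicate point is Step 3, i.e., justifying the convergence of the two $u$-dependent nonlinear magnetic terms $\int\mathbf A[|u_{n}|^{2}]\cdot\mathbf J[u_{n}]$ and $\int\mathbf A[|u_{n}|^{2}]^{2}|u_{n}|^{2}$ for the singular, Riesz-type kernel $x^{\perp}/|x|^{2}$. This rests on the continuity estimates for $\varrho\mapsto\mathbf A[\varrho]$ (the same ones behind \eqref{ineq:hardy}) and must be arranged so that both terms pass to the limit simultaneously along a single subsequence; everything else reduces to standard weak lower semicontinuity plus the confinement-induced tightness.
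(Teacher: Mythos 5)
Your proposal is correct and follows essentially the same route as the paper: coercivity from the magnetic Gagliardo--Nirenberg inequality \eqref{ineq:gn-magnetic} defining $g_{*}(\beta)$, uniform $H^{1}\cap L^{2}_{V}$ bounds via \eqref{ineq:hardy} and \eqref{ineq:diamagnetic}, and compactness of the minimizing sequence supplied by the trapping potential. Your explicit term-by-term passage to the limit (strong $L^{4}$ convergence of $\mathbf{A}[|u_{n}|^{2}]$ by splitting the kernel $x^{\perp}/|x|^{2}$ and Young's inequality, strong--weak pairing for the cross term, and strong $L^{2}$ convergence of $\mathbf{A}[|u_{n}|^{2}]u_{n}$ for the quadratic term) is a sound, more hands-on substitute for the paper's appeal to the standard method and to the Fatou/weak lower semicontinuity result of \cite[Proposition 3.8]{LunRou-15}; the bulk of the paper's written proof is instead an alternative self-contained argument for the borderline case $g=g_{*}(0)$ (to avoid relying on the indirect proof in \cite{AtaLunNgu-24} that $g_{*}(\beta)>g_{*}(0)$), which your hypothesis $0<g<g_{*}(\beta)$ makes unnecessary.
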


\begin{proof}
The existence of a minimizer follows from the standard method in the calculus of variation. While the case $\beta > 0$ and $g = 0$ was done in \cite{LunRou-15}, the case $\beta = 0$ and $0<g<g_{*}(0)$ was proved in \cite{GuoSei-14}. When $\beta > 0$ and $0 < g < g_{*}(\beta)$, it is enough to verify the uniform boundedness in $H^{1}\cap L_{V}^{2}(\mathbb R^{2})$ (where we recall the definition of the space $L_{V}^{2}$ in \eqref{space:external}) of the minimizing sequence $\{u_{n}\}$ given by
$$
\lim_{n \to \infty} \mathcal{E}^{\rm afP}_{\beta,g}[u_{n}] = E^{\rm afP}_{\beta,g} \quad \text{with} \quad \int_{\mathbb R^{2}}|u_{n}|^{2}=1.
$$
We use \eqref{ineq:gn} and \eqref{ineq:diamagnetic} to obtain
$$
\mathcal{E}^{\rm afP}_{\beta,g}[u_{n}] \geq \left(1 - \frac{g}{g_{*}(\beta)}\right) \mathcal{E}^{\rm afP}_{\beta,0}[u_{n}] + \int_{\mathbb R^{2}} V|u|^{2}.
$$
The above together with \eqref{ineq:hardy} and \eqref{ineq:diamagnetic} yields that $\{u_{n}\}$ is uniformly bounded in $H^{1}\cap L_{V}^{2}(\mathbb R^{2})$. In fact, the case where $g = g_{*}(0)$ was included since $g_{*}(\beta) > g_{*}(0)$ for every $\beta > 0$ (see \cite{AtaLunNgu-24}). However, the proof of it in \cite{AtaLunNgu-24} is not a direct proof. In the following, we give an alternative proof for this critical case where $g=g_{*}(0)$. 

Using \eqref{ineq:gn} and \eqref{ineq:diamagnetic}, we have that $\{u_{n}\}$ is uniformly bounded in $L_{V}^{2}(\mathbb R^{2})$. It remains to prove that
$$
\varepsilon_{n}^{-2} := \int_{\mathbb R^{2}} \left| (-{\rm i}\nabla + \beta\mathbf{A}[|u_{n}|^{2}]) u_{n}\right|^{2}
$$
is bounded uniformly. We assume on the contrary that there exists a subsequence of $\{u_{n}\}$ (still denoted by $\{u_{n}\}$) such that $\varepsilon_{n} \to 0$. Denote $\widetilde{u_{n}} = \varepsilon_{n} u_{n}(\varepsilon_{n} \cdot)$. Then we have 
\begin{equation}\label{cv:minimizer-critical-boundedness-2}
\int_{\mathbb R^{2}}|\widetilde{u_{n}}|^{2} = 1 = \int_{\mathbb R^{2}} \left| (-{\rm i}\nabla + \beta\mathbf{A}[|\widetilde{u_{n}}|^{2}])\widetilde{u_{n}}\right|^{2}
\end{equation}
and, by dropping the nonnegative trapping potential,
$$
\mathcal{E}^{\rm afP}_{\beta,g_{*}(0)}[u_{n}] \geq \varepsilon_{n}^{-2}\int_{\mathbb R^{2}} \left[\left|(-{\rm i}\nabla + \beta\mathbf{A}[|\widetilde{u_{n}}|^{2}])\widetilde{u_{n}}\right|^{2} - \frac{g_{*}(0)}{2}|\widetilde{u_{n}}|^{4}\right].
$$
Multiplying both side by $\varepsilon_{n}^{2}$, dropping the nonnegative trapping term and using \eqref{ineq:diamagnetic} we have
\begin{equation}\label{cv:minimizer-critical-boundedness-1}
\lim_{n \to \infty}\int_{\mathbb R^{2}} \left[\left|(-{\rm i}\nabla + \beta\mathbf{A}[|\widetilde{u_{n}}|^{2}])\widetilde{u_{n}}\right|^{2} - \frac{g_{*}(0)}{2}|\widetilde{u_{n}}|^{4}\right] = 0.
\end{equation}
Using \eqref{ineq:hardy}, \eqref{ineq:diamagnetic}, we deduce from \eqref{cv:minimizer-critical-boundedness-2} that $\{\widetilde{u_{n}}\}$ is bounded uniformly in $H^{1}(\mathbb R^{2})$. Up to translation and extracting a subsequence, we have $\widetilde{u_{n}} \to \widetilde{u_{0}}$ weakly in $H^{1}(\mathbb R^{2})$ and almost everywhere in $\mathbb R^{2}$. We prove that this is actually the strong convergence in $H^{1}(\mathbb R^{2})$. We first note that $\widetilde{u_{0}} \not\equiv 0$. Otherwise, we must have that $u_{n} \to 0$ strongly in $L^{p}(\mathbb R^{2})$ for all $2<p<\infty$ (see \cite{Lions-84b}). This yields a contradiction, by \eqref{cv:minimizer-critical-boundedness-1} and \eqref{cv:minimizer-critical-boundedness-2}. Next, we eliminate the case where $0 < \|\widetilde{u_{0}}\|_{L^{2}}^{2} < 1$. We assume on the contrary that this is the case. It is worth noting that $\{|\widetilde{u_{n}}|\}$ is also bounded uniformly in $H^{1}(\mathbb R^{2})$, by the diamagnetic inequality \eqref{ineq:diamagnetic}. Then $|\widetilde{u_{n}}| \to |\widetilde{u_{0}}|$ weakly in $H^{1}(\mathbb R^{2})$ and almost everywhere in $\mathbb R^{2}$. We deduce from \eqref{cv:minimizer-critical-boundedness-1} and \eqref{ineq:diamagnetic} and Brezis--Lieb lemma \cite{BreLie-83} that
\begin{align}\label{cv:minimizer-critical-dichotomy-1}
0 \geq{} & \int_{\mathbb R^{2}} \left[\left|\nabla|\widetilde{u_{n}}|\right|^{2} - \frac{g_{*}(0)}{2}|\widetilde{u_{n}}|^{4}\right] \nonumber \\
={} & \int_{\mathbb R^{2}} \left[\left|\nabla|\widetilde{u_{0}}|\right|^{2} - \frac{g_{*}(0)}{2}|\widetilde{u_{0}}|^{4}\right] + \lim_{n \to \infty}\int_{\mathbb R^{2}} \left[\left|\nabla(|\widetilde{u_{n}}|-|\widetilde{u_{0}}|)|\right|^{2} - \frac{g_{*}(0)}{2}\left||\widetilde{u_{n}}|-|\widetilde{u_{0}}|\right|^{4}\right] \nonumber \\
={} & I_{1} + I_{2}.
\end{align}
On one hand, it follows from \eqref{ineq:gn} and \eqref{ineq:diamagnetic} that
\begin{equation}\label{cv:minimizer-critical-dichotomy-2}
I_{1} \geq \left(\frac{1}{\int_{\mathbb R^{2}}|\widetilde{u_{0}}|^{2}}-1\right) \frac{g_{*}(0)}{2} \int_{\mathbb R^{2}}|\widetilde{u_{0}}|^{4} > 0,
\end{equation}
On the other hand, by again \eqref{ineq:gn} and \eqref{ineq:diamagnetic},
\begin{equation}\label{cv:minimizer-critical-dichotomy-3}
I_{2} \geq \lim_{n \to \infty}\left(\frac{1}{\int_{\mathbb R^{2}}|\widetilde{u_{n}}-\widetilde{u_{0}}|^{2}}-1\right) \frac{g_{*}(0)}{2} \int_{\mathbb R^{2}}|\widetilde{u_{n}}-\widetilde{u_{0}}|^{4} \geq 0.
\end{equation}
Combining \eqref{cv:minimizer-critical-dichotomy-1}, \eqref{cv:minimizer-critical-dichotomy-2} and \eqref{cv:minimizer-critical-dichotomy-3} we obtain a contraction. Therefore, we must have that $\|\widetilde{u_{0}}\|_{L^{2}}^{2} = 1$. This yields that, by Brezis--Lieb lemma \cite{BreLie-83}, up to translation and extracting a subsequence, $\widetilde{u_{n}} \to \widetilde{u_{0}}$ strongly in $L^{2}(\mathbb R^{2})$. In fact, this strong convergence holds in $L^{p}(\mathbb R^{2})$ for any $2 \leq p < \infty$, by interpolation. We then deduce from \eqref{cv:minimizer-critical-boundedness-1} that
\begin{align}\label{cv:minimizer-critical-dichotomy-4}
\frac{g_{*}(0)}{2}\int_{\mathbb R^{2}}|\widetilde{u_{0}}|^{4} = \frac{g_{*}(0)}{2}\lim_{n \to \infty}\int_{\mathbb R^{2}}|\widetilde{u_{n}}|^{4} & = \lim_{n \to \infty} \int_{\mathbb R^{2}} \left| (-{\rm i}\nabla + \beta\mathbf{A}[|\widetilde{u_{n}}|^{2}])\widetilde{u_{n}}\right|^{2} \nonumber \\
& \geq \int_{\mathbb R^{2}} \left| (-{\rm i}\nabla + \beta\mathbf{A}[|\widetilde{u_{0}}|^{2}])\widetilde{u_{0}}\right|^{2}.
\end{align}
Here we have used Fatou's lemma (see e.g., \cite[Proposition 3.8]{LunRou-15}) in the last inequality. Then \eqref{ineq:gn} yields that we must have equality in the above. This, however, is not possible. Indeed, assume on the contrary that the equality in the above holds true. Then, by \eqref{ineq:gn} and \eqref{ineq:diamagnetic}, $|\widetilde{u_{0}}| = Q_{0}$ where $Q_{0}$ is the normalized unique solution (up to translation and dilation) of the equation \eqref{eq:nls}. On the other hand, we can write $\widetilde{u_{0}}=|\widetilde{u_{0}}| e^{{\rm i} \varphi}$ where $\varphi$ is a real-valued function. Then we have
\begin{align*}
\left|(-{\rm i}\nabla + \beta \mathbf{A}[|\widetilde{u_{0}}|^{2}])\widetilde{u_{0}}\right|^{2} 
& = \left|-{\rm i}\nabla|\widetilde{u_{0}}| + |\widetilde{u_{0}}|(\nabla \varphi+\beta \mathbf{A}[\widetilde{u_{0}}|^{2}])\right|^{2} \\
& = \left|\nabla |\widetilde{u_{0}}|\right|^{2} + |\widetilde{u_{0}}|^{2} \left|\nabla \varphi + \beta \mathbf{A}[|\widetilde{u_{0}}|^{2}]\right|^{2} \\
& = \left|\nabla Q_{0}\right|^{2} + Q_{0}^{2} \left|\nabla \varphi + \beta \mathbf{A}[Q_{0}^{2}]\right|^{2}
\end{align*}
Therefore, we deduce from the equality in \eqref{cv:minimizer-critical-dichotomy-4} that we must have
$$
\int_{\mathbb R^{2}} Q_{0}^{2} \left|\nabla \varphi + \beta \mathbf{A}[Q_{0}^{2}]\right|^{2} = 0,
$$
which yields that
$$
\nabla \varphi + \beta \mathbf{A}[Q_{0}^{2}] = 0
$$
almost everywhere in $\mathbb R^{2}$. By taking skew gradient both side in the above and using \eqref{eq:self-field}, we obtain that
$$
2\pi\beta Q_{0}^{2} = 0.
$$
This is impossible whence $\beta > 0$. Therefore, the equality in \eqref{cv:minimizer-critical-dichotomy-4} cannot occur and we must have that $\{u_{n}\}$ is uniformly bounded in $H^{1}(\mathbb R^{2})$. At this stage, the proof of existence of average-field-Pauli minimizers in the critical case where $g = g_{*}(0)$ follows from the compactness of $\{u_{n}\}$ in $H^{1}\cap L_{V}^{2}(\mathbb R^{2})$.
\end{proof}

\begin{remark}
By using \eqref{ineq:diamagnetic} with $\mathbf{A}$ replaced by $\mathbf{A}_{R}$, one can still prove the existence of minimizers for the regularized minimization problem. In other words, for every $\beta, R > 0$ and $g < g_{*}(0)$, there exists $u \in H^{1}(\mathbb R^{2})$ with $\int_{\mathbb R^{2}}|u|^{2}=1$ such that $\mathcal{E}^{\rm afP}_{R,\beta,g}[u]=E^{\rm afP}_{R,\beta,g}$. It leaves an open question whether there still exists a (regularized) minimizer in the region $g_{*}(0) \leq g < g_{*}(\beta)$. The (regularized) magnetic Gagliardo--Nirenberg inequality \eqref{ineq:gn-magnetic}, where $\mathbf{A}$ replaced by $\mathbf{A}_{R}$, might not be validated anymore.
\end{remark}

\subsection{Collapse of average-field-Pauli minimizers in the weak field regime}

In this section, we establish the collapse phenomenon in the average-field-Pauli theory. We prove the asymptotic formula of the average-field-Pauli energy and it minimizers as shown in \Cref{thm:collapse-afP}. We recall that, for simplicity, we take in particular $V(x) = |x|^{s}$ for $s>0$.

\subsubsection{The sub-critical case}\label{sec:subcritical-afP}

In this subsection, we establish the behavior of $E^{\rm afP}_{\beta,g}$ given by \eqref{functional:afP} as well as its minimizers when $\beta \searrow 0$ slower than $g = g_{\beta} \nearrow g_{*}(0)$. Essentially, this is similar to the case where $\beta = 0$ which was done in \cite{GuoSei-14}. For reader's convenience, we give a detail proof in the following.

By the variational principle, we have
\begin{equation}\label{cv:energy-afP-subcritical-upper-bound}
E_{\beta,g_{\beta}}^{\rm afP} \leq \mathcal{E}^{\rm afP}_{\beta,g_{\beta}}[\ell^{-1} Q_{0}(\ell^{-1}\cdot)] = \ell^{-2}\left(1-\frac{g_{\beta}}{g_{*}(0)}\right) + \ell^{s}\frac{2}{s}\mathcal{Q}_{s} + \ell^{-2}\beta^{2}\mathcal{A}_{0}
\end{equation}
for any $\ell>0$. Since $\beta^{2} \ll g_{*}(0) - g_{\beta}$, the last term on the right hand side of \eqref{cv:energy-afP-subcritical-upper-bound} is of small order to the leading order term (the first term on the right hand side of \eqref{cv:energy-afP-subcritical-upper-bound}). By optimizing over all $\ell>0$ the first two terms on the right hand side of \eqref{cv:energy-afP-subcritical-upper-bound}, we obtain the energy upper bound in \eqref{blowup:energy-afP}. We prove the energy lower bound by justifying the convergence of minimizers in \eqref{blowup:minimizer-afP}.

Let's denote $w_{\beta}=\ell_{\beta}u_{\beta}(\ell_{\beta}\cdot)$ where $\ell_{\beta}$ is given by \eqref{length:subcritical}. Then $\|w_{\beta}\|_{L^{2}} = \|u_{\beta}\|_{L^{2}} = 1$ and
\begin{align}
\mathcal{E}^{\rm afP}_{\beta,g_{\beta}}[u_{\beta}] & = 
\ell_{\beta}^{-2}\int_{\mathbb R^{2}} \left[ \left| (-{\rm i}\nabla + \beta\mathbf{A}[|w_{\beta}|^{2}]) w_{\beta}\right|^{2} - g_{\beta} |w_{\beta}|^{4} \right] + \ell_{\beta}^{s}\int_{\mathbb R^{2}} V|w_{\beta}|^{2} \label{cv:minimizer-afP-nls-subcritical-1} \\
& \geq \ell_{\beta}^{-2}\left(1-\frac{g_{\beta}}{g_{*}(0)}\right)\int_{\mathbb R^{2}} \left|(-{\rm i}\nabla + \beta\mathbf{A}[|w_{\beta}|^{2}])w_{\beta}\right|^{2} + \ell_{\beta}^{s}\int_{\mathbb R^{2}} V|w_{\beta}|^{2} \label{cv:minimizer-afP-nls-subcritical-2}
\end{align}
where we have used \eqref{ineq:gn} and \eqref{ineq:diamagnetic}. It follows from \eqref{cv:minimizer-afP-nls-subcritical-2} and the energy upper bound in \eqref{blowup:energy-afP} that
\begin{equation}\label{cv:minimizer-afP-subcritical-boundedness}
\frac{s+2}{s}\left(1+o(1)_{\beta \to 0}\right) \geq \int_{\mathbb R^{2}} \left| (-{\rm i}\nabla + \beta\mathbf{A}[|w_{\beta}|^{2}]) w_{\beta}\right|^{2} + \mathcal{Q}_{s}^{-1}\int_{\mathbb R^{2}} V|w_{\beta}|^{2}.
\end{equation}
We deduce from \eqref{cv:minimizer-afP-subcritical-boundedness}, \eqref{ineq:hardy} and \eqref{ineq:diamagnetic} that $\{w_{\beta}\}$ is uniformly bounded in $H^{1} \cap L_{V}^{2} (\mathbb R^{2})$. After extracting a subsequence, we have $w_{\beta} \to w_{0}$ almost everywhere in $\mathbb R^{2}$, weakly in $H^{1}(\mathbb R^{2})$ and strongly in $L^{p}(\mathbb R^{2})$ for any $2 \leq p < \infty$. Furthermore,
\begin{align}\label{cv:minimizer-afP-weak-lower-semicontinuous}
\|\nabla w_{0}\|_{L^{2}} = \sup _{\|v\|_{L^{2}}=1}|\langle\nabla w_{0}, v\rangle| & = \sup_{\|v\|_{L^{2}}=1} \lim _{\beta \to 0}\left|\left\langle (-{\rm i}\nabla + \beta \mathbf{A}[|w_{\beta}|^{2}]) w_{\beta}, v\right\rangle\right| \nonumber \\
& \leq \liminf_{\beta \to 0}\left\|(-{\rm i}\nabla +\beta \mathbf{A}[|w_{\beta}|^{2}]) w_{\beta}\right\|_{L^{2}}
\end{align}
Multiplying both side of \eqref{cv:minimizer-afP-nls-subcritical-1} by $\ell_{\beta}^{2}$, dropping the nonnegative external term, using again the energy upper bound in \eqref{blowup:energy-afP} and taking the limit $\beta \to 0$, we have
$$
0 \geq \lim_{\beta \to 0}\int_{\mathbb R^{2}} \left[ \left| (-{\rm i}\nabla + \beta\mathbf{A}[|w_{\beta}|^{2}]) w_{\beta}\right|^{2} - g_{\beta} |w_{\beta}|^{4} \right] \geq \int_{\mathbb R^{2}} \left[ \left| \nabla v_{0}\right|^{2} - g_{*}(0) |v_{0}|^{4} \right] \geq 0.
$$
Here we have used \eqref{cv:minimizer-afP-weak-lower-semicontinuous} in the next-to-last inequality and \eqref{ineq:diamagnetic} in the last inequality. Therefore, we must have equality in the above and it yields that $v_{0}$ is an optimizer of \eqref{ineq:diamagnetic}. Hence
$$
w_{0}(x) = t_{0}Q_{0}(t_{0}x + x_{0})
$$
for some $t_{0}>0$, $x_{0}\in\mathbb R^{2}$ and $Q_{0}$ being the (unique) real-valued solution of \eqref{eq:nls}. We prove that $v_{0} \equiv Q_{0}$ by verifying that $t_{0}=1$ and $x_{0} \equiv 0$. Indeed, taking the limit $\beta \to 0$ in \eqref{cv:minimizer-afP-subcritical-boundedness} and using \eqref{cv:minimizer-afP-weak-lower-semicontinuous}, we get
\begin{align}\label{cv:minimizer-afP-nls-subcritical-3}
\frac{s+2}{s} & \geq \int_{\mathbb R^{2}} |\nabla w_{0}|^{2} + \mathcal{Q}_{s}^{-1}\int_{\mathbb R^{2}} |x|^{s}|w_{0}|^{2} \nonumber \\
& = t_{0}^{2}\int_{\mathbb R^{2}} |\nabla Q_{0}|^{2} + t_{0}^{-s}\mathcal{Q}_{s}^{-1}\int_{\mathbb R^{2}} |x|^{s}|Q_{0}(x-x_{0})|^{2} \nonumber \\
& \geq t_{0}^{2} + \frac{2}{s}t_{0}^{-s}.
\end{align}
Here we have used the rearrangement inequality as $Q_{0}$ is symmetric decreasing and $|x|^{s}$ is strictly symmetric increasing. It is elementary to check that
$$
\inf_{r>0} \left(r^{2} + \frac{2}{s}r^{-s}\right) = \frac{s+2}{s}
$$
and the equality is attained at $r_{0}=1$. Therefore, we must have equality in \eqref{cv:minimizer-afP-nls-subcritical-3} and it yields that $t_{0} = r_{0} = 1$ and $x_{0}=0$. This conclude the energy lower bound in \eqref{blowup:energy-afP} as well as the convergence of blow-up sequence in \eqref{blowup:minimizer-afP}.

\subsubsection{The critical case}\label{sec:critical-afP}

We now establish the behavior of $E^{\rm afP}_{\beta,g}$ given by \eqref{functional:afP} as well as its minimizers at the critical value $g = g_{*}(0)$. In which case, the existence of minimizers can be obtained easily since the critical interaction strength $g_{*}(\beta)$ is strictly larger than $g_{*}(0)$. However, the analysis on the collapse of minimizers seems complicated due to the lack of its compactness in the collapse regime. The strategy of proof follows from \cite{DinNguRou-23}.

By the variational principle, we have
\begin{equation}\label{cv:energy-afP-upper-bound-critical}
E_{\beta,g_{*}(0)}^{\rm afP} \leq \mathcal{E}^{\rm afP}_{\beta,g_{*}(0)}[\ell^{-1} Q_{0}(\ell^{-1}\cdot)] = \ell^{s}\frac{2}{s}\mathcal{Q}_{s} + \ell^{-2}\beta^{2}\mathcal{A}_{0}
\end{equation}
for any $\ell>0$. By optimizing \eqref{cv:energy-afP-upper-bound-critical} over all $\ell>0$, we obtain the energy upper bound in \eqref{blowup:energy-afP}. We prove the energy lower bound by justifying the convergence of minimizers in \eqref{blowup:minimizer-afP}. The proof is divided into several steps as in \cite{DinNguRou-23} with a few simplified arguments and is processed as follows.

\vspace{10pt}

\noindent {\bf Step 1 (Blow-up property).}  The sequence $\{u_{\beta}\}_{\beta}$ blows up in $H^{1}(\mathbb R^{2})$ in the sense that
\begin{equation}\label{blowup:property-critical}
\lim_{\beta \to 0}\|\nabla u_{\beta}\|_{L^{2}} = +\infty.
\end{equation}
Indeed, assume for contradiction that $\{u_{\beta}\}_{\beta}$ is a bounded sequence in $H^{1}(\mathbb R^{2})$. Using \eqref{ineq:gn} and \eqref{ineq:diamagnetic}, we have that $\{u_{\beta}\}_{\beta}$ is also uniformly bounded in $L_{V}^{2}$. Therefore, up to extraction a subsequence, $u_{\beta} \to u_{0}$ strongly in $L^{p}(\mathbb R^{2})$ for any $2 \leq p < \infty$. It then follows from the upper bound in \eqref{blowup:energy-afP} that
$$
0 \geq \lim_{\beta \to 0} \mathcal{E}^{\rm afP}_{\beta,g_{*}(0)}[u_{\beta}] \geq \mathcal{E}^{\rm afP}_{0,g_{*}(0)}[u_{0}] \geq E_{0,g_{*}(0)}^{\rm afP} = 0.
$$
This yields that $u_{0}$ is a minimizer for $E_{0,g_{*}(0)}^{\rm afP} = 0$ which, however, is impossible. Therefore, we must have \eqref{blowup:property-critical}.

\vspace{10pt}

\noindent {\bf Step 2 (Convergence of the blow-up sequence).} In this step, we show that there exists a sequence $\{x_{\beta}\}_{\beta} \subset \mathbb R^{2}$ such that
$$
\lim_{\beta \to 0}\varepsilon_{\beta} u_{\beta}(\varepsilon_{\beta} \cdot + x_{\beta}) = Q_{0}
$$
strongly in $H^{1}(\mathbb R^{2})$ where, by \eqref{blowup:property-critical},
\begin{equation}\label{length:blowup-critical}
\varepsilon_{\beta} := \|\nabla u_{\beta}\|_{L^{2}}^{-1} \xrightarrow{\beta \to 0} 0.
\end{equation}
Let's denote $w_{\beta}(x):= \varepsilon_{\beta} u_{\beta}(\varepsilon_{\beta} x)$. Then we have
\begin{equation}\label{cv:blow-up-critical-0}
\|w_{\beta}\|_{L^{2}} = 1 = \|\nabla w_{\beta}\|_{L^{2}}.
\end{equation}
We rewrite the average-field-Pauli energy as follows
\begin{equation}\label{energy:blow-up-critical}
E^{\rm afP}_{\beta,g_{*}(0)} = \varepsilon_{\beta}^{-2}\int_{\mathbb R^{2}} \left[\left| (-{\rm i}\nabla + \beta\mathbf{A}[|w_{\beta}|^{2}]) w_{\beta}\right|^{2} - g_{*}(0)|w_{\beta}|^{4}\right] + \varepsilon_{\beta}^{s}\int_{\mathbb R^{2}}V|w_{\beta}|^{2}.
\end{equation}
Multiplying both side of \eqref{energy:blow-up-critical} by $\varepsilon_{\beta}^{2}$, dropping the nonnegative trapping term and using the energy upper bound in \eqref{blowup:energy-afP}, we get
\begin{equation}\label{cv:blow-up-critical-1}
\lim_{\beta \to 0}\int_{\mathbb R^{2}} \left[\left| (-{\rm i}\nabla + \beta\mathbf{A}[|w_{\beta}|^{2}]) w_{\beta}\right|^{2} - g_{*}(0)|w_{\beta}|^{4}\right] = 0.
\end{equation}
Since, by \eqref{ineq:hardy},
$$
(1-C\beta)^{2} \int_{\mathbb R^{2}} |\nabla w_{\beta}|^{2} \leq  \int_{\mathbb R^{2}} \left| (-{\rm i}\nabla + \beta\mathbf{A}[|w_{\beta}|^{2}]) w_{\beta}\right|^{2} \leq (1+C\beta)^{2} \int_{\mathbb R^{2}} |\nabla w_{\beta}|^{2},
$$
we deduce from \eqref{cv:blow-up-critical-1} that
\begin{equation}\label{cv:blow-up-critical-2}
\lim_{\beta \to 0} \int_{\mathbb R^{2}} \left[\left|\nabla  w_{\beta}\right|^{2} - g_{*}(0)|w_{\beta}|^{4}\right] = 0.
\end{equation}
Looking back at \eqref{cv:blow-up-critical-0}, since $\{w_{\beta}\}_{\beta}$ is a bounded sequence in $H^{1}(\mathbb R^{2})$, we have, up to translation and extracting a subsequence, $w_{\beta} \to w_{0}$ weakly in $H^{1}(\mathbb R^{2})$ and almost everywhere in $\mathbb R^{2}$. We show that this is actually strong convergence in $H^{1}(\mathbb R^{2})$.

We first note that $w_{0} \not\equiv 0$. Otherwise, we must have that $w_{\beta} \to 0$ strongly in $L^{p}(\mathbb R^{2})$ for all $2<p<\infty$ (see \cite{Lions-84b}). This yields a contradiction, by \eqref{cv:blow-up-critical-2} and \eqref{cv:blow-up-critical-0}. On the other hand, if $0 < \|w_{0}\|_{L^{2}}^{2} < 1$ then it follows from \eqref{cv:blow-up-critical-2} and Brezis--Lieb lemma \cite{BreLie-83} that
\begin{equation}\label{cv:blow-up-critical-dichotomy-0}
0 = \int_{\mathbb R^{2}} \left[|\nabla w_{0}|^{2} - \frac{g_{*}(0)}{2}|w_{0}|^{4}\right] + \int_{\mathbb R^{2}} \left[|\nabla (w_{\beta} - w_{0})|^{2} - \frac{g_{*}(0)}{2}|w_{\beta} - w_{0}|^{4}\right] = \tilde{I}_{1} + \tilde{I}_{2}.
\end{equation}
By \eqref{ineq:gn}, we have
\begin{equation}\label{cv:blow-up-critical-dichotomy-1}
\tilde{I}_{1} \geq \left(\frac{1}{\int_{\mathbb R^{2}}|w_{0}|^{2}}-1\right) \frac{g_{*}(0)}{2} \int_{\mathbb R^{2}}|w_{0}|^{4} > 0.
\end{equation}
Furthermore,
\begin{equation}\label{cv:blow-up-critical-dichotomy-2}
\tilde{I}_{2} \geq \lim_{\beta \to 0}\int_{\mathbb R^{2}} \left(\frac{1}{\int_{\mathbb R^{2}}|w_{\beta} - w_{0}|^{2}}-1\right) \frac{g_{*}(0)}{2} \int_{\mathbb R^{2}}|w_{\beta} - w_{0}|^{4} \geq 0.
\end{equation}
Combining \eqref{cv:blow-up-critical-dichotomy-0}, \eqref{cv:blow-up-critical-dichotomy-1} and \eqref{cv:blow-up-critical-dichotomy-2} we obtain a contraction. Therefore, we must have $\|w_{0}\|_{L^{2}}^{2} = 1$. This yields that, by Brezis--Lieb lemma \cite{BreLie-83}, up to translation and extracting a subsequence, $w_{\beta} \to w_{0}$ strongly in $L^{2}(\mathbb R^{2})$. In fact, this strong convergence holds in $L^{p}(\mathbb R^{2})$ for any $2 \leq p < \infty$, by interpolation. We then deduce from Fatou's lemma and \eqref{cv:blow-up-critical-2} that
$$
\frac{g_{*}(0)}{2}\int_{\mathbb R^{2}}|w_{0}|^{4} = \lim_{\beta \to 0} \frac{g_{*}(0)}{2}\int_{\mathbb R^{2}}|w_{\beta}|^{4} = \lim_{\beta \to 0} \int_{\mathbb R^{2}} \left|\nabla w_{\beta}\right|^{2} \geq \int_{\mathbb R^{2}} \left|\nabla w_{0}\right|^{2}.
$$
The above yields that we must have $w_{\beta}\to w_{0}$ strongly in $H^{1}(\mathbb R^{2})$, by \eqref{ineq:gn}. Moreover, $w_{0}$ is an optimizer of the standard Gagliardo--Nirenberg inequality \eqref{ineq:gn}. By the uniqueness (up to translations and dilations) of optimizers for  \eqref{ineq:gn}, there exist $\lambda>0$ and $x_{0} \in \mathbb R^{2}$ such that $w_{0}(x) = \lambda Q_{0}(\lambda(x+x_{0}))$. Since 
$\|\nabla w_{0}\|_{L^{2}}^{2} = 1$, we must have $\lambda=1$. Again, by the uniqueness of $Q_{0}$, we conclude that passing to a subsequence is unnecessary.

\vspace{10pt}

\noindent {\bf Step 3. Smallness of the imaginary part.} We write
$$
w_{\beta}(x) = q_{\beta}(x) + {\rm i} r_{\beta}(x)
$$
with $q_{\beta}$ and $r_{\beta}$ the real and imaginary parts of $w_{\beta}$, respectively. Then $q_{\beta} \to Q_{0}$ and $r_{\beta} \to 0$ strongly in $H^{1}(\mathbb R^{2})$ since $w_{\beta} \to Q_{0}$ strongly in $H^{1}(\mathbb R^{2})$. Furthermore, we choose the phase of $w_{\beta}$ such that $w_{\beta}$ is the closest to its limit
$$
\left\|w_{\beta}-Q_{0}\right\|_{L^{2}}=\min_{\theta \in [0,2\pi)} \left\|e^{{\rm i} \theta} v_{\theta}-Q_{0}\right\|_{L^{2}} .
$$
This gives the following orthogonality condition
\begin{equation} \label{ortho-cond}
\int_{\mathbb R^{2}} Q_{0}r_{\beta} =0.
\end{equation}

We observe that, by using the integration by part and the fact that $\nabla \cdot \nabla^{\perp} = 0$,
\begin{align*}
\int_{\mathbb R^{2}} \mathbf{A}[|w_{\beta}|^{2}] \cdot \mathbf{J}[w_{\beta}] = \int_{\mathbb R^{2}} \mathbf{A}[|w_{\beta}|^{2}] \left(q_{\beta} \nabla r_{\beta} - r_{\beta} \nabla q_{\beta}\right) & = 2\int_{\mathbb R^{2}} \mathbf{A}[|w_{\beta}|^{2}] q_{\beta} \nabla r_{\beta} \\
& = -2\int_{\mathbb R^{2}} \mathbf{A}[|w_{\beta}|^{2}] r_{\beta} \nabla q_{\beta}.
\end{align*}
This yields that
$$
\left|\int_{\mathbb R^{2}} \mathbf{A}[|w_{\beta}|^{2}] \cdot \mathbf{J}[w_{\beta}]\right| \leq 2\|\mathbf{A}[|w_{\beta}|^{2}] q_{\beta}\|_{L^{2}} \|\nabla r_{\beta}\|_{L^{2}} \leq C \|\nabla r_{\beta}\|_{L^{2}}.
$$
Here we have used the fact that $\mathbf{A}[|w_{\beta}|^{2}] q_{\beta}$ is bounded uniformly in $L^{2}(\mathbb R^{2})$ since $w_{\beta} \to Q_{0}$ strongly in $H^{1}(\mathbb R^{2})$. We deduce from the above that
\begin{align*}
\varepsilon_{\beta}^{2} E^{\rm afP}_{\beta,g_{*}(0)} \geq \int_{\mathbb R^{2}}|\nabla q_{\beta}|^{2} +|\nabla r_{\beta}|^{2} -\frac{g_{*}(0)}{2} (q_{\beta}^{4} + r_{\beta}^{4} + 2q_{\beta}^{2} r_{\beta}^{2}) - C \beta \|\nabla r_{\beta}\|_{L^{2}}.
\end{align*}
One one hand, we have
\begin{align*}
\frac{g_{*}(0)}{2} \int_{\mathbb R^{2}}(r_{\beta}^{4} + 2q_{\beta}^{2} r_{\beta}^{2}) \leq g_{*}(0) \int_{\mathbb R^{2}}|w_{\beta}|^{2} r_{\beta}^{2} & = \int_{\mathbb R^{2}}Q^{2} r_{\beta}^{2} + g_{*}(0) \int(|w_{\beta}|^{2}-Q_{0}^{2}) r_{\beta}^{2}
\\ & = \int_{\mathbb R^{2}}Q^{2} r_{\beta}^{2} + o(1)\|r_{\beta}\|^{2}_{H^{1}}.
\end{align*}
Here we have used Cauchy--Schwarz inequality to obtain that
\begin{align*}
\left|\int_{\mathbb R^{2}}(|w_{\beta}|^{2}-Q_{0}^{2})r_{\beta}^{2}\right| \leq \||w_{\beta}|^{2}-Q_{0}^{2}\|_{L^{2}} \|r_{\beta}\|^{2}_{L^{4}} & \leq C\||w_{\beta}|^{2}-Q_{0}^{2}\|_{L^{2}} \|r_{\beta}\|^{2}_{H^{1}} \\
& \leq \||w_{\beta}|+Q_{0}\|_{L^{4}} \||w_{\beta}|-Q_{0}\|_{L^{4}} \|r_{\beta}\|^{2}_{H^{1}}
\end{align*}
as well as the strong convergence $|w_{\beta}| \to Q_{0}$ in $H^{1}(\mathbb R^{2})$. On the other hand, by the standard Gagliardo--Nirenberg inequality \eqref{ineq:gn}, the strong convergence $q_{\beta} \to Q_{0}$ in $H^{1}(\mathbb R^{2})$ and the fact that $\|\nabla Q_{0}\|^{2}_{L^{2}}=1$ as well as $\|q_{\beta}\|^{2}_{L^{2}} + \|r_{\beta}\|^{2}_{L^{2}} = \|w_{\beta}\|^{2}_{L^{2}} =1$, we have
$$
\int_{\mathbb R^{2}}|\nabla q_{\beta}|^{2} - \frac{g_{*}(0)}{2} q_{\beta}^{4}  \geq \|\nabla q_{\beta}\|^{2}_{L^{2}}(1-\|q_{\beta}\|^{2}_{L^{2}}) = (1+o(1)_{\beta \to 0}) \|r_{\beta}\|^{2}_{L^{2}}.
$$
Collecting all the above estimates, we get
\begin{align}\label{est-rn}
\varepsilon_{\beta}^{2} E^{\rm afP}_{\beta,g_{*}(0)} &\geq \int_{\mathbb R^{2}} \left[|\nabla r_{\beta}|^{2} - Q^{2} r_{\beta}^{2} +r_{\beta}^{2}\right] + o(1) \|r_{\beta}\|^{2}_{H^{1}} - C\beta\|\nabla r_{\beta}\|_{L^{2}} \nonumber \\
&= \langle r_{\beta}, \mathcal{L} r_{\beta}\rangle + \|r_{\beta}\|^{2}_{H^{1}} o(1)_{\beta \to 0} - C\beta\|\nabla r_{\beta}\|_{L^{2}},
\end{align}
where we denoted $\mathcal{L} := -\Delta - Q^{2} +1$. 

We now use the non-degeneracy property of $Q$ to deal with the term of $\mathcal{L}$ in the above. It is well-known (see \cite[Theorem 11.8 and Corrollary 11.9]{LieLos-01}) that $Q$ is the first eigenfunction of $\mathcal{L}$ and the corresponding eigenvalue $\lambda_{1} = 0$ is non-degenerate. In particular, we have
$$
\langle \mathcal{L} u, u\rangle \geq \lambda_2 \|u\|^{2}_{L^{2}}
$$
for all $u$ orthogonal to $Q$, where $\lambda_2>0$ is the second eigenvalue of $\mathcal{L}$. This together with the fact that
$$
\langle \mathcal{L} u, u\rangle \geq \|u\|^{2}_{H^{1}} -\|Q\|^{2}_{L^\infty} \|u\|^{2}_{L^{2}}
$$
yield the estimate
\begin{equation}\label{est-rn-0}
\langle \mathcal{L} u, u\rangle \geq C \|u\|^{2}_{H^{1}}
\end{equation}
for some constant $C>0$ and all $u$ orthogonal to $Q$. It follows from \eqref{est-rn}, \eqref{est-rn-0} and the orthogonality condition \eqref{ortho-cond} that
$$
\varepsilon_{\beta}^{2} E^{\rm afP}_{\beta,g_{*}(0)} \geq C \|r_{\beta}\|^{2}_{H^{1}} - C\beta\|\nabla r_{\beta}\|_{L^{2}}.
$$
This implies that
\begin{align} \label{est-rn-1}
C\|r_{\beta}\|^{2}_{H^{1}} \leq \varepsilon_{\beta}^{2} E^{\rm afP}_{\beta,g_{*}(0)} + \beta^{2}.
\end{align}
On the other hand, from \eqref{ineq:gn} and the energy upper bound in \eqref{blowup:energy-afP} we have
$$
C\beta^{\frac{2s}{s+2}} \geq E^{\rm afP}_{\beta,g_{*}(0)} = \mathcal{E}^{\rm afP}_{\beta,g_{*}(0)}[u_{\beta}] \geq \int_{\mathbb R^{2}}V|u_{\beta}|^{2} = \varepsilon_{\beta}^{s}\int_{\mathbb R^{2}}V|w_{\beta}|^{2}.
$$
We deduce from the above and the strong convergence $w_{\beta} \to Q_{0}$ in $H^{1}(\mathbb R^{2})$ that
\begin{align}\label{est-vareps-n}
\varepsilon_{\beta} \leq C \beta^{\frac{2}{s+2}}.
\end{align}
Putting together \eqref{est-rn-1}, \eqref{est-vareps-n} and the energy upper bound in \eqref{blowup:energy-afP}, we obtain
\begin{align}\label{est:imaginary}
\|r_{\beta}\|_{H^{1}} \leq C \beta.
\end{align}

\vspace{10pt}

\noindent {\bf Step 4. Identifying the blow-up limit.} By the definition of $\mathbf{A}$ in \eqref{potential:magnetic} and of $\mathbf{J}$ in \eqref{potential:current}, we have the following decomposition
\begin{equation}\label{est:cross-term-0}
\int_{\mathbb R^{2}} \mathbf{A}[|w_{\beta}|^{2}] \cdot \mathbf{J}[w_{\beta}] = \int_{\mathbb R^{2}} \mathbf{A}[|q_{\beta}|^{2}] q_{\beta} \nabla r_{\beta} + \int_{\mathbb R^{2}} \mathbf{A}[|r_{\beta}|^{2}] q_{\beta} \nabla r_{\beta}.
\end{equation}
It is worth noting that $w_{0}$, $Q_{0}$ as well as $w_{0}*Q_{0}$ are radial symmetric. Using integration by part, we have
$$
\int_{\mathbb R^{2}} \mathbf{A}[Q_{0}|^{2}] Q_{0} \nabla r_{\beta} = -\int_{\mathbb R^{2}} \mathbf{A}[Q_{0}|^{2}] r_{\beta} \nabla Q_{0} = 0.
$$
Therefore, one can estimate the first term on the right hand side of \eqref{est:cross-term-0} as follow
\begin{align}\label{est:cross-term-1}
\left|\int_{\mathbb R^{2}} \mathbf{A}[|q_{\beta}|^{2}] q_{\beta} \nabla r_{\beta}\right| & = \left|\int_{\mathbb R^{2}} \left(\mathbf{A}[|q_{\beta}|^{2}] q_{\beta} - \mathbf{A}[Q_{0}^{2}] Q_{0}\right) \nabla r_{\beta}\right| \nonumber \\
& \leq \|\mathbf{A}[|q_{\beta}|^{2}] q_{\beta} - \mathbf{A}[Q_{0}^{2}] Q_{0}\|_{L^{2}} \|\nabla r_{\beta}\|_{L^{2}} \nonumber \\
& = o(\beta)_{\beta \to 0}
\end{align}
where we have used \eqref{est:imaginary} and the strong convergence $q_{\beta} \to Q_{0}$ in $H^{1}(\mathbb R^{2})$. For the second term on the right hand side of \eqref{est:cross-term-0}, we use again \eqref{est:imaginary} and \eqref{ineq:hardy} to obtain that
\begin{align}\label{est:cross-term-2}
\left|\int_{\mathbb R^{2}} \mathbf{A}[|r_{\beta}|^{2}] q_{\beta} \nabla r_{\beta}\right| = \left|\int_{\mathbb R^{2}} \mathbf{A}[|r_{\beta}|^{2}] r_{\beta} \nabla q_{\beta}\right| & \leq \|\mathbf{A}[|r_{\beta}|^{2}] r_{\beta}\|_{L^{2}} \|\nabla q_{\beta}\|_{L^{2}} \nonumber \\
& \leq C\|r_{\beta}\|_{L^{2}}^{2} \|\nabla r_{\beta}\|_{L^{2}} \|\nabla q_{\beta}\|_{L^{2}} \nonumber \\
& \leq C\beta^{3}.
\end{align}
Combining \eqref{est:cross-term-0}, \eqref{est:cross-term-1} and \eqref{est:cross-term-2}, we arrive at 
\begin{equation}\label{est:cross-term}
\int_{\mathbb R^{2}} \mathbf{A}[|w_{\beta}|^{2}] \cdot \mathbf{J}[w_{\beta}] = o(\beta)_{\beta \to 0}.
\end{equation}

We now denote
$$
\kappa_{\beta} = \frac{\varepsilon_{\beta}}{\beta^{\frac{2}{s+2}}}
$$
which is bounded uniformly from above, by \eqref{est:cross-term-0}. Furthermore, it follows from \eqref{energy:blow-up-critical}, \eqref{ineq:gn} and the energy upper bound in \eqref{blowup:energy-afP} that
\begin{equation}\label{lower-bound:energy-critical}
\frac{s+2}{s}\beta^{\frac{2s}{s+2}} \mathcal{Q}_{s}^{\frac{2}{s+2}} \mathcal{A}_{0}^{\frac{s}{s+2}} \geq \varepsilon_{\beta}^{-2}\int_{\mathbb R^{2}} \left[2\beta\mathbf{A}[|w_{\beta}|^{2}] \cdot \mathbf{J}[w_{\beta}] + \beta^{2} |\mathbf{A}[|w_{\beta}|^{2}]|^{2}|w_{\beta}|^{2} \right] + \varepsilon_{\beta}^{s}\int_{\mathbb R^{2}}V|w_{\beta}|^{2}.
\end{equation}
Multiplying both sides of \eqref{lower-bound:energy-critical} by $\varepsilon_{\beta}^{2} {\beta}^{-2}$, dropping the nonnegative trapping term and using \eqref{est:cross-term}, we must have that $\kappa_{\beta}$ is bounded away from zero in the limit $\beta \to 0$. Hence
$$
\lim_{\beta \to 0}\kappa_{\beta} = \kappa_{0} > 0.
$$
Next, we determine $\kappa_{0}$ by calculating the matching energy lower bound in \eqref{blowup:energy-afP}. Multiplying both sides of \eqref{lower-bound:energy-critical} by $\beta^{-\frac{2s}{s+2}}$, using \eqref{est:cross-term} and taking the limit $\beta \to 0$, we get
\begin{align}\label{cv:minimizer-afP-nls-critical}
\frac{s+2}{s} \mathcal{Q}_{s}^{\frac{2}{s+2}} \mathcal{A}_{0}^{\frac{s}{s+2}} & \geq \kappa_{0}^{-2} \int_{\mathbb R^{2}}\mathbf{A}[|w_{0}|^{2}]|^{2}|w_{0}|^{2} + \kappa_{0}^{s}\int_{\mathbb R^{2}}|x|^{s}|w_{0}|^{2} \nonumber \\
& = \kappa_{0}^{-2} \mathcal{A}_{0} + \kappa_{0}^{s}\int_{\mathbb R^{2}}|x-x_{0}|^{2}|Q_{0}|^{2} \nonumber \\
& \geq \kappa_{0}^{-2} \mathcal{A}_{0} + \kappa_{0}^{s}\frac{2}{s}\mathcal{Q}_{s}.
\end{align}
Here we have used the rearrangement inequality as $Q_{0}$ is symmetric decreasing and $|x|^{s}$ is strictly symmetric increasing. It is elementary to check that
$$
\inf_{r>0}\left(r^{-2} \mathcal{A}_{0} + r^{s}\frac{2}{s}\mathcal{Q}_{s}\right) = \frac{s+2}{s} \mathcal{Q}_{s}^{\frac{2}{s+2}} \mathcal{A}_{0}^{\frac{s}{s+2}}
$$
and the equality is attained at
$$
r_{0} = \mathcal{Q}_{s}^{-\frac{1}{s+2}} \mathcal{A}_{0}^{\frac{1}{s+2}}.
$$
Therefore, we must have equality in \eqref{cv:minimizer-afP-nls-critical} and it yields that $\kappa_{0}=r_{0}$ and $x_{0}=0$. This conclude the energy lower bound in \eqref{blowup:energy-afP} as well as the convergence of blow-up sequence in \eqref{blowup:minimizer-afP}.

\subsubsection{The super-critical case}\label{sec:supercritical-afP}

The proof of \Cref{thm:collapse-afP} in the super-critical regime borrows arguments in \Cref{sec:subcritical-afP,sec:critical-afP}. For reader's convenience, we give some important steps. As usual, by the variational principle, we have
\begin{equation}\label{cv:energy-afP-supercritical-upper-bound}
E_{\beta,g_{\beta}}^{\rm afP} \leq \mathcal{E}^{\rm afP}_{\beta,g_{\beta}}[\ell^{-1} Q_{0}(\ell^{-1}\cdot)] = \ell^{-2}\left(1-\frac{g_{\beta}}{g_{*}(0)} + \beta^{2}\mathcal{A}_{0}\right) + \ell^{s}\frac{2}{s}\mathcal{Q}_{s}
\end{equation}
for any $\ell>0$. Under the assumption $g_{\beta} = g_{*}(0) + \tau_{0}\beta^{2}$ with $0 < \tau_{0} < \tau_{*} \leq g_{*}(0)\mathcal{A}_{0}$, by optimizing over all $\ell>0$ of \eqref{cv:energy-afP-supercritical-upper-bound}, we obtain the energy upper bound in \eqref{blowup:energy-afP}. As usual, we prove the energy lower bound by justifying the convergence of minimizers in \eqref{blowup:minimizer-afP}.

Let's denote $w_{\beta}=\ell_{\beta}u_{\beta}(\ell_{\beta}\cdot)$ where $\ell_{\beta}$ is given by \eqref{length:supercritical}. Then $\|w_{\beta}\|_{L^{2}} = \|u_{\beta}\|_{L^{2}} = 1$ and, by \eqref{ineq:gn-magnetic},
\begin{align}
\mathcal{E}^{\rm afP}_{\beta,g_{\beta}}[u_{\beta}] & = 
\ell_{\beta}^{-2}\int_{\mathbb R^{2}} \left[ \left| (-{\rm i}\nabla + \beta\mathbf{A}[|w_{\beta}|^{2}]) w_{\beta}\right|^{2} - g_{\beta} |w_{\beta}|^{4} \right] + \ell_{\beta}^{s}\int_{\mathbb R^{2}} V|w_{\beta}|^{2} \label{cv:minimizer-afP-nls-supercritical-1} \\
& \geq \ell_{\beta}^{-2}\left(1-\frac{g_{\beta}}{g_{*}(\beta)}\right)\int_{\mathbb R^{2}} \left|(-{\rm i}\nabla + \beta\mathbf{A}[|w_{\beta}|^{2}])w_{\beta}\right|^{2} + \ell_{\beta}^{s}\int_{\mathbb R^{2}} V|w_{\beta}|^{2} \label{cv:minimizer-afP-nls-supercritical-2}
\end{align}
It follows from \eqref{cv:minimizer-afP-nls-supercritical-2}, \eqref{ineq:gn-weak-field} and the energy upper bound in \eqref{blowup:energy-afP} that
\begin{align}\label{cv:minimizer-afP-supercritical-boundedness}
\frac{s+2}{s}\left(1+o(1)_{\beta \to 0}\right) \geq{} & \frac{\tau_{*}-\tau_{0}}{g_{*}(\beta)}\left(-\frac{\tau_{0}}{g_{*}(0)}+\mathcal{A}_{0}\right)^{-1}\int_{\mathbb R^{2}} \left| (-{\rm i}\nabla + \beta\mathbf{A}[|w_{\beta}|^{2}]) w_{\beta}\right|^{2} \nonumber \\
& + \mathcal{Q}_{s}^{-1}\int_{\mathbb R^{2}} V|w_{\beta}|^{2}.
\end{align}
We deduce from \eqref{cv:minimizer-afP-supercritical-boundedness}, \eqref{ineq:hardy} and \eqref{ineq:diamagnetic} that $\{w_{\beta}\}$ is uniformly bounded in $H^{1} \cap L_{V}^{2} (\mathbb R^{2})$. Similarly to the arguments in \Cref{sec:subcritical-afP}, we deduce that, after extracting a subsequence, $w_{\beta} \to w_{0}$ weakly in $H^{1}(\mathbb R^{2})$, almost everywhere in $\mathbb R^{2}$, and strongly in $L^{p}(\mathbb R^{2})$ for any $2 \leq p < \infty$. Furthermore,
$$
w_{0}(x) = t_{0}Q_{0}(t_{0}x + x_{0})
$$
for some $t_{0}>0$, $x_{0}\in\mathbb R^{2}$ and $Q_{0}$ being the (unique) real-valued solution of \eqref{eq:nls}. Finally, we again need to prove that $w_{0} \equiv Q_{0}$, by verifying that $t_{0}=1$ and $x_{0} \equiv 0$. This requires to show that the cross term \eqref{est:cross-term-0} is of small order to the leading order term in the collapse regime. For this purpose, we rewrite the energy as follows
\begin{align*}
& \varepsilon_{\beta}^{-2}\int_{\mathbb R^{2}} \left[\left| (-{\rm i}\nabla + \beta\mathbf{A}[|w_{\beta}|^{2}]) w_{\beta}\right|^{2} - g_{*}(0)|w_{\beta}|^{4}\right] + \varepsilon_{\beta}^{s}\int_{\mathbb R^{2}}V|w_{\beta}|^{2} \\
& = E^{\rm afP}_{\beta,g_{\beta}} + \frac{g_{\beta} - g_{*}(0)}{2} \ell_{\beta}^{-2}\int_{\mathbb R^{2}} |w_{\beta}|^{4} \\
& = \left( C + o(1)_{\beta \to 0} \right) E^{\rm afP}_{\beta,g_{\beta}}
\end{align*}
where we have used the strong convergence $w_{\beta} \to w_{0}$ in $L^{p}(\mathbb R^{2})$, for any $2 \leq p < \infty$. At this stage, the arguments are exactly the same as in \Cref{sec:critical-afP} where we identified directly $\varepsilon_{\beta}$ in \eqref{length:blowup-critical} with $\ell_{\beta}$ in \eqref{length:critical}. We omit details for brevity.

\section{Condensation and collapse in the many-body theory}\label{sec:many-body}

The goal of this section is to prove \Cref{thm:many-body}. It is worth noting that the (regularized) average-field-Hartree theory interpolates between the many-body and the average-field-Pauli theories. The strategy of the proof is to compare the (regularized) quantum energy $E^{\rm afQM}_{N^{\nu},R,\beta,g}$ in \eqref{energy:qm} and the (regularized) average-field-Hartree energy $E^{\rm afH}_{N^{\nu},R,\beta,g}$ in \eqref{energy:afH}. Therefore, it is necessary to make a relation between the average-field-Hartree and the average-field-Pauli theories. This might not difficult in this sub-critical regime where $0<g<g_{*}(0)$ since the (regularized) magnetic potential can be neglected. However, this is non-trivial in the critical and super-critical.

\subsection{Condensation and collapse in the average-field-Hartree theory}

In order to study the collapse phenomenon in the average-field-Hartree theory, we need the following lemma which relies on the convergence (rate) of the two-body interactions and the (regularized) kinetic energy with magnetic field.

\begin{lemma}\label{lem:afH-afP-interaction}
Let $\beta,R,\nu>0$ and $W$ be satisfying \eqref{condition:potential-two-body}. Then for any $u\in H^{1}(\mathbb R^{2})$, we have
\begin{equation}\label{cv-ineq:afH-afP-2body}
0 \leq \|u\|_{L^{4}}^{4} - \iint_{\mathbb R^{4}} W_{N^\nu}(x-y)|u(x)|^{2}|u(y)|^{2}{\rm d}x{\rm d}y \leq \|u\|_{H^{1}}^{4} o(1)_{N\to\infty}.
\end{equation}
Assume in addition that $xW \in L^1(\mathbb R^{2})$ then
\begin{equation}\label{cv-rate:afH-afP-2body}
0 \leq \|u\|_{L^{4}}^{4} - \iint_{\mathbb R^{4}} W_{N^\nu}(x-y)|u(x)|^{2}|u(y)|^{2}{\rm d}x{\rm d}y \leq CN^{-\nu} \|xW\|_{L^{1}} \|\nabla u\|_{L^{2}}^{3}.
\end{equation}
Furthermore, the non-interacting functional $\mathcal{E}^{\rm afP}_{R,\beta,0}$ converges pointwise to $\mathcal{E}^{\rm afP}_{\beta,0}$ as $R \to 0$ and we have
\begin{equation}\label{cv:afH-afP-curvature}
\left| \mathcal{E}^{\rm afP}_{R,\beta,0}[u] - \mathcal{E}^{\rm afP}_{\beta,0}[u] \right| \leq C\beta R \|\nabla u\|_{L^{2}}^{3}.
\end{equation}
\end{lemma}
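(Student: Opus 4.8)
I would prove the three assertions in the order stated, since each feeds the next; throughout put $\rho:=|u|^{2}$, which lies in $L^{1}\cap L^{2}(\mathbb R^{2})$ because $H^{1}(\mathbb R^{2})\hookrightarrow L^{4}(\mathbb R^{2})$. The left inequalities in \eqref{cv-ineq:afH-afP-2body} and \eqref{cv-rate:afH-afP-2body} come from a symmetrisation: since $\int_{\mathbb R^{2}}W_{N^{\nu}}=1$ and $W_{N^{\nu}}$ is even, $\|u\|_{L^{4}}^{4}-\iint W_{N^{\nu}}(x-y)\rho(x)\rho(y)=\iint W_{N^{\nu}}(x-y)\big(\rho(x)^{2}-\rho(x)\rho(y)\big)=\tfrac12\iint W_{N^{\nu}}(x-y)\big(\rho(x)-\rho(y)\big)^{2}\geq0$. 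For the $o(1)$ upper bound in \eqref{cv-ineq:afH-afP-2body} I would write the difference as $\langle\rho,\rho-W_{N^{\nu}}*\rho\rangle_{L^{2}}\leq\|\rho\|_{L^{2}}\,\|\rho-W_{N^{\nu}}*\rho\|_{L^{2}}$ and, for a general $W\geq0$ with $\int W=1$, reduce to the quantitative case: approximate $W$ in $L^{1}$ by a compactly supported $W^{\varepsilon}\geq0$ of the same mass, bound $\|(W^{\varepsilon}_{N^{\nu}}-W_{N^{\nu}})*\rho\|_{L^{2}}\leq\|W^{\varepsilon}-W\|_{L^{1}}\|\rho\|_{L^{2}}$ by Young, apply the rate below to $W^{\varepsilon}$, and send $N\to\infty$ then $\varepsilon\to0$; since $\|\rho\|_{L^{2}}^{2}=\|u\|_{L^{4}}^{4}\leq C\|u\|_{L^{2}}^{2}\|\nabla u\|_{L^{2}}^{2}\leq C\|u\|_{H^{1}}^{4}$ by \eqref{ineq:gn} this gives the stated prefactor.

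For the quantitative estimate \eqref{cv-rate:afH-afP-2body}, assume $xW\in L^{1}$. Using $\int W_{N^{\nu}}=1$ and a first-order Taylor expansion, $(\rho-W_{N^{\nu}}*\rho)(x)=\int W_{N^{\nu}}(z)\int_{0}^{1}z\cdot\nabla\rho(x-sz)\,ds\,dz$. Pairing with $\rho$, applying Fubini and Hölder, and using $\int|z|W_{N^{\nu}}(z)\,dz=N^{-\nu}\|xW\|_{L^{1}}$ yields $\big|\langle\rho,\rho-W_{N^{\nu}}*\rho\rangle\big|\leq N^{-\nu}\|xW\|_{L^{1}}\,\|\rho\|_{L^{4}}\,\|\nabla\rho\|_{L^{4/3}}$. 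Now $\nabla\rho=2\Re(\bar u\nabla u)$, so $|\nabla\rho|\leq2|u||\nabla u|$ and $\|\nabla\rho\|_{L^{4/3}}\leq2\|u\|_{L^{4}}\|\nabla u\|_{L^{2}}$ (in particular $\rho\in W^{1,4/3}(\mathbb R^{2})$), while $\|\rho\|_{L^{4}}=\|u\|_{L^{8}}^{2}$. The Gagliardo--Nirenberg inequalities $\|u\|_{L^{8}}^{2}\leq C\|u\|_{L^{2}}^{1/2}\|\nabla u\|_{L^{2}}^{3/2}$ and $\|u\|_{L^{4}}^{2}\leq C\|u\|_{L^{2}}\|\nabla u\|_{L^{2}}$ collapse the right-hand side to $CN^{-\nu}\|xW\|_{L^{1}}\|u\|_{L^{2}}\|\nabla u\|_{L^{2}}^{3}$, which is \eqref{cv-rate:afH-afP-2body} with $\|u\|_{L^{2}}=1$ (as in all the applications in the paper).

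For the regularisation estimate \eqref{cv:afH-afP-curvature}, the terms $V|u|^{2}$ cancel in $\mathcal E^{\rm afP}_{R,\beta,0}[u]-\mathcal E^{\rm afP}_{\beta,0}[u]$, and by \eqref{functional:afP-crossterms} what remains is $2\beta\int(\mathbf A_{R}-\mathbf A)[|u|^{2}]\cdot\mathbf J[u]+\beta^{2}\int\big(|\mathbf A_{R}[|u|^{2}]|^{2}-|\mathbf A[|u|^{2}]|^{2}\big)|u|^{2}$, with $\mathbf J[u]$ as in \eqref{potential:current}. Writing $\mathbf A_{R}[\varrho]=\mathbf A[\chi_{R}*\varrho]$ with $\chi_{R}$ the (normalised) mollifier behind \eqref{potential:magnetic-regularized}, I would use the regularity gain $\|\chi_{R}*\varrho-\varrho\|_{L^{p}}\leq CR\|\nabla\varrho\|_{L^{p}}$ (first-order Taylor again) together with the $L^{p}$--$L^{q}$ mapping of convolution with $\nabla^{\perp}w_{0}$ — the same ingredient underlying \eqref{ineq:hardy} — to obtain a ``difference Hardy'' bound $\|(\mathbf A_{R}-\mathbf A)[|u|^{2}]\,u\|_{L^{2}}\leq CR\|u\|_{L^{2}}\|\nabla u\|_{L^{2}}^{2}$. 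The $\beta$-term is then controlled by Cauchy--Schwarz: $\big|\int(\mathbf A_{R}-\mathbf A)[|u|^{2}]\cdot\mathbf J[u]\big|\leq\|(\mathbf A_{R}-\mathbf A)[|u|^{2}]u\|_{L^{2}}\|\nabla u\|_{L^{2}}\leq CR\|u\|_{L^{2}}\|\nabla u\|_{L^{2}}^{3}$; for the $\beta^{2}$-term I would factor $|\mathbf A_{R}|^{2}-|\mathbf A|^{2}=(\mathbf A_{R}-\mathbf A)\cdot(\mathbf A_{R}+\mathbf A)$ and combine the difference Hardy bound with \eqref{ineq:hardy} for $\mathbf A$ and for $\mathbf A_{R}$ to get $\leq C\beta^{2}R\|u\|_{L^{2}}^{3}\|\nabla u\|_{L^{2}}^{3}$. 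Summing, $\big|\mathcal E^{\rm afP}_{R,\beta,0}[u]-\mathcal E^{\rm afP}_{\beta,0}[u]\big|\leq C(\beta+\beta^{2})R\,\|u\|_{L^{2}}^{3}\|\nabla u\|_{L^{2}}^{3}$, which is \eqref{cv:afH-afP-curvature} for $\|u\|_{L^{2}}=1$ and $\beta$ in a bounded range (the only regime used, since $\beta\to0$ throughout); the asserted pointwise convergence $\mathcal E^{\rm afP}_{R,\beta,0}\to\mathcal E^{\rm afP}_{\beta,0}$ as $R\to0$ follows immediately.

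\emph{Main obstacle.} The genuinely delicate point is the difference Hardy estimate carrying a \emph{full} power of $R$: estimating the kernel difference $\nabla^{\perp}w_{R}-\nabla^{\perp}w_{0}$ directly in $L^{p}$ loses half a power and only gives $R^{1/2}$. The gain comes from not discarding a derivative — writing $\chi_{R}*\varrho-\varrho=\nabla\cdot G_{R}$ with $\|G_{R}\|_{L^{p}}\leq CR\|\varrho\|_{L^{p}}$, so that each $\partial_{j}\nabla^{\perp}w_{0}$ becomes a Calderón--Zygmund kernel (homogeneous of degree $-2$ with vanishing mean on circles) and is therefore bounded on $L^{p}$ for $1<p<\infty$. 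Everything else is bookkeeping with Hölder, Young and Gagliardo--Nirenberg.
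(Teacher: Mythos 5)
Your proof is correct, but it takes a genuinely different route from the paper for the simple reason that the paper does not prove \Cref{lem:afH-afP-interaction} at all: its ``proof'' consists of citing \cite{LewNamRou-17,LewNamRou-17-proc} for \eqref{cv-ineq:afH-afP-2body}--\eqref{cv-rate:afH-afP-2body} and \cite[Proposition 3.9]{LunRou-15} for \eqref{cv:afH-afP-curvature}, whereas you supply a self-contained argument. Your treatment of the interaction terms (symmetrisation for the sign, first-order Taylor expansion plus H\"older and Gagliardo--Nirenberg for the quantitative rate, truncation of $W$ plus a diagonal argument for the qualitative $o(1)$ bound) is sound and is essentially the mechanism behind the cited results; moreover, the explicit $\|u\|_{L^{2}}$ factors you keep are genuinely needed, since \eqref{cv-rate:afH-afP-2body} and \eqref{cv:afH-afP-curvature} as stated are not scale-invariant and are only ever applied to $L^{2}$-normalized $u$ (and bounded $\beta$), as you correctly flag. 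For \eqref{cv:afH-afP-curvature} your divergence-plus-Calder\'on--Zygmund argument is valid (with the small caveat that $\partial_{j}\nabla^{\perp}w_{0}$ is a principal-value kernel plus a multiple of $\delta_{0}$, both of which are $L^{p}$-bounded, so the conclusion stands), but the ``main obstacle'' you describe is not actually an obstacle: since $w_{R}=w_{0}*\mathbbm{1}_{B(0,R)}/(\pi R^{2})$, Newton's theorem gives $\nabla w_{R}=\nabla w_{0}$ outside $B(0,R)$ and $|\nabla w_{R}-\nabla w_{0}|\leq 2/|x|$ inside, so the kernel difference satisfies $\|\nabla^{\perp}w_{R}-\nabla^{\perp}w_{0}\|_{L^{1}}\leq CR$, and Young's inequality alone already yields $\|(\mathbf{A}_{R}-\mathbf{A})[\varrho]\|_{L^{p}}\leq CR\|\varrho\|_{L^{p}}$ with the full power of $R$. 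This elementary kernel estimate is what underlies \cite[Proposition 3.9]{LunRou-15} and gives a much shorter proof of your ``difference Hardy'' bound; your singular-integral route buys generality (it would be needed for a smearing whose kernel difference is not compactly supported), but here it is heavier machinery than required.
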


\begin{proof}
The proof of \eqref{cv-ineq:afH-afP-2body} and \eqref{cv-rate:afH-afP-2body} can be found in \cite{LewNamRou-17,LewNamRou-17-proc}. On the other hand, \eqref{cv:afH-afP-curvature} follows from \cite[Proposition 3.9]{LunRou-15}.
\end{proof}

\begin{theorem}[Condensation and collapse of the average-field-Hartree ground states]\label{thm:collapse-afH}
Let $\eta,\nu > 0$, $R = R_{N} \sim N^{-\eta}$ and $W$ be satisfying \eqref{condition:potential-two-body}.

\begin{enumerate}[label=(\roman*)]
\item Let $\beta,g > 0$ be fixed with $g < g_{*}(\beta)$. Assume that $\nu < \eta$ if $g \geq g_{*}(0)$. Then we have
\begin{equation}\label{cv:afH-afP-energy}
\lim_{N \to \infty}E^{\rm afH}_{N^{\nu},R_{N},\beta,g} = E^{\rm afP}_{\beta,g}.
\end{equation}
Moreover, if $\{u_{N}\}_{N}$ is a sequence of (approximate) ground states of $E^{\rm afH}_{N^{\nu},R_{N},\beta,g}$, i.e., $\mathcal{E}^{\rm afH}_{N^{\nu},R_{N},\beta,g}[u_{N}] = E^{\rm afH}_{N^{\nu},R_{N},\beta,g}(1 + o(1)_{N\to+\infty})$, then there exists a minimizer $u_{0}$ of $E^{\rm afP}_{\beta,g}$ such that, up to extracting a subsequence,
\begin{equation}\label{cv:afH-afP-ground-state}
\lim_{N \to \infty}\|u_{N} - u_{0}\|_{H^{1}} = 0.
\end{equation}

\item Let $Q_{0}$, $\mathcal{Q}_{s}$, $\mathcal{A}_{0}$, $\{\beta_{N}\}_{N}$ with $\beta_{N} \searrow 0$, $\{g_{N}\}_{N}$ with $g_{N} \to g_{*}(0)$, and $\{\ell_{N}\}_{N}$ be as in \Cref{thm:collapse-afP}. Assume that $xW \in L^1(\mathbb R^{2})$ and $\ell_{N} \sim N^{-\sigma}$ with
\begin{equation}\label{collapse:speed-hartree}
0 < \sigma < \min\left\{\frac{\nu}{s+3},\frac{2\eta}{s+4}\right\}.
\end{equation}
Assume further that $\nu < \eta$ in the critical case and additionally $\sigma < \frac{2(\eta-\nu)}{s+2}$ in the super-critical case. Then we have
\begin{equation}\label{blowup:energy-afH}
E^{\rm afH}_{N^{\nu},R_{N},\beta_{N},g_{N}} = E^{\rm afP}_{\beta_{N},g_{N}} (1+o(1)_{N\to\infty})  = \ell_{N}^{s} \left(\frac{s+2}{s}\mathcal{Q}_{s} + o(1)_{N\to\infty}\right).
\end{equation}
Moreover, if $\{u_{N}\}_{N}$ is a sequence of (approximate) ground states of $E^{\rm afH}_{N^{\nu},R_{N},\beta_{N},g_{N}}$, i.e., $\mathcal{E}^{\rm afH}_{N^{\nu},R_{N},\beta_{N},g_{N}}[u_{N}] = E^{\rm afH}_{N^{\nu},R_{N},\beta_{N},g_{N}}(1 + o(1)_{N\to+\infty})$, then we have
\begin{equation}\label{blowup:minimizer-afH}
\lim_{N \to \infty} \|\ell_{N} u_{N} ( \ell_{N} \cdot ) - Q_{0}\|_{H^{1}} = 0.
\end{equation}
\end{enumerate}
\end{theorem}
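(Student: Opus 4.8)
The plan is to bootstrap the average-field-Pauli results of \Cref{sec:afP} through the interpolating Hartree functional, using \Cref{lem:afH-afP-interaction} to pass between $\mathcal{E}^{\rm afH}_{N^{\nu},R_N,\beta,g}$, the regularized $\mathcal{E}^{\rm afP}_{R_N,\beta,g}$, and the limiting $\mathcal{E}^{\rm afP}_{\beta,g}$. The \emph{upper bound} on $E^{\rm afH}$ is the cheap part in both statements: testing with a fixed average-field-Pauli minimizer $u$ (for part (i)) or with $\ell^{-1}Q_0(\ell^{-1}\cdot)$ (for part (ii)) gives $\mathcal{E}^{\rm afH}_{N^{\nu},R_N,\beta,g}[u]=\mathcal{E}^{\rm afP}_{\beta,g}[u]+O(\beta R_N\|\nabla u\|_{L^2}^3)+\tfrac{g}{2}O(\|u\|_{H^1}^4\,o(1)_{N\to\infty})$ by \eqref{cv-rate:afH-afP-2body} and \eqref{cv:afH-afP-curvature}; in part (ii), after the scaling optimization of \Cref{sec:afP}, the regularization errors evaluated on $Q_0$ are of lower order exactly under \eqref{collapse:speed-hartree} (interaction error $\sim N^{-\nu}\ell_N^{-(s+3)}\to0$ needs $\sigma<\tfrac{\nu}{s+3}$; magnetic error $\sim\beta_N R_N\ell_N^{-(s+3)}\sim\ell_N^{-(s+4)/2}R_N\to0$ needs $\sigma<\tfrac{2\eta}{s+4}$, using $\beta_N\sim\ell_N^{(s+2)/2}$), so $\limsup_N E^{\rm afH}\le E^{\rm afP}$, resp. $E^{\rm afH}\le\tfrac{s+2}{s}\mathcal{Q}_s\ell_N^s(1+o(1))$.

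The real work is the matching \emph{lower bound} plus compactness of (approximate) Hartree ground states $\{u_N\}$. Since $\iint W_{N^\nu}|u_N|^2|u_N|^2\le\|u_N\|_{L^4}^4$, one has $\mathcal{E}^{\rm afH}_{N^{\nu},R_N,\beta,g}[u_N]\ge\mathcal{E}^{\rm afP}_{R_N,\beta,g}[u_N]\ge\mathcal{E}^{\rm afP}_{\beta,g}[u_N]-C\beta R_N\|\nabla u_N\|_{L^2}^3$, so everything hinges on an \emph{a priori $H^1$ bound} on $u_N$. For $g<g_*(0)$ this is immediate from \eqref{ineq:gn}, \eqref{ineq:hardy}, \eqref{ineq:diamagnetic}. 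For $g_*(0)\le g<g_*(\beta)$ I would first note $\|\nabla u_N\|_{L^2}^2=:P_N\le C_\beta\int|(-{\rm i}\nabla+\beta\mathbf{A}_{R_N}[|u_N|^2])u_N|^2$ (polar decomposition and \eqref{ineq:hardy}); combining the magnetic Gagliardo--Nirenberg inequality \eqref{ineq:gn-magnetic} with \eqref{cv:afH-afP-curvature} then yields $cP_N-CR_N P_N^{3/2}\le\mathcal{E}^{\rm afH}[u_N]\le C$ with $c:=(1-g/g_*(\beta))c_\beta>0$, which only forces the dichotomy ``$P_N$ bounded or $P_N\gtrsim R_N^{-2}=N^{2\eta}$''. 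To kill the collapse branch I would use the crude bound $\iint W_{N^\nu}|u_N|^2|u_N|^2\le\|W_{N^\nu}\|_{L^\infty}=N^{2\nu}\|W\|_{L^\infty}$, which gives $P_N\le CN^{2\nu}$ unconditionally; since $\nu<\eta$ this sits below the collapse threshold, so $P_N$ is bounded. Then $u_N$ is bounded in $H^1\cap L_V^2$, the confining potential gives strong $L^p$ convergence along a subsequence, weak lower semicontinuity of $\mathcal{E}^{\rm afP}_{\beta,g}$ (with $\mathbf A[|u_N|^2]\to\mathbf A[|u_0|^2]$ from $L^p$-convergence) and \Cref{lem:afH-afP-interaction} give $\liminf_N E^{\rm afH}\ge\mathcal{E}^{\rm afP}_{\beta,g}[u_0]\ge E^{\rm afP}_{\beta,g}$; matching the energies upgrades this to strong $H^1$-convergence to a minimizer, proving part (i).

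For part (ii) I would pass to the blow-up variable $w_N=\ell_N u_N(\ell_N\cdot)$, under which the regularization length becomes $R_N/\ell_N=N^{-(\eta-\sigma)}\to0$ and the interaction rescales to $W_{N^\nu\ell_N}$ with $N^\nu\ell_N=N^{\nu-\sigma}\to\infty$, so that $\ell_N^2\mathcal{E}^{\rm afH}_{N^{\nu},R_N,\beta_N,g_N}[u_N]$ differs from $\ell_N^2\mathcal{E}^{\rm afP}_{\beta_N,g_N}[u_N]$ only by errors bounded by $C\beta_N(R_N/\ell_N)\|\nabla w_N\|_{L^2}^3$ and $C(N^\nu\ell_N)^{-1}\|\nabla w_N\|_{L^2}^3$ (\Cref{lem:afH-afP-interaction} again). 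In the sub-critical regime the stabilizing gap $g_*(0)-g_N$ is of larger order than $\beta_N^2$ and the a priori bound runs as in part (i). In the critical and super-critical regimes the gap $c_N=1-g_N/g_*(\beta_N)$ is only $\gtrsim\beta_N^2\sim\ell_N^{s+2}$ (via \eqref{ineq:gn-weak-field} and $\tau_0<\tau_*$), so the same dichotomy now reads ``$\|\nabla w_N\|_{L^2}^2$ bounded or $\gtrsim N^{2\eta-\sigma(s+4)}$'', while the trivial $L^\infty$ bound forces $\|\nabla w_N\|_{L^2}^2\le CN^{2(\nu-\sigma)}$; these are compatible only on the bounded branch when $2(\nu-\sigma)<2\eta-\sigma(s+4)$, i.e. exactly $\sigma<\tfrac{2(\eta-\nu)}{s+2}$ (this is why that extra constraint is needed precisely where the margin is thinnest). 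Once $\|\nabla w_N\|_{L^2}$ is bounded, the residual regularization errors are $o(\ell_N^s)$ exactly under $\sigma<\min\{\tfrac{\nu}{s+3},\tfrac{2\eta}{s+4}\}$, so the rescaled problem collapses onto the analysis of \Cref{sec:afP}: the energy asymptotics \eqref{blowup:energy-afH} follow, and the Brezis--Lieb/strict-subadditivity argument together with uniqueness of the Gagliardo--Nirenberg optimizer and the pinning of scale and center by the trap $|x|^s$ give $w_N\to Q_0$ strongly in $H^1(\mathbb R^2)$, which is \eqref{blowup:minimizer-afH}.

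The main obstacle is precisely this a priori gradient bound in the critical and super-critical regimes: one is balancing three independent small parameters — $\beta_N\to0$, $g_N\to g_*(0)$, and the two regularization scales $N^{-\eta}$, $N^{-\nu}$ — against each other, and the energy gap over which the collapse is controlled is only of order $\beta_N^2$, so every regularization error and the crude $L^\infty$ interaction bound must be weighed on that scale; this is what dictates the exponents in \eqref{collapse:speed-hartree} and the extra super-critical constraint. Everything downstream (weak compactness, lower semicontinuity, Brezis--Lieb, uniqueness of $Q_0$) is then a rerun of the arguments in \Cref{sec:afP}.
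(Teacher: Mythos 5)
Your overall strategy coincides with the paper's: the cheap variational upper bounds via \Cref{lem:afH-afP-interaction}, the crude bound $\|\nabla u_N\|_{L^2}^2\lesssim N^{2\nu}$ coming from $\|W_{N^\nu}\|_{L^\infty}\leq CN^{2\nu}$, the magnetic Gagliardo--Nirenberg gap $1-g/g_*(\beta)$ (respectively $1-g_N/g_*(\beta_N)\gtrsim\beta_N^2$ via \eqref{ineq:gn-weak-field} and $\tau_0<\tau_*$) combined with the regularization error \eqref{cv:afH-afP-curvature}, and the condition $\nu<\eta$ to separate the two branches. Your packaging of the a priori bound as a dichotomy ``$P_N$ bounded or $P_N\gtrsim R_N^{-2}$'' (resp.\ $P_N\gtrsim N^{2\eta-\sigma(s+4)}$ at the blow-up scale) is a legitimate rewriting of the paper's rescale-and-contradict argument, and your exponent bookkeeping in part (i), in the sub-critical case, and in the super-critical case reproduces exactly the thresholds \eqref{collapse:speed-hartree} and $\sigma<\tfrac{2(\eta-\nu)}{s+2}$; those parts are fine.

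The genuine gap is in the \emph{critical} regime $g_N\equiv g_*(0)$. You treat critical and super-critical together and obtain the a priori bound on $\|\nabla w_N\|_{L^2}$ at the fixed scale $\ell_N$ only when $2(\nu-\sigma)<2\eta-\sigma(s+4)$, i.e.\ under $\sigma<\tfrac{2(\eta-\nu)}{s+2}$. But the theorem imposes that extra restriction only in the super-critical case; in the critical case it assumes merely $\nu<\eta$ and \eqref{collapse:speed-hartree}, so your argument as written proves a strictly weaker statement there. The reason the paper does not need the restriction is that in the critical case it does not rescale by $\ell_N$ and does not need any a priori bound at that scale: following \Cref{sec:critical-afP}, one blows up at the \emph{intrinsic} length $\varepsilon_N:=\|\nabla u_N\|_{L^2}^{-1}$, so that the rescaled gradient is normalized to one by construction, and the crude bound $\varepsilon_N^{-1}\lesssim N^{\nu}$ together with $\nu<\eta$ is used only to ensure that the rescaled regularization radius $R_N\varepsilon_N^{-1}\lesssim N^{\nu-\eta}$ vanishes, after which the identification $\varepsilon_N\sim\ell_N$ (the constant $\kappa_0=\mathcal{Q}_s^{-1/(s+2)}\mathcal{A}_0^{1/(s+2)}$) is recovered a posteriori from the energy matching, exactly as in \eqref{cv:minimizer-afP-nls-critical}. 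To close your proof of \eqref{blowup:energy-afH}--\eqref{blowup:minimizer-afH} in the critical case under the stated hypotheses, you should therefore replace the dichotomy at scale $\ell_N$ by this intrinsic-scale argument (your treatment of the super-critical case, where the extra hypothesis on $\sigma$ is available, can stay as it is).
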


\begin{proof}
First, we prove the convergence of Hartree energy in \eqref{cv:afH-afP-energy} as well as of its ground states in \eqref{cv:afH-afP-ground-state}. Let $u_{0} \in H^{1}(\mathbb R^{2})$ be a minimizer of $E^{\rm afP}_{\beta,g}$. It follows immediately from the variational principle together with \eqref{cv-ineq:afH-afP-2body} and \eqref{cv:afH-afP-curvature} that
$$
\lim_{N \to \infty}E^{\rm afH}_{N^{\nu},R_{N},\beta,g} \leq \lim_{N \to \infty} \mathcal{E}^{\rm afH}_{N^{\nu},R_{N},\beta,g}[u_{0}] \leq \mathcal{E}^{\rm afP}_{\beta,g}[u_{0}] = \mathcal{E}^{\rm afP}_{\beta,g}[u_{0}] = E^{\rm afP}_{\beta,g}.
$$
This is the desired energy upper bound in \eqref{cv:afH-afP-energy}.

We prove the matching lower bound \eqref{cv:afH-afP-energy} by verifying the convergence of ground states in \eqref{cv:afH-afP-ground-state}. Let $\{u_{N}\}_{N}$ be a sequence of (approximate) ground state of $E^{\rm afH}_{N^{\nu},R_{N},\beta,g}$. In the stable regime where $0<g<g_{*}(\beta)$, we claim that $\{u_{N}\}_{N}$ is bounded uniformly in $H^{1}\cap L_{V}^{2}(\mathbb R^{2})$. In fact, this is obvious in the case $0<g<g_{*}(0)$, by \eqref{ineq:gn} and \eqref{ineq:diamagnetic}. We prove that this still holds true when $g \geq g_{*}(0)$. We assume on the contrary that $\|(-{\rm i}\nabla + \beta\mathbf{A}_{R_{N}}[|u_{N}|^{2}])u_{N}\|_{L^{2}} \to \infty$ as $N \to \infty$. We define $\widetilde{u_{N}} = \varepsilon_{N} u_{N}(\varepsilon_{N} \cdot)$ with 
\begin{equation}\label{boundedness:afH-ground-state-length}
\varepsilon_{N}^{-2} := \int_{\mathbb R^{2}}\left|(-{\rm i}\nabla + \beta\mathbf{A}_{R_{N}}[|u_{N}|^{2}])u_{N}\right|^{2}.
\end{equation}
Noticing that
\begin{equation}\label{eq:afH-ground-state-renormalized}
\int_{\mathbb R^{2}}\left|\widetilde{u_{N}}\right|^{2} = 1 = \int_{\mathbb R^{2}}\left|(-{\rm i}\nabla + \beta\mathbf{A}_{R_{N}\varepsilon_{N}^{-1}}[|\widetilde{u_{N}}|^{2}])\widetilde{u_{N}}\right|^{2}.
\end{equation}
Using \eqref{ineq:hardy}, \eqref{ineq:diamagnetic} and \eqref{eq:afH-ground-state-renormalized} with $\mathbf{A}$ replaced by $\mathbf{A}_{R_{N}\varepsilon_{N}^{-1}}$, we deduce that $\widetilde{u_{N}}$ is bounded uniformly in $H^{1}(\mathbb R^{2})$. We then use the first inequality in \eqref{cv-ineq:afH-afP-2body} to obtain that
\begin{equation}\label{boundedness:afH-ground-state}
\mathcal{E}^{\rm afH}_{N^{\nu},R_{N},\beta,g}[u_{N}] \geq \varepsilon_{N}^{-2}\int_{\mathbb R^{2}}\left[\left|(-{\rm i}\nabla + \beta\mathbf{A}_{R_{N}\varepsilon_{N}^{-1}}[|\widetilde{u_{N}}|^{2}])\widetilde{u_{N}}\right|^{2} - \frac{g}{2}\left|\widetilde{u_{N}}\right|^{4}\right] + \varepsilon_{N}^{s}\int_{\mathbb R^{2}}V|\widetilde{u_{N}}|^{2}.
\end{equation}
Multiplying both side of \eqref{boundedness:afH-ground-state} with $\varepsilon_{N}^{2} \to 0$ as $N\to\infty$, dropping the nonnegative external potential, using \eqref{eq:afH-ground-state-renormalized} and the energy upper bound in \eqref{cv:afH-afP-energy}, we deduce that
\begin{equation}\label{conv:afH-afP-ground-state-0}
\lim_{N \to \infty}\int_{\mathbb R^{2}}\left|\widetilde{u_{N}}\right|^{4} \geq \frac{2}{g} > 0.
\end{equation}
This, however, is impossible. Indeed, we first note that $\varepsilon_{N}^{-1} \leq CN^{\nu}$, by \eqref{condition:potential-two-body} and \eqref{boundedness:afH-ground-state-length}. This yields that, under the assumption $\nu<\eta$,
$$
R_{N}\varepsilon_{N}^{-1} \leq N^{-\eta+\nu} \xrightarrow{N \to \infty} 0.
$$
We then use \eqref{ineq:gn-magnetic}, \eqref{cv:afH-afP-curvature}, \eqref{eq:afH-ground-state-renormalized} to deduce that
\begin{align*}
\lim_{N \to \infty}\int_{\mathbb R^{2}}\left|\widetilde{u_{N}}\right|^{4} & \leq \frac{2}{g_{*}(\beta)}\lim_{N \to \infty} \int_{\mathbb R^{2}}\left|(-{\rm i}\nabla + \beta\mathbf{A}[|\widetilde{u_{N}}|^{2}])\widetilde{u_{N}}\right|^{2} \\
& \leq \frac{2}{g_{*}(\beta)}\lim_{N \to \infty} \int_{\mathbb R^{2}}\left|(-{\rm i}\nabla + \beta\mathbf{A}_{R_{N}\varepsilon_{N}^{-1}}[|\widetilde{u_{N}}|^{2}])\widetilde{u_{N}}\right|^{2} \\
& = \frac{2}{g_{*}(\beta)} < \frac{2}{g}.
\end{align*}
Here we have used the fact that $g<g_{*}(\beta)$. The above contradicts \eqref{conv:afH-afP-ground-state-0}. Therefore, we must have that $\|(-{\rm i}\nabla + \beta\mathbf{A}_{R_{N}}[|u_{N}|^{2}])u_{N}\|_{L^{2}}$ is bounded uniformly. This in turn implies that $\{u_{N}\}_{N}$ is bounded uniformly in $H^{1}(\mathbb R^{2})$ as well, by using again \eqref{eq:afH-ground-state-renormalized}, \eqref{ineq:hardy} and \eqref{ineq:diamagnetic} with $\mathbf{A}$ replaced by $\mathbf{A}_{R_{N}}$. Then we use again \eqref{cv:afH-afP-curvature} and the first inequality in \eqref{cv-ineq:afH-afP-2body} to obtain that
$$
o(1)_{N\to\infty} + \mathcal{E}^{\rm afH}_{N^{\nu},R_{N},\beta,g}[u_{N}] \geq \mathcal{E}^{\rm afP}_{\beta,0}[u_{N}] - \frac{g}{2}\|u_{N}\|_{L^{4}}^{4} + \int_{\mathbb R^{2}}V|u_{N}|^{2}.
$$
It then follows from \eqref{ineq:gn-magnetic} and the energy upper bound in \eqref{cv:afH-afP-energy} that $\{u_{N}\}_{N}$ is bounded uniformly in $L_{V}^{2}(\mathbb R^{2})$. Therefore, up to a subsequence, $u_{N}\to u_{0}$ weakly in $H^{1}(\mathbb R^{2})$, almost everywhere in $\mathbb R^{2}$, and strongly in $L^r(\mathbb R^{2})$ for $2\leq r<\infty$. In particular, we have $\|u_{0}\|_{L^{2}}=1$ and
$$
\lim_{N \to \infty}\mathcal{E}^{\rm afH}_{N^{\nu},R,\beta,g}[u_{N}] \geq \mathcal{E}^{\rm afP}_{\beta,g}[u_{0}] \geq E^{\rm afP}_{N^{\nu},\beta,g},
$$
by \eqref{cv-ineq:afH-afP-2body}, \eqref{cv:afH-afP-curvature} and the weak lower semicontinuity (see \cite[Proposition 3.7]{LunRou-15}). The above yields the desired energy lower bound in \eqref{cv:afH-afP-energy} as well as the convergence of ground states in \eqref{cv:afH-afP-ground-state}.

Now we complete the proof of \Cref{thm:collapse-afH} by establishing the asymptotic behavior of the (regularized) average-field-Hartree energy and its ground states in the collapse regime. By the variational principle and the second inequality in \eqref{cv-rate:afH-afP-2body}, we have
\begin{align}\label{blowup:afH-upper-bound}
E^{\rm afH}_{N^{\nu},R_{N},\beta_N,g_N} & \leq \mathcal{E}^{\rm afH}_{N^{\nu},R_{N},\beta_N,g_N}[\ell_{N}^{-1}Q_{0}(\ell_{N}^{-1}\cdot) \nonumber\\
& \leq \mathcal{E}^{\rm afP}_{\beta_{N},g_{N}}[\ell_{N}^{-1}Q_{0}(\ell_{N}^{-1}\cdot)] + CN^{-\nu} \ell_{N}^{-3} + C\beta_{N}R_{N}\ell_{N}^{-3} \nonumber \\
& = \ell_{N}^{s}\left(\frac{s+2}{s}\mathcal{Q}_{s} + CN^{-\nu} \ell_{N}^{-s-3} + C\beta_{N}N^{-\eta}\ell_{N}^{-s-3} + o(1)_{N\to\infty} \right).
\end{align}
Since $\ell_{N} \sim N^{-\sigma}$ with $\sigma > 0$, the error term $N^{-\nu} \ell_{N}^{-s-3}$ is negligible when $\sigma<\frac{\nu}{s+3}$. Furthermore, the error term $\beta_{N}N^{-\eta}\ell_{N}^{-s-3}$ is also negligible when $\sigma < \frac{2\eta}{s+4}$ in the all three cases of \Cref{thm:collapse-afP}. This gives the energy upper bound in \eqref{blowup:energy-afH}. 

We establish the matching energy lower bound in \eqref{blowup:energy-afH} by verifying the convergence of the blowup sequence in \eqref{blowup:minimizer-afH}. Let $u_{N}$ be a sequence of (approximate) ground state of $E^{\rm afH}_{N^{\nu},R_{N},\beta_N,g_N}$ and $w_{N}:= \ell_{N}u_{N}(\ell_{N}\cdot)$. Then, $\|w_{N}\|_{L^{2}} = \|u_{N}\|_{L^{2}}=1$ and
\begin{align}\label{blowup:energy-afH-lower-bound}
\mathcal{E}^{\rm afH}_{N^{\nu},R_{N},\beta_N,g_N}[u_{N}] ={} & \ell_{N}^{-2}\left[\int_{\mathbb R^{2}}\left|(-{\rm i}\nabla + \beta_{N}\mathbf{A}_{R_{N}\ell_{N}^{-1}}[|w_{N}|^{2}])w_{N}\right|^{2} \right. \nonumber \\
& \left. - \frac{g_{N}}{2}\iint_{\mathbb R^{4}}W_{N^{\nu}\ell_{N}}|w_{N}(x)|^{2}|w_{N}(y)|^{2}{\rm d}x{\rm d}y\right]  + \ell_{N}^{s}\int_{\mathbb R^{2}}V|w_{N}|^{2}.
\end{align}

First, in the sub-critical regime where $g_{N} \nearrow g_{*}(0)$ faster than $\beta_{N}^{2} \to 0$, the magnetic potential $\mathbf{A}_{R_{N}\ell_{N}^{-1}}[|w_{N}|^{2}]$ in \eqref{blowup:energy-afH-lower-bound} have no contribution to the leading order term in the collapse regime. By using \eqref{ineq:gn}, \eqref{ineq:diamagnetic}, \eqref{cv-ineq:afH-afP-2body}, \eqref{blowup:energy-afH-lower-bound}, the proof of \eqref{blowup:minimizer-afH} is exactly the same as \eqref{blowup:minimizer-afP} in \Cref{sec:subcritical-afP}.

Next, in the critical regime where $g_{N} \equiv g_{*}(0)$, by using again \eqref{ineq:hardy}, \eqref{ineq:gn}, \eqref{ineq:diamagnetic}, \eqref{cv-ineq:afH-afP-2body}, \eqref{blowup:energy-afH-lower-bound} with $\mathbf{A}$ replaced by $\mathbf{A}_{R_{N}}$ and $\mathbf{A}_{R_{N}\ell_{N}^{-1}}$, the proof of \eqref{blowup:minimizer-afH} is similar to \eqref{blowup:minimizer-afP} in \Cref{sec:critical-afP}. The only thing we need to verify is the convergence \eqref{cv:minimizer-afP-nls-critical} where $\mathbf{A}$ replaced by $\mathbf{A}_{R_{N}\ell_{N}^{-1}}$, i.e.,
$$
\lim_{N \to \infty}\int_{\mathbb R^{2}}\mathbf{A}_{R_{N}\ell_{N}^{-1}}[|w_{N}|^{2}]|^{2}|w_{N}|^{2} = \int_{\mathbb R^{2}}\mathbf{A}[|w_{0}|^{2}]|^{2}|w_{0}|^{2}.
$$
This holds true, by \eqref{cv:afH-afP-curvature}, under the assumption \eqref{collapse:speed-hartree} since $R_{N}\ell_{N}^{-1} \sim N^{-\eta+\sigma} \to 0$ as $N\to\infty$.

Finally, in the super-critical regime where $g_{N} \searrow g_{*}(0)$, the compactness of the blowup sequence $\{w_{N}\}_{N}$ is not obtained directly by \eqref{ineq:gn-magnetic}. We recover this property under further assumption $\sigma < \frac{2(\eta-\nu)}{s+2}$ in addition to the assumption $\nu < \eta$ as seen in the proof of \eqref{cv:afH-afP-ground-state}. We assume on the contrary that $\|(-{\rm i}\nabla + \beta_{N}\mathbf{A}_{R_{N}\ell_{N}^{-1}}[|w_{N}|^{2}])w_{N}\|_{L^{2}} \to \infty$ as $N \to \infty$. We define $\widetilde{w_{N}} = \varepsilon_{N} w_{N}(\varepsilon_{N} \cdot)$ with 
\begin{equation}\label{boundedness:afH-blowup-length}
\varepsilon_{N}^{-2} := \int_{\mathbb R^{2}}\left|(-{\rm i}\nabla + \beta\mathbf{A}_{R_{N}\ell_{N}^{-1}}[|w_{N}|^{2}])w_{N}\right|^{2}.
\end{equation}
Noticing that
\begin{equation}\label{eq:afH-blowup-renormalized}
\int_{\mathbb R^{2}}\left|\widetilde{w_{N}}\right|^{2} = 1 = \int_{\mathbb R^{2}}\left|(-{\rm i}\nabla + \beta\mathbf{A}_{R_{N}\ell_{N}^{-1}\varepsilon_{N}^{-1}}[|\widetilde{w_{N}}|^{2}])\widetilde{w_{N}}\right|^{2}.
\end{equation}
Using \eqref{ineq:hardy}, \eqref{ineq:diamagnetic} and \eqref{eq:afH-ground-state-renormalized} with $\mathbf{A}$ replaced by $\mathbf{A}_{R_{N}\varepsilon_{N}^{-1}\varepsilon_{N}^{-1}}$, we deduce that $\widetilde{w_{N}}$ is bounded uniformly in $H^{1}(\mathbb R^{2})$. By using \eqref{blowup:energy-afH-lower-bound}, \eqref{cv:afH-afP-curvature}, \eqref{cv-ineq:afH-afP-2body}, \eqref{ineq:gn-magnetic} and dropping the nonnegative trapping potential, we obtain that
\begin{align*}
E^{\rm afH}_{N^{\nu},R_{N},\beta_N,g_N} \geq{} & \ell_{N}^{-2}\left[\int_{\mathbb R^{2}}\left|(-{\rm i}\nabla + \beta_{N}\mathbf{A}_{R_{N}\ell_{N}^{-1}\varepsilon_{N}^{-1}}[|\widetilde{w_{N}}|^{2}])\widetilde{w_{N}}\right|^{2} - \frac{g_{N}}{2}\int_{\mathbb R^{2}}|\widetilde{w_{N}}|^{4}\right] + \ell_{N}^{s}\int_{\mathbb R^{2}}V|w_{N}|^{2} \\
\geq{} & \ell_{N}^{-2}\varepsilon_{N}^{-2}\left[\int_{\mathbb R^{2}}\left|(-{\rm i}\nabla + \beta_{N}\mathbf{A}[|\widetilde{w_{N}}|^{2}])\widetilde{w_{N}}\right|^{2} - \frac{g_{N}}{2}\int_{\mathbb R^{2}}|\widetilde{w_{N}}|^{4} - C\beta_{N} R_{N}\ell_{N}^{-1}\varepsilon_{N}^{-1}\right] \\
\geq{} & \ell_{N}^{-2}\varepsilon_{N}^{-2}\left[\left(1-
\frac{g_{N}}{g_{*}(\beta_{N})}\right)\int_{\mathbb R^{2}}\left|(-{\rm i}\nabla + \beta_{N}\mathbf{A}[|\widetilde{w_{N}}|^{2}])\widetilde{w_{N}}\right|^{2} - C\beta_{N} R_{N}\ell_{N}^{-1}\varepsilon_{N}^{-1}\right] \\
\geq{} & \ell_{N}^{-2}\varepsilon_{N}^{-2}\left[\left(1-
\frac{g_{N}}{g_{*}(\beta)}\right)\int_{\mathbb R^{2}}\left|(-{\rm i}\nabla + \beta_{N}\mathbf{A}_{R_{N}\ell_{N}^{-1}\varepsilon_{N}^{-1}}[|\widetilde{w_{N}}|^{2}])\widetilde{w_{N}}\right|^{2} \right. \\
& \left. - C\left(1-
\frac{g_{N}}{g_{*}(\beta)}\right)\beta_{N} R_{N}\ell_{N}^{-1}\varepsilon_{N}^{-1} - C\beta_{N} R_{N}\ell_{N}^{-1}\varepsilon_{N}^{-1}\right].
\end{align*}
Recalling \eqref{eq:afH-blowup-renormalized}, the definition of the blowup length $\ell_{N}$ in \eqref{length:supercritical}, we deduce from the above that
\begin{equation}\label{boundedness:afH-blowup}
E^{\rm afH}_{N^{\nu},R_{N},\beta_N,g_N} \geq \varepsilon_{N}^{-2}\ell_{N}^{s}\left[\Gamma - C\beta_{N} R_{N}\ell_{N}^{-s-3}\varepsilon_{N}^{-1}\right].
\end{equation}
Here $\Gamma > 0$ is a universal positive constant which can be determined by \eqref{length:supercritical}. It is worth noting that, by \eqref{condition:potential-two-body} and \eqref{boundedness:afH-blowup-length},
$$
\ell_{N}^{-2}\varepsilon_{N}^{-2} = \int_{\mathbb R^{2}}\left|(-{\rm i}\nabla + \beta\mathbf{A}_{R_{N}}[|u_{N}|^{2}])u_{N}\right|^{2} \leq CN^{2\nu}.
$$
This implies that, under the assumption $\nu < \eta$ and $\sigma < \frac{2(\eta-\nu)}{s+2}$,
$$
\beta_{N} R_{N}\ell_{N}^{-s-3}\varepsilon_{N}^{-1} \leq CN^{\sigma\frac{s+2}{2}-\eta+\nu} \xrightarrow{N \to \infty} 0.
$$
Here we recalled again the definition of the blowup length $\ell_{N}$ in \eqref{length:supercritical}. Multiplying both side of \eqref{boundedness:afH-blowup} by $\varepsilon_{N}^{2}\ell_{N}^{-s}$, using the energy upper bound in \eqref{blowup:energy-afH}, and taking the limit $N\to\infty$ with notice that $\varepsilon_{N} \to 0$, we obtain a contradiction. Therefore, we must have that $\|(-{\rm i}\nabla + \beta_{N}\mathbf{A}_{R_{N}\ell_{N}^{-1}}[|w_{N}|^{2}])w_{N}\|_{L^{2}}$ is bounded uniformly. This in turn implies that $w_{N}$ is also bounded uniformly in $H^{1}(\mathbb R^{2})$, by using \eqref{ineq:hardy} and \eqref{ineq:diamagnetic} with $\mathbf{A}$ replaced by $\mathbf{A}_{R_{N}\ell_{N}^{-1}}$. We then use again \eqref{blowup:energy-afH-lower-bound}, \eqref{cv:afH-afP-curvature}, \eqref{cv-ineq:afH-afP-2body}, \eqref{ineq:gn-magnetic} and obtain that
$$
E^{\rm afH}_{N^{\nu},R_{N},\beta_N,g_N} \geq  \ell_{N}^{-2}\left[\int_{\mathbb R^{2}}\left|(-{\rm i}\nabla + \beta_{N}\mathbf{A}[|w_{N}|^{2}])w_{N}\right|^{2} - \frac{g_{N}}{2}\int_{\mathbb R^{2}}|w_{N}|^{4} - C\beta_{N} R_{N}\ell_{N}^{-1}\right] + \ell_{N}^{s}\int_{\mathbb R^{2}}V|w_{N}|^{2}.
$$
At this stage, it is worth noting that, by the definition of the blowup length $\ell_{N}$ in \eqref{length:supercritical} and the assumption \eqref{collapse:speed-hartree}, we have
$$
\beta_{N} R_{N}\ell_{N}^{-3} \sim N^{\sigma\frac{s+4}{2}-\eta}\ell_{N}^{s} = o(E^{\rm afH}_{N^{\nu},R_{N},\beta_N,g_N}).
$$
Then the rest of proof follows exactly the same as \eqref{blowup:minimizer-afP} in \Cref{sec:supercritical-afP} which is essentially based on the proof of it in the critical regime (see \Cref{sec:critical-afP}).
\end{proof}

\subsection{Condensation of the quantum energy and its ground states in the mean-field regime}

In the mean-field regime, we recall that $\beta>0$ and $0<g<g_{*}(\beta)$ are fixed. The rigorous derivation of the average-field-Pauli theory \eqref{functional:afP} in the mean-field limit regime can be obtained by combining the well-known results for non-interacting anyon gases \cite{LunRou-15} and for interacting Bose gases \cite{LewNamRou-16}. In the mentioned articles, the method of proof is the (quantitative) quantum de Finetti theory \cite{Rougerie-15,Rougerie-EMS}. The idea is to use the orthogonal projector 
$$
P := \mathbbm{1}(h\leq L) \quad \text{with} \quad h = -\Delta + V
$$
to project the many-body ground state into finite dimensional space and then bound the full (regularized) energy from below in terms of projected state of $H^{\rm af}_{N^{\nu},R,\beta,g}$. In particular, if $V(x) = |x|^{s}$ then the projected Hilbert space $PL^{2}(\mathbb R^{2})$ is of finite dimensional and we have (see, e.g., \cite[Lemma 3.3]{LewNamRou-16})
\begin{equation}\label{est:CLR}
\dim(PL^{2}(\mathbb R^{2})) \leq CL^{1+\frac{2}{s}} .
\end{equation}
Let $\Psi_{N}$ be a many-body ground state of $H^{\rm af}_{N^{\nu},R,\beta,g}$. It follows from \cite{LunRou-15,LewNamRou-16} that, with $P^{\perp} = \mathbbm{1} - P$ and for every $\varepsilon>0$,
\begin{align}\label{est:quantum-afH}
\frac{ \langle \Psi_{N}, H^{\rm af}_{N^{\nu},R,\beta,g} \Psi_{N} \rangle }{N} \geq{} & \int_{SPL^{2}(\mathbb R^{2})} \mathcal{E}^{\rm afH}_{N^{\nu},R,\beta,g}[u] {\rm d}\mu_{\Psi_{N}}^{(3)}(u)  + CL\tr\left[P^{\perp}\gamma_{\Psi_{N}}^{(1)}\right] \\
& - C\frac{L^{2+\frac{2}{s}}}{N} \left(1+|\log R|\right) - \frac{C_{\beta}}{N} - \left(\frac{C_{\varepsilon}}{L^{\frac{1}{2}}R^{1+\varepsilon}} + \frac{C}{LR^{2}} + \frac{CN^{\frac{3\nu}{2}}}{L^{\frac{1}{2}}}\right) \tr\left[h\gamma_{\Psi_{N}}^{{(1)}}\right]. \nonumber
\end{align}
Here it is worth noting that the last error term in the above came from the estimate of the two-body interactions when projecting onto the finite dimensional space. For this purpose, we made use of \cite[Equation (44)]{LewNamRou-17} as well as the facts that $W_{N^{\nu}} \leq CN^{2\nu}$ and that $Q \leq L^{-1}h$.  We do not detail the proof of this since a similar argument has been detailed in \cite{NguRic-24}. It is worth noting that the authors in \cite{LewNamRou-17} used a refined operator bound for two-body interactions and a second moment estimate in order to obtain BECs for a wide range of $\nu>0$ (see also \cite{NamRou-20}). This, however, is not available for three-body problem in high dimensional space such as interacting anyon gases in the present article.

We next optimize the error in \eqref{est:quantum-afH} by choosing optimally $L>0$. Assuming that $R>0$ behaves at worst as $R \sim N^{-\eta}$ with $\eta>0$. We might ignore the $|\log R|$ and $R^{\varepsilon}$ factors, by changing a little bit $\eta$. Roughly speaking, by a bootstrap argument as used in \cite{LewNamRou-17,NamRou-20}, we can prove that $\tr\left[h\gamma_{\Psi_{N}}^{{(1)}}\right]$ is bounded uniformly, provided that
$$
0 < \nu < \frac{s}{6s+6} \quad \text{and} \quad 0 < \eta <\frac{s}{4s+4}.
$$
To see this, we consider the modified Hamiltonian by a small perturbation of non-interacting Hamiltonian, i.e., with $0<\varepsilon<1$,
$$
H^{\rm af}_{N^{\nu},R,\beta,g,\varepsilon} := H^{\rm af}_{N^{\nu},R,\beta,g} - \varepsilon H^{\rm af}_{N^{\nu},R,\beta,0}.
$$
Let $E^{\rm afQM}_{N^{\nu},R,\beta,g,\varepsilon}$ be the corresponding ground state energy. We then make use of the Schr{\"o}\-dinger equation satisfying by the ground state $\Psi_{N}$ of $H^{\rm af}_{N^{\nu},R,\beta,g}$ and deduce that
\begin{equation}\label{boundedness:qm-ground-state}
\frac{\langle \Psi_{N} | H^{\rm af}_{N^{\nu},R,\beta,0} | \Psi_{N}\rangle}{N} \leq \frac{E^{\rm afQM}_{N^{\nu},R,\beta,g} - E^{\rm afQM}_{N^{\nu},R,\beta,g,\varepsilon}}{\varepsilon} \leq C\frac{1+\left|E^{\rm afQM}_{N^{\nu},R,\beta,g,\varepsilon}\right|}{\varepsilon}.
\end{equation}
The uniform boundedness of the above follows from \eqref{est:quantum-afH} and a bootstrap argument as in \cite{LewNamRou-17,NamRou-20}. This also yields that $\tr\left[h\gamma_{\Psi_{N}}^{{(1)}}\right]$ is bounded uniformly as well (see \cite[Proposition 2.6]{LunRou-15}). Looking back at \eqref{est:quantum-afH}, it is necessary that $L \gg N^{3\nu}$ and $L \gg N^{2\eta}$ in order to obtain the convergence of energy.  We thus have to optimize
$$
\frac{L^{2+\frac{2}{s}}}{N} + \frac{N^{\frac{3\nu}{2}} + N^{\eta}}{L^{\frac{1}{2}}}.
$$
The optimal choice of $L>0$ is
\begin{equation}\label{eq:L}
L \sim \left(N^{\frac{3\nu}{2}+1} + N^{\eta+1} \right)^{\frac{2s}{5s+4}} \quad \text{with} \quad \nu < \frac{s}{6s+6} \quad \text{and} \quad 0 < \eta <\frac{s}{4s+4},
\end{equation}
which yields that the de Finetti measure $\mu_{\Psi_{N}}^{(3)}$ is tight as $N \to \infty$. Indeed, this follows from the above choice of $L$ and the estimates
\begin{equation}\label{est:de-Finetti-measure}
1 \geq \int_{SPL^{2}(\mathbb R^{2})} {\rm d} \mu_{\Psi_{N}}^{(3)}(u) = \tr\left[P^{\otimes 3}\gamma_{\Psi_{N}}^{(3)} P^{\otimes 3}\right] \geq 1- 3\tr\left[P^{\perp}\gamma_{\Psi_{N}}^{(1)}\right] \geq 1 - 3L^{-1} \tr\left[h\gamma_{\Psi_{N}}^{(1)}\right].
\end{equation}
Looking back at \eqref{est:quantum-afH}, by collecting all above estimates, we arrive at the final estimate
\begin{align}
E^{\rm afH}_{N^{\nu},R_{N},\beta,g} \geq E^{\rm afQM}_{N^{\nu},R_{N},\beta,g} & \geq \int_{SPL^{2}(\mathbb R^{2})} \mathcal{E}^{\rm afH}_{N^{\nu},R_{N},\beta,g}[u] {\rm d}\mu_{\Psi_{N}}^{(3)}(u) - CN^{\frac{\nu(6s+6)-s}{5s+4}} - CN^{\frac{\eta(4s+4)-s}{5s+4}} \label{cv:ground-state-qm-afH} \\
& \geq E^{\rm afH}_{N^{\nu},R_{N},\beta,g} \int_{SPL^{2}(\mathbb R^{2})}{\rm d}\mu_{\Psi_{N}}^{(3)}(u) - CN^{\frac{\nu(6s+6)-s}{5s+4}} - CN^{\frac{\eta(4s+4)-s}{5s+4}}.\label{cv:energy-qm-afH}
\end{align}
Here we have used the fact that, for $N$ large enough, $\mathcal{E}^{\rm afH}_{N^{\nu},R_{N},\beta,g}[u]\geq 0$ (see \Cref{thm:collapse-afH}). It is worth noting that it was required $\nu<\eta$ in the case where $g_{*}(0) \leq g < g_{*}(\beta)$. In the stable regime where $0 \leq g < g_{*}(\beta)$, we deduce the desired convergence \eqref{cv:qm-afP-energy} of the average-field quantum energy to the average-field-Pauli energy from \eqref{cv:energy-qm-afH}, \eqref{est:de-Finetti-measure} and \eqref{cv:afH-afP-energy}. 

Next, we prove the convergence of many-body ground states in \eqref{cv:qm-afP-energy}. Let $\Psi_{N}$ be a sequence of ground states of $H^{\rm af}_{N^{\nu},R,\beta,g}$. Since $\tr\left[h\gamma_{\Psi_{N}}^{(1)}\right]$ is bounded uniformly, modulo a subsequence, we may assume that (see \cite{LewNamRou-14})
$$
\lim_{N \to \infty}\tr\left| \gamma_{\Psi_{N}}^{(k)} - \gamma^{(k)}\right| = 0,\quad \forall k\in\mathbb N.
$$
On the other hand, by the quantitative quantum de Finetti theorem (see e.g., \cite{LewNamRou-16}) and \eqref{est:CLR}, \eqref{eq:L}, we have
\begin{equation}\label{boundedness:qm-ground-state-finite}
\tr\left|P^{\otimes 3}\gamma_{\Psi_{N}}^{(3)}P^{\otimes 3} - \int_{SPL^{2}(\mathbb R^{2})}|u^{\otimes 3}\rangle\langle u^{\otimes 3}|{\rm d}\mu_{\Psi_{N}}^{(3)}(u)\right| \leq C\frac{\dim(PL^{2}(\mathbb R^{2}))}{N} \leq C\frac{L^{1+\frac{2}{s}}}{N} \xrightarrow{N\to\infty} 0.
\end{equation}
It is worth noting that $\mu_{\Psi_{N}}^{(3)}(SPL^{2}(\mathbb R^{2})) \to 1$ as $N\to\infty$, by \eqref{est:de-Finetti-measure}. Then it follows from \eqref{boundedness:qm-ground-state-finite} that
\begin{equation}\label{boundedness:qm-ground-state-Finetti}
\lim_{N\to\infty}\tr\left|\gamma_{\Psi_{N}}^{(3)} - \int_{SPL^{2}(\mathbb R^{2})}|u^{\otimes 3}\rangle\langle u^{\otimes 3}|{\rm d}\mu_{\Psi_{N}}^{(3)}(u)\right| = 0.
\end{equation}
Testing the above with a sequence of finite rank orthogonal projectors $P_{K} \to 0$ as $K\to\infty$, and using the strong convergence of $\gamma_{\Psi_{N}}^{(3)}$, we obtain that the measure $\{\mu_{\Psi_{N}}^{(3)}\}_{N}$ is tight on the set of one-body pure states. Therefore, there exists a limit measure $\mu$ supported on the unit ball of $L^{2}(\mathbb R^{2})$. We then deduce that
$$
\lim_{N\to\infty}\tr\left|\gamma_{\Psi_{N}}^{(3)} - \int_{SL^{2}(\mathbb R^{2})}|u^{\otimes 3}\rangle\langle u^{\otimes 3}|{\rm d}\mu(u)\right| = 0
$$
which in turn implies that
\begin{equation}\label{cv:quantum-ground state}
\lim_{N\to\infty}\tr\left|\gamma_{\Psi_{N}}^{(1)} - \int_{SL^{2}(\mathbb R^{2})}|u \rangle\langle u|{\rm d}\mu(u)\right| = 0.
\end{equation}
To complete the proof of \eqref{cv:qm-afP-ground-state}, it remains to prove that $\mu$ in \eqref{cv:quantum-ground state} is supported on the set of average-field-Pauli ground states $\mathcal{M}^{\rm afP}$.

Looking back at \eqref{cv:ground-state-qm-afH}, we split the integral into two disjoint parts, i.e., $SPL^{2}(\mathbb R^{2}) = \mathrm{K}_{-} \cup \mathrm{K}_{+}$ where
$$
\mathrm{K}_{-} = \{u\in L^{2}(\mathbb R^{2}) : \mathcal{E}^{\rm afP}_{R_{N},\beta,g}[u] \leq C_{\mathrm{kin}}\} \quad \text{and} \quad \mathrm{K}_{+} = L^{2}(\mathbb R^{2}) \setminus \mathrm{K}_{-}.
$$
with $C_{\mathrm{kin}} > 0$ a large constant independent of $N$. For the high kinetic energy part $\mathrm{K}_{+}$, we consider the perturbed average-field-Hartree functional
$$
\mathcal{E}^{\rm afH}_{N^{\nu},R_{N},\beta,g,\varepsilon}[u] = \mathcal{E}^{\rm afH}_{N^{\nu},R_{N},\beta,g}[u] - \varepsilon \mathcal{E}^{\rm afP}_{R_{N},\beta,0}[u]
$$
with the corresponding modified Hartree energy $E^{\rm afH}_{N^{\nu},R_{N},\beta,g,\varepsilon}$. Here $\lambda$ is fixed such that either $0<\varepsilon < 1-\frac{g}{g_{*}(0)}$ if $g<g_{*}(0)$ or $0<\varepsilon<1$ if $g\geq g_{*}(0)$ and $\nu < \eta$. We may return to \Cref{thm:collapse-afH} and derive a similar energy lower bound \eqref{cv:afH-afP-energy} with $E^{\rm afH}_{N^{\nu},R_{N},\beta,g}$ replaced by $E^{\rm afH}_{N^{\nu},R_{N},\beta,g,\varepsilon}$. In particular, we find that $E^{\rm afH}_{N^{\nu},R_{N},\beta,g,\varepsilon} \geq -C_{\varepsilon}$ and deduce that
\begin{equation}\label{cv:ground-states-split-1}
\mathcal{E}^{\rm afH}_{N^{\nu},R_{N},\beta,g}[u] \geq \varepsilon \mathcal{E}^{\rm afP}_{R_{N},\beta,0}[u] - C_{\varepsilon} \geq \frac{\varepsilon C_{\mathrm{kin}}}{2},\quad\forall u \in \mathrm{K}_{+}
\end{equation}
for a large enough $C_{\rm kin}>0$. Plugin \eqref{cv:ground-states-split-1} into \eqref{cv:ground-state-qm-afH}, we deduce that
\begin{align*}
E^{\rm afH}_{N^{\nu},R_{N},\beta,g} \geq E^{\rm afQM}_{N^{\nu},R_{N},\beta,g} \geq{} & \int_{\mathrm{K}_{+}}\frac{\varepsilon C_{\mathrm{Kin}}}{2}{\rm d}\mu_{\Psi_{N}}^{(3)}(u) + \int_{\mathrm{K}_{-}} \mathcal{E}^{\rm afH}_{N^{\nu},R_{N},\beta,g}[u]{\rm d}\mu_{\Psi_{N}}^{(3)}(u) - o(1)_{N\to\infty} \\
\geq{} & \int_{SPL^{2}(\mathbb R^{2})}\min\left\{\frac{\varepsilon C_{\mathrm{Kin}}}{2},\mathcal{E}^{\rm afH}_{N^{\nu},R_{N},\beta,g}[u]\right\}{\rm d}\mu_{\Psi_{N}}^{(3)}(u) - o(1)_{N\to\infty}.
\end{align*}
Passing to the limit $N\to\infty$, using the convergences of energy \eqref{cv:afH-afP-energy}, of ground states \eqref{cv:afH-afP-ground-state} and of measures $\mu_{\Psi_{N}}^{(3)} \to \mu$, and taking finally $C_{\rm kin} \to \infty$, we arrive at
\begin{equation}\label{cv:ground-states-last-step}
E^{\rm afP}_{\beta,g} \geq \lim_{N\to \infty} E^{\rm afQM}_{N^{\nu},R_{N},\beta,g} \geq \int_{SL^{2}(\mathbb R^{2})} E^{\rm afP}_{\beta,g}[u] {\rm d}\mu (u) \geq E^{\rm afP}_{\beta,g}.
\end{equation}
The above shows that $\mu$ must be supported on $\mathcal{M}^{\rm afP}$. This proved \eqref{cv:qm-afP-ground-state}.

\subsection{Collapse of the quantum energy and its ground states in the weak field regime}

We complete the proof of \Cref{thm:many-body} by establishing the collapse phenomenon. The collapse of average-field quantum energy in \eqref{blowup:qm-energy} follows from \eqref{cv:energy-qm-afH}, \eqref{est:de-Finetti-measure} and \Cref{thm:collapse-afH}. It remains to prove the collapse of average-field many-body ground states in \eqref{blowup:qm-ground-state}.

Looking back at \eqref{boundedness:qm-ground-state}, we choose $\varepsilon = \varepsilon_{N}$ equaling to either $C\left(1-\frac{g_{N}}{g_{*}(0)}\right)$ in the sub-critical case, or $C\beta_{N}^{2}$ in the critical and super-critical cases. Here $C>0$ is a universal small constant. We may return to \Cref{thm:many-body} and derive a similar asymptotic formula for $E^{\rm afQM}_{N^{\nu},R,\beta,g,\varepsilon_{N}}$. We then deduce from \eqref{boundedness:qm-ground-state} and arguments in \cite[Proposition 2.6]{LunRou-15} that
$$
\tr\left[h\gamma_{\Psi_{N}}^{{(1)}}\right] \leq C\ell_{N}^{-2}.
$$
Plugin this into \eqref{est:quantum-afH} and \eqref{est:de-Finetti-measure}, we obtain the estimate of energy
\begin{align}\label{est:quantum-afH-collapse}
E^{\rm afH}_{N^{\nu},R_{N},\beta_{N},g_{N}} & \geq \frac{ \langle \Psi_{N}, H^{\rm af}_{N^{\nu},R_{N},\beta_{N},g_{N}} \Psi_{N} \rangle }{N} \\
& \geq \int_{SPL^{2}(\mathbb R^{2})} \mathcal{E}^{\rm afH}_{N^{\nu},R_{N},\beta_{N},g_{N}}[u] {\rm d}\mu_{\Psi_{N}}^{(3)}(u) - C\frac{L^{2+\frac{2}{s}}}{N} - C\frac{N^{\frac{3\nu}{2}} + N^{\eta}}{L^{\frac{1}{2}}} \ell_{N}^{-2} \nonumber
\end{align}
and the estimate of the de Finetti measure
\begin{equation}\label{est:de-Finetti-measure-collapse}
1 \geq \int_{SPL^{2}(\mathbb R^{2})} {\rm d} \mu_{\Psi_{N}}^{(3)}(u) \geq 1 - 3L^{-1}\ell_{N}^{-2}.
\end{equation}
It is straightforward that, if additionally to \eqref{blowup:speed-many-body} we assume further that 
$$
\sigma < \min\left\{\frac{s-\nu(6s+6)}{5s^{2}+12s+8},\frac{s-\eta(4s+4)}{5s^{2}+12s+8}\right\}
$$
then we can choose $L > 0$ optimally in \eqref{est:quantum-afH-collapse} such that
\begin{equation}\label{cv:qm-energy-measure}
\lim_{N \to \infty} \int_{SPL^{2}(\mathbb R^{2})} \frac{\mathcal{E}^{\rm afH}_{N^{\nu},R_{N},\beta_{N},g_{N}}[u]}{E^{\rm afH}_{N^{\nu},R_{N},\beta_{N},g_{N}}}{\rm d}\mu_{\Psi_{N}}^{(3)}(u)  = 1 = \lim_{N \to \infty} \int_{SPL^{2}(\mathbb R^{2})} {\rm d}\mu_{\Psi_{N}}^{(3)}(u).
\end{equation}
Here the second equality in \eqref{cv:qm-energy-measure} follows from \eqref{est:de-Finetti-measure-collapse}, \eqref{cv:speed-many-body} and \eqref{blowup:speed-many-body}. 

Next, by using \eqref{cv:qm-energy-measure} and \eqref{boundedness:qm-ground-state-finite}, we then obtain again \eqref{boundedness:qm-ground-state-Finetti} which in turn implies that
$$
\lim_{N \to \infty} \tr\left|\gamma_{\Psi_{N}}^{(1)} - \int_{SPL^{2}(\mathbb R^{2})} |u\rangle \langle u| {\rm d}\mu_{\Psi_{N}}^{(3)}(u)\right| = 0.
$$
The convergence of the one-particle density matrix then reduces to the following
\begin{equation}\label{blowup:qm-ground-state-completed}
\lim_{N \to \infty} \int_{SPL^{2}(\mathbb R^{2})} \left|\langle u, Q_{N} \rangle\right| {\rm d}\mu_{\Psi_{N}}^{(3)}(u) = 1 ,
\end{equation}
where $Q_{N} = \ell_{N}^{-1}Q_{0}(\ell_{N}^{-1}\cdot)$. To prove this, we come back to \eqref{cv:qm-energy-measure}. We split the integral into two disjoint parts: one is of small de Finetti measure and one contains ``approximate'' average-field-Hartree ground states. For this purpose, we define the nonnegative sequence of numbers
\begin{equation}\label{number:approximate-afP}
\delta_{N} := \int_{SPL^{2}(\mathbb R^{2})} \left(\frac{\mathcal{E}^{\rm afH}_{N^{\nu},R_{N},\beta_{N},g_{N}}[u]}{E^{\rm afH}_{N^{\nu},R_{N},\beta_{N},g_{N}}} - 1 \right) {\rm d}\mu_{\Psi_{N}}^{(3)}(u)
\end{equation}
which converges to $0$ as $N\to\infty$, by \eqref{cv:qm-energy-measure}. Furthermore, we introduce the set
\begin{equation}\label{set:approximate-afP}
\mathcal{T}_{N} := \left\{u \in SPL^{2}(\mathbb R^{2}) : 0\leq \frac{\mathcal{E}^{\rm afH}_{N^{\nu},R_{N},\beta_{N},g_{N}}[u]}{E^{\rm afH}_{N^{\nu},R_{N},\beta_{N},g_{N}}} -1 \leq \sqrt{\delta_{N}}, \int_{\mathbb R^{2}}|u|^{2} = 1\right\}.
\end{equation}
For any sequence $\{u_{N}\}_{N} \subset \mathcal{T}_{N}$, by the same arguments as in the proof of \Cref{thm:collapse-afH}, we can prove that $\ell_{N}u_{N}(\ell_{N}\cdot)$ converges strongly to $Q_{0}$ in $L^{2}(\mathbb R^{2})$. Equivalently, we have
$$
\lim_{N \to \infty} \left|\langle u_{N}, Q_{N} \rangle\right| = 1
$$
which in turn yields that
\begin{equation}\label{cv:approximate-afP}
\lim_{N \to \infty}\inf_{u\in \mathcal{T}_{N}} \left|\langle u, Q_{N} \rangle\right| = 1 .
\end{equation}
On the other hand, by the definition of $\delta_{N}$ in \eqref{number:approximate-afP} and of $\mathcal{T}_{N}$ in \eqref{set:approximate-afP}, we have
$$
\delta_{N} \geq \int_{\mathcal{T}_{N}'} \left( \frac{ \mathcal{E}^{\rm afH}_{N^{\nu},R_{N},\beta_{N},g_{N}}[u] }{E^{\rm afH}_{N^{\nu},R_{N},\beta_{N},g_{N}}} - 1 \right) {\rm d}\mu_{\Psi_{N}}^{(3)}(u) \geq \sqrt{\delta_{N}} \mu_{\Psi_{N}}^{(3)}(\mathcal{T}_{N}').
$$
Since $\delta_{N} \to 0$ as $N \to \infty$, the above yields that $\mu_{\Psi_{N}}^{(3)}(\mathcal{T}_{N}') \to 0$ and hence $\mu_{\Psi_{N}}^{(3)}(\mathcal{T}_{N}) \to 1$, by \eqref{cv:qm-energy-measure}. We then deduce from \eqref{cv:approximate-afP} that
$$
\int_{SPL^{2}(\mathbb R^{2})} \left|\langle u, Q_{N} \rangle\right| {\rm d}\mu_{\Psi_{N}}^{(3)}(u) \geq \int_{\mathcal{T}_{N}} \left|\langle u, Q_{N} \rangle\right| {\rm d}\mu_{\Psi_{N}}^{(3)}(u) \geq \mu_{\Psi_{N}}^{(3)}(\mathcal{T}_{N})\inf_{u \in \mathcal{T}_{N}} \left|\langle u, Q_{N} \rangle\right| \xrightarrow{N\to\infty} 1.
$$
This is the desired convergence \eqref{blowup:qm-ground-state-completed}. The proof of \eqref{blowup:qm-ground-state} is finished.

%

\appendix

\section{Magnetic Gagliardo--Nirenberg optimizers in the weak-field limit}\label{app:gn}

\begin{proposition}
Let $u_{0}$ be the (unique) minimizer for $g_{*}(0)$ and $u_{\beta}$ be a sequence of minimizers of $g_{*}(\beta)$, given by \eqref{eq:critical-value}, in the limit $\beta \to 0$. Then we have, for the whole sequence, $u_{\beta} \to u_{0}$ strongly in $H^{1}(\mathbb R^{2})$, up to translation, dilation and phase transition. Furthermore, $g_{*}(\beta) \to g_{*}(0)$.
\end{proposition}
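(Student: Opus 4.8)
The plan is to show that, after the natural rescaling, $\{u_\beta\}$ is a minimizing sequence for the non-magnetic Gagliardo--Nirenberg inequality \eqref{ineq:gn}, and then to run the same compactness scheme already used for the existence of average-field-Pauli minimizers in \Cref{sec:afP}. First I would fix representatives: by \cite{AtaLunNgu-24} minimizers $u_\beta$ of \eqref{eq:critical-value} exist for small $\beta>0$, and since the Rayleigh quotient in \eqref{eq:critical-value} is invariant under the dilation $u\mapsto\lambda u(\lambda\cdot)$ and under phases, I may take $\|u_\beta\|_{L^2}=1$ and $\|\nabla u_\beta\|_{L^2}=1$. Testing \eqref{eq:critical-value} with a dilate of $Q_0$ (for which the cross term $\int\mathbf A\cdot\mathbf J$ vanishes because $Q_0$ is real, while $\int|\mathbf A[|u|^2]|^2|u|^2$ is scale invariant) gives $g_*(\beta)\le g_*(0)+C\beta^2$, matched by \eqref{est:critical-value}; alternatively one quotes \eqref{ineq:gn-weak-field}. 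In particular $g_*(\beta)\to g_*(0)$. Expanding the magnetic energy as in \eqref{functional:afP-crossterms} and using \eqref{ineq:hardy} together with \eqref{ineq:diamagnetic}, one has
\[
\Bigl|\int_{\mathbb R^2}\bigl|(-{\rm i}\nabla+\beta\mathbf A[|u_\beta|^2])u_\beta\bigr|^2-\|\nabla u_\beta\|_{L^2}^2\Bigr|\le C\beta\|\nabla u_\beta\|_{L^2}^2=C\beta ,
\]
so from $g_*(\beta)=\bigl(\int|(-{\rm i}\nabla+\beta\mathbf A[|u_\beta|^2])u_\beta|^2\bigr)\big/\bigl(\tfrac12\|u_\beta\|_{L^4}^4\bigr)$ I obtain $\tfrac{g_*(0)}{2}\|u_\beta\|_{L^4}^4=1+O(\beta)$ and hence $\|\nabla u_\beta\|_{L^2}^2\big/\bigl(\tfrac12\|u_\beta\|_{L^4}^4\bigr)\to g_*(0)$; i.e.\ $\{u_\beta\}$ is a normalized minimizing sequence for \eqref{ineq:gn}.

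Next I would run the concentration--compactness dichotomy on $\rho_\beta=|u_\beta|^2$. Vanishing is excluded, since it would force $u_\beta\to0$ in $L^4(\mathbb R^2)$, contradicting $\|u_\beta\|_{L^4}^4\to 2/g_*(0)>0$; hence there are translations $x_\beta$ with $u_\beta(\cdot+x_\beta)\rightharpoonup u_0\not\equiv0$ weakly in $H^1$, a.e., and in $L^p_{\rm loc}$ for $2\le p<\infty$. Writing $u_\beta(\cdot+x_\beta)=u_0+v_\beta$, applying the Brezis--Lieb lemma to $\|\cdot\|_{L^2}^2$, $\|\nabla\cdot\|_{L^2}^2$ and $\|\cdot\|_{L^4}^4$, and using \eqref{ineq:gn} for $u_0$ and for $v_\beta$ separately exactly as in \eqref{cv:minimizer-critical-dichotomy-1}--\eqref{cv:minimizer-critical-dichotomy-3} (passing to a subsequence along which the relevant norms of $v_\beta$ converge), I get
\[
1\le \|\nabla u_0\|_{L^2}^2\|u_0\|_{L^2}^2+\lim_{\beta\to0}\|\nabla v_\beta\|_{L^2}^2\|v_\beta\|_{L^2}^2\le\max\bigl\{\|u_0\|_{L^2}^2,\ \lim_{\beta\to0}\|v_\beta\|_{L^2}^2\bigr\},
\]
which together with $\|u_0\|_{L^2}^2+\lim_\beta\|v_\beta\|_{L^2}^2=1$ and $\|u_0\|_{L^2}^2>0$ forces $\|v_\beta\|_{L^2}\to0$. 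Thus $u_\beta(\cdot+x_\beta)\to u_0$ strongly in $L^2$, hence in every $L^p$, $2\le p<\infty$, by interpolation with the $H^1$ bound; in particular $\tfrac{g_*(0)}{2}\|u_0\|_{L^4}^4=1$.

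To finish, weak lower semicontinuity gives $\|\nabla u_0\|_{L^2}^2\le1$, and then \eqref{ineq:gn} yields $1=\tfrac{g_*(0)}{2}\|u_0\|_{L^4}^4\le\|\nabla u_0\|_{L^2}^2\le1$, so equality holds throughout: $\|\nabla u_0\|_{L^2}=1$, $u_0$ is an optimizer of \eqref{ineq:gn}, and $\|\nabla u_\beta\|_{L^2}\to\|\nabla u_0\|_{L^2}$ upgrades the convergence to strong $H^1$. By uniqueness (up to translation, dilation and phase) of optimizers of \eqref{ineq:gn}, $u_0$ equals $Q_0$ after a further translation and phase, the dilation being absorbed in the normalization $\|\nabla u_\beta\|_{L^2}=1$; since the limit is the same along every subsequence, the whole (normalized) sequence converges, which is the assertion, together with $g_*(\beta)\to g_*(0)$ from the first step.

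The only non-routine point is the exclusion of dichotomy, i.e.\ upgrading the weak $H^1$ limit to a strong $L^2$ limit; this is exactly the Brezis--Lieb splitting argument carried out for the existence of average-field-Pauli minimizers in \Cref{sec:afP}, and all magnetic contributions are harmless because \eqref{ineq:hardy} makes them $O(\beta)$ relative to the (normalized) kinetic energy.
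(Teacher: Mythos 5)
Your proof is correct, and it differs from the paper's argument in one genuinely useful way: you strip the magnetic field at the outset. Using \eqref{ineq:hardy} and \eqref{ineq:diamagnetic} to show that the cross and quadratic terms in \eqref{functional:afP-crossterms} are $O(\beta)\|\nabla u_\beta\|_{L^2}^2$, you reduce the problem to the statement that $\{u_\beta\}$ (normalized by $\|u_\beta\|_{L^2}=\|\nabla u_\beta\|_{L^2}=1$) is a minimizing sequence for the classical, non-magnetic Gagliardo--Nirenberg inequality \eqref{ineq:gn}, after which the concentration--compactness/Brezis--Lieb scheme and the uniqueness of $Q_0$ finish the proof. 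The paper instead normalizes $\|u_\beta\|_{L^4}^4=1$ and keeps the magnetic quotient throughout: compactness (case $\|u_0\|_{L^2}=1$) is closed with a Fatou-type weak lower semicontinuity of the magnetic kinetic energy as in \eqref{cv:minimizer-afP-weak-lower-semicontinuous}, and dichotomy is excluded by the diamagnetic inequality plus the splitting \eqref{cv:minimizer-critical-dichotomy-1}--\eqref{cv:minimizer-critical-dichotomy-3}. Your route buys a cleaner limit step (no magnetic lower semicontinuity is needed once the $O(\beta)$ reduction is made, and your $\max$-type inequality in the dichotomy step is a tidy repackaging of the paper's strict-inequality argument); the paper's route keeps the magnetic structure visible, which is closer in spirit to the quantitative estimates of \cite{AtaLunNgu-24}. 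Both deduce $g_*(\beta)\to g_*(0)$ the same way, from \eqref{ineq:gn-weak-field} or equivalently by testing with $Q_0$ and invoking \eqref{est:critical-value}.

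Two cosmetic points. First, $\int_{\mathbb R^2}|\mathbf{A}[|u|^2]|^2|u|^2$ is not scale invariant by itself (it scales like the kinetic and quartic terms under the mass-preserving dilation $u\mapsto\lambda u(\lambda\,\cdot)$); what you need, and what is true, is that the full Rayleigh quotient in \eqref{eq:critical-value} is dilation invariant, or simply that testing with $Q_0$ itself gives $g_*(\beta)\le g_*(0)+\beta^2\,\mathcal{A}_0/\bigl(\tfrac12\|Q_0\|_{L^4}^4\bigr)$. Second, in the final subsequence argument the translations $x_\beta$ and phases may vary along the sequence, so the whole-sequence convergence is convergence modulo these symmetries; this is exactly what the statement asserts, but it is worth saying explicitly.
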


\begin{proof}
The convergence $g_{*}(\beta) \to g_{*}(0)$ follows immediately from the estimates \eqref{ineq:gn-weak-field}. On the other hand, the proof of convergence of minimizers is based on concentration compactness method \cite{Lions-84a,Lions-84b}. By scaling, we assume without loss of generality that $\|u_{\beta}\|_{L^{4}}^{4} = 1$. Then $\{u_{\beta}\}$ is bounded uniformly in $H^{1}(\mathbb R^{2})$. Hence, up to translation, dilation and phase transition, $u_{\beta}$ converges to $u_{0}$ weakly in $H^{1}(\mathbb R^{2})$ and almost everywhere in $\mathbb R^{2}$. If $\|u_{0}\|_{L^{2}}^{2} = 1$ then, by Brezis--Lieb lemma \cite{BreLie-83}, $u_{\beta}$ converges to $u_{0}$ strongly in $L^{2}(\mathbb R^{2})$. In fact, this strong convergence holds true in $L^{p}(\mathbb R^{2})$ for every $2\leq p < \infty$. This yields that, by Fatou's lemma (see \eqref{cv:minimizer-afP-weak-lower-semicontinuous}),
$$
g_{*}(0) = \lim_{\beta \to 0}g_{*}(\beta) = \lim_{\beta \to 0} \frac{\int_{\mathbb R^{2}} \left|(-{\rm i}\nabla + \beta\mathbf{A}[|u_{\beta}|^{2}]) u_{\beta}\right|^{2}}{\frac{1}{2}\int_{\mathbb R^{2}} |u_{\beta}|^4} \geq \frac{\int_{\mathbb R^{2}} |\nabla u_{0}|^{2}}{\frac{1}{2}\int_{\mathbb R^{2}} |u_{0}|^4}.
$$
This proves that $u_{0}$ is the minimizer for $g_{*}(0)$, by \eqref{ineq:gn}. Since this minimizer $u_{0}$ is unique (up to translation and dilation), the convergence holds for the whole sequence.

It remains to eliminate the case where $0 \leq \|u_{0}\|_{L^{2}}^{2} < 1$. Indeed, if $\|u_{0}\|_{L^{2}}^{2} = 0$ then we must have that $u_{\beta} \to 0$ strongly in $L^{p}(\mathbb R^{2})$ for every $2 < p <\infty$ (see \cite{Lions-84b}). However, this contradicts our assumption that $\|u_{\beta}\|_{L^{4}}^{4} = 1$. Finally, we now assume that $0 < \|u_{0}\|_{L^{2}}^{2} < 1$. Then it follows from \eqref{ineq:hardy} and Brezis--Lieb lemma \cite{BreLie-83} that
\begin{align*}
g_{*}(0) = \lim_{\beta \to 0}g_{*}(\beta) & = \lim_{\beta \to 0} \frac{\int_{\mathbb R^{2}} \left|(-{\rm i}\nabla + \beta\mathbf{A}[|u_{\beta}|^{2}]) u_{\beta}\right|^{2}}{\frac{1}{2}\int_{\mathbb R^{2}} |u_{\beta}|^4} \\
& \geq \lim_{\beta \to 0} \frac{\int_{\mathbb R^{2}} |\nabla u_{\beta}|^{2}}{\frac{1}{2}\int_{\mathbb R^{2}} |u_{\beta}|^4} \\
& = \liminf_{\beta \to 0} \frac{\int_{\mathbb R^{2}} |\nabla u_{0}|^{2} + \int_{\mathbb R^{2}} |\nabla (u_{\beta}-u_{0})|^{2}}{\frac{1}{2}\int_{\mathbb R^{2}} |u_{0}|^4 + |u_{\beta}-u_{0}|^4} \\
& \geq \min \left\{\frac{\int_{\mathbb R^{2}} |\nabla u_{0}|^{2}}{\int_{\mathbb R^{2}} |u_{0}|^4}, \liminf_{\beta \to 0} \frac{\int_{\mathbb R^{2}} |\nabla (u_{\beta}-u_{0})|^{2}}{\frac{1}{2}\int_{\mathbb R^{2}} |u_{\beta}-u_{0}|^4}\right\}.
\end{align*}
However, this is impossible. Indeed, by \eqref{ineq:gn} and \eqref{ineq:diamagnetic}, we have
$$
\frac{\int_{\mathbb R^{2}} |\nabla u_{0}|^{2}}{\frac{1}{2}\int_{\mathbb R^{2}} |u_{0}|^4} \geq \frac{g_{*}(0)}{\int_{\mathbb R^{2}} |u_{0}|^{2}} > g_{*}(0).
$$
Furthermore, by again \eqref{ineq:gn} and \eqref{ineq:diamagnetic},
$$
\liminf_{\beta \to 0} \frac{\int_{\mathbb R^{2}} |\nabla (u_{\beta}-u_{0})|^{2}}{\frac{1}{2}\int_{\mathbb R^{2}} |u_{\beta}-u_{0}|^4} \geq \liminf_{\beta \to 0} \frac{g_{*}(0)}{\int_{\mathbb R^{2}} |u_{\beta}-u_{0}|^{2}} = \frac{g_{*}(0)}{1-\int_{\mathbb R^{2}} |u_{0}|^{2}} > g_{*}(0).
$$
Collecting all the above estimates, we obtain a contradiction. The proof is completed.
\end{proof}


\end{document}